\documentclass[a4paper,12pt,twoside]{article}

\usepackage{amsfonts}
\usepackage{amsmath}
\usepackage{amsopn}
\usepackage{amsthm}
\usepackage{amssymb}
\usepackage{graphicx}
\usepackage{rotating}
\usepackage[]{natbib}
\usepackage[english]{babel}
\usepackage[utf8]{inputenc}
\usepackage[colorlinks=true,citecolor=blue,linkcolor=blue]{hyperref}
\usepackage{xspace}
\usepackage{float}
\usepackage{setspace}
\usepackage{dsfont}
\usepackage{booktabs}
\usepackage[T1]{fontenc}
\usepackage{anysize}
\usepackage{wrapfig}
\usepackage[labelfont=bf,labelsep=period]{caption}
\floatstyle{plaintop}
\restylefloat{table}
\marginsize{1.87cm}{1.87cm}{1.87cm}{1.87cm} 
\usepackage{fancyhdr}
\pagestyle{fancy}
\fancyhf{} 
\fancyhead[LE]{\thepage \hspace{1cm} P. Gagnon}
\fancyhead[RO]{Informed reversible jump algorithms \hspace{1cm} \thepage}

\usepackage{algorithm}
\usepackage[noend]{algpseudocode}

\RequirePackage{txfonts}
\usepackage{times}

\newtheorem{Proposition}{Proposition}

\newtheorem{Theorem}{Theorem}

\newtheorem{Assumption}{Assumption}

\allowdisplaybreaks

\begin{document}

\def\figureautorefname{Figure}
\def\tableautorefname{Table}
\def\algorithmautorefname{Algorithm}
\def\sectionautorefname{Section}
\def\subsectionautorefname{Section}
\def\subsubsectionautorefname{Section}
\def\Propositionautorefname{Proposition}
\def\Theoremautorefname{Theorem}
\def\Lemmaautorefname{Lemma}
\def\Assumptionautorefname{Assumption}
\renewcommand*\footnoterule{}

\title{Informed reversible jump algorithms}

\author{Philippe Gagnon $^{1}$}

\maketitle

\thispagestyle{empty}

\noindent $^{1}$Department of Mathematics and Statistics, Universit\'{e} de Montr\'{e}al, Canada.

\begin{abstract}
  Incorporating information about the target distribution in proposal mechanisms
generally produces efficient Markov chain Monte Carlo algorithms (or at least, algorithms that are more efficient than uninformed counterparts). 
 For instance, it
has proved successful to incorporate gradient information in fixed-dimensio\-nal
algorithms, as seen with algorithms such as Hamiltonian Monte Carlo. 
In trans-dimen\-sional algorithms,
\cite{green2003trans} recommended to sample the parameter proposals during
model switches from normal distributions with informative means and covariance
matrices. These proposal distributions can be viewed as asymptotic approximations
to the parameter distributions, where the limit is with regard to the sample
size. Models are typically proposed using uninformed uniform distributions.
In this paper, we build on the approach of \cite{zanella2019informed} for
discrete spaces to incorporate information about neighbouring models. We
rely on approximations to posterior model probabilities that are asymptotically
exact. We prove that, in some scenarios, the samplers combining this approach
with that of \cite{green2003trans} behave like ideal ones that use the
exact model probabilities and sample from the correct parameter distributions,
in the large-sample regime. We show that the implementation of the proposed
samplers is straightforward in some cases. The methodology is applied to
a real-data example. The code is available online.\footnote{See ancillary files on \href{https://arxiv.org/abs/1911.02089}{arXiv:1911.02089}.}
\end{abstract}

\noindent Keywords: Bayesian statistics; large-sample asymptotics; Markov chain Monte Carlo methods; model averaging; model selection; trans-dimensional samplers; variable selection; weak convergence.

\section{Introduction}\label{sec_intro}

\subsection{Reversible jump algorithms}\label{sec_RJ}

Reversible jump (RJ, \cite{green1995reversible}) algorithms are Markov
chain Monte Carlo (MCMC) methods that one uses to sample from a target
distribution $\pi (\, \cdot \,, \cdot \mid \mathbf{D}_{n})$ defined on
a union of sets
$\bigcup _{k \in \mathcal{K}} \{k\} \times \mathbb{R} ^{d_{k}}$,
$\mathcal{K}$ being a countable set and $d_{k}$ positive integers. This
distribution corresponds in Bayesian statistics to a joint posterior distribution
of a model indicator $K\in \mathcal{K}$ and the parameters of Model
$K$, denoted by $\mathbf{X}_{K}\in \mathbb{R} ^{d_{K}}$, given
$\mathbf{D}_{n}$, a data sample of size $n$. Such a posterior distribution
allows to jointly infer about $(K, \mathbf{X}_{K})$, or in other words,
to simultaneously achieve model selection/averaging
\citep{hoeting1999bayesian} and parameter estimation. In the following,
we assume for simplicity that the parameters of all models are continuous
random variables. Again for simplicity, we will abuse notation by also
using $\pi (\, \cdot \,, \cdot \mid \mathbf{D}_{n})$ to denote the joint
posterior density with respect to a product of the counting and Lebesgue
measures. We will use $\pi (\, \cdot \, \mid \mathbf{D}_{n})$ and
$\pi (\, \cdot \, \mid k, \mathbf{D}_{n})$ to denote the marginal posterior
probability mass function (PMF) of $K$ and conditional posterior distribution/density
of $\mathbf{X}_{K}$ given that $K = k$, respectively. Note that in this
paper, we assume that $\mathcal{K}$ is defined such that
$\pi (k \mid \mathbf{D}_{n}) > 0$ for all $k$.

The variable $K \in \mathcal{K}$ can take different forms in practice.
In mixture modelling \citep{richardson1997bayesian}, $K$ is the number
of components and $\mathcal{K}$ is a subset of the positive integers and
defines a sequence of nested models, i.e.\ Model 1 is nested in Model 2
which is nested in Model 3, and so on. Often there is no such ``ordering''
between the models. This is for instance the case in variable selection.
In this framework, $K$ is a vector of $0$'s and $1$'s indicating which
covariates are included in a model, and for instance, Model
$k_{0} = (1, 1, 1, 0, 0, 1, 0, \ldots , 0)$ is the model with covariates
1, 2, 3 and 6 (as in Figure~\ref{fig_1} below); $\mathcal{K}$ is thus a collection
of such vectors. A RJ algorithm explores both the model space and the parameter
spaces. We focus in this paper on the case where $\mathcal{K}$
\emph{does not} define a sequence of nested models. When
$\mathcal{K}$ \emph{defines} a sequence of nested models, $K$ can be recoded
as an ordinal discrete variable as above and this case has been recently
studied by \cite{gagnon2019NRJ} who proposed efficient non-reversible trans-dimensional
samplers.

The proposal mechanism in a RJ algorithm can be outlined as follows: given
a current state of the Markov chain $(k, \mathbf{x}_{k})$, the algorithm
first proposes a model, say Model $k'$, by using a PMF
$g(k, \cdot \,)$ and then parameter values for this model by applying a
diffeomorphism $\mathcal{D}_{k\mapsto k'}$ to $\mathbf{x}_{k}$ and auxiliary
variables $\mathbf{u}_{k\mapsto k'} \sim q_{k\mapsto k'}$, yielding the
parameter proposal $\mathbf{y}_{k'}$ and another vector of auxiliary variables
$\mathbf{u}_{k' \mapsto k}$. The proposal is accepted, meaning that the
next state of the Markov chain is $(k', \mathbf{y}_{k'})$, with probability
(assuming that the current state has positive density under the target):
\begin{align}
\label{eqn_acc_prob_RJ}
&\alpha _{\text{RJ}}((k,\mathbf{x}_{k}),(k',\mathbf{y}_{k'})) := 1 \wedge
\frac{\pi (k', \mathbf{y}_{k'} \mid \mathbf{D}_{n}) \, g(k',k) \, q_{k'\mapsto k}(\mathbf{u}_{k'\mapsto k})}{\pi (k, \mathbf{x}_{k} \mid \mathbf{D}_{n}) \, g(k,k') \, q_{k\mapsto k'}(\mathbf{u}_{k\mapsto k'}) \, |J_{\mathcal{D}_{k\mapsto k'}}(\mathbf{x}_{k}, \mathbf{u}_{k\mapsto k'})|^{-1}},
\end{align}
where $x \wedge y := \min (x, y)$ and
$|J_{\mathcal{D}_{k\mapsto k'}}(\mathbf{x}_{k}, \mathbf{u}_{k\mapsto k'})|$
is the absolute value of the determinant of the Jacobian matrix of the
function $\mathcal{D}_{k\mapsto k'}$. If the proposal is rejected, the
chain remains at the same state $(k,\mathbf{x}_{k})$. Recall that a diffeomorphism
is a differentiable map having a differentiable inverse and note that we
abused notation by using $q_{k\mapsto k'}$ to denote both the distribution
and the probability density function (PDF) of
$\mathbf{u}_{k\mapsto k'}$. The notation $k \mapsto k'$ in subscript is
used to highlight a dependence on the model transition that is proposed,
which is from Model $k$ to Model $k'$. In the particular case where
$k' = k$, we say that a \textit{parameter update} is proposed; otherwise,
we say that a \textit{model switch} is proposed. Model $k'$ is reachable
from Model $k$ if it belongs to the support of $g(k, \cdot \,)$ which is
considered in this paper to be the neighbourhood of Model $k$, denoted
by $\mathbf{N}(k)$.

In trans-dimensional samplers, the neighbourhoods are usually formed of
the ``closest'' models. The notion of ``proximity'' is often natural, making
these closest models straightforward to identify. For instance, in Section~\ref{sec_application} we apply the proposed methodology in a variable-selection
example and the closest models are those with an additional and one less
covariates, to which the current model is added. In this paper, we consider
that Model $k$ belongs to the support of $g(k, \cdot \,)$ for all
$k$; this will be seen to be useful when the posterior model PMF concentrates.

Algorithm~\ref{algo_RJ} presents a general RJ.

\begin{algorithm}[ht]
\caption{RJ}
\label{algo_RJ}
\begin{enumerate}
 \itemsep 0mm

  \item Sample $k' \sim g(k, \cdot \,)$, $\mathbf{u}_{k\mapsto k'} \sim q_{k\mapsto k'}$ and $u \sim \mathcal{U}[0, 1]$.

  \item Apply $\mathcal{D}_{k\mapsto k'}(\mathbf{x}_{k}, \mathbf{u}_{k\mapsto k'}) = (\mathbf{y}_{k'}, \mathbf{u}_{k' \mapsto k})$.

  \item If $u \leq \alpha _{\text{RJ}}((k,\mathbf{x}_{k}),(k',\mathbf{y}_{k'}))$, set the next state of the chain to $(k', \mathbf{y}_{k'})$. Otherwise, set it to $(k,\mathbf{x}_{k})$.

  \item Go to Step 1.
 \end{enumerate}
\end{algorithm}

Looping over the steps described in Algorithm~\ref{algo_RJ} produces Markov chains
that are reversible with respect to the target distribution
$\pi (\, \cdot \,, \cdot \mid \mathbf{D}_{n})$. If the chains are in addition
irreducible and aperiodic, then they are ergodic
\citep{tierney1994markov}, which guarantees, among others, that the law
of large numbers holds, implying that ergodic averages are approximations
to expectations under $\pi (\, \cdot \,, \cdot \mid \mathbf{D}_{n})$.

\subsection{Problems with RJ}\label{sec_problem}

Algorithm~\ref{algo_RJ} takes as inputs: an initial state
$(k,\mathbf{x}_{k})$, a total number of iterations, a PMF
$g(k, \cdot \,)$ for each Model $k$, and functions $q_{k\mapsto k'}$ and
$\mathcal{D}_{k\mapsto k'}$ for each pair $(k, k')$ with Model $k'$ being
reachable from Model $k$. Implementing RJ is well known for being a difficult
task considering the large number of functions that need to be specified
and the often lack of intuition about how one should achieve their specification,
the latter being especially true for functions involved in model switches.
Significant amount of work has been carried out to address the specification
of the functions $q_{k\mapsto k'}$ and $\mathcal{D}_{k\mapsto k'}$ involved
in model switches in a principled and informed way; see, e.g.,
\cite{green2003trans} and \cite{brooks2003efficient}. The approaches of
these authors are arguably the most popular. Their objective is the following:
given $\mathbf{x}_{k} \sim \pi (\, \cdot \mid k,\mathbf{D}_{n})$, we want
to identify $q_{k\mapsto k'}, q_{k'\mapsto k}$ and
$\mathcal{D}_{k\mapsto k'}$ such that applying the transformation
$\mathcal{D}_{k\mapsto k'}$ to
$(\mathbf{x}_{k}, \mathbf{u}_{k\mapsto k'}) \sim \pi (\, \cdot \mid k,
\mathbf{D}_{n})\otimes q_{k\mapsto k'}$ leads to
$(\mathbf{y}_{k'},\mathbf{u}_{k'\mapsto k}) \sim \pi (\, \cdot \mid k',
\mathbf{D}_{n})\otimes q_{k'\mapsto k}$ (at least approximatively). They
essentially look for a way to sample from the conditional distributions
$\pi (\, \cdot \mid k',\mathbf{D}_{n})$, in this constrained framework.
This in turn aims at increasing the acceptance probability
$\alpha _{\text{RJ}}$ defined in \eqref{eqn_acc_prob_RJ} towards
%
\begin{align}
\label{acc_prob_marginal}
\alpha _{\text{MH}}(k, k'):=1\wedge
\frac{\pi (k'\mid \mathbf{D}_{n}) \, g(k',k)}{\pi (k \mid \mathbf{D}_{n}) \, g(k,k')},
\end{align}
which corresponds to the acceptance probability in a Metropolis--Hastings
(MH, \cite{metropolis1953equation} and \cite{hastings1970monte}) algorithm
targeting the PMF $\pi ( \, \cdot \mid \mathbf{D}_{n})$. The approach of
\cite{green2003trans}, for instance, proceeds as if the conditional distributions
$\pi (\, \cdot \mid k',\mathbf{D}_{n})$ were normal.

Notwithstanding the merit of this objective, it is to be noticed that,
even when it is achieved, ``half'' of the work for maximizing
$\alpha _{\text{RJ}}$ is done because, as shown in Figure~\ref{fig_1}, poor
models may be proposed more often than they should be if $g$ is not well
designed, leading to smaller acceptance rates. Note that when considering
candidate proposal distributions $g(k, \cdot \,)$ having all the same support
$\mathbf{N}(k)$, choosing the one that maximizes the acceptance probability
$\alpha _{\text{RJ}}$ represents a first step towards proving that it is
the best (or at least one of the best) given that they all allow to reach
the same models; this choice is indeed expected to improve mixing.

\begin{figure}[ht]
  $\begin{array}{ccc}
  \hspace{-3mm}\includegraphics[width = 0.345\textwidth, trim = {0 0 20mm 0}, clip]{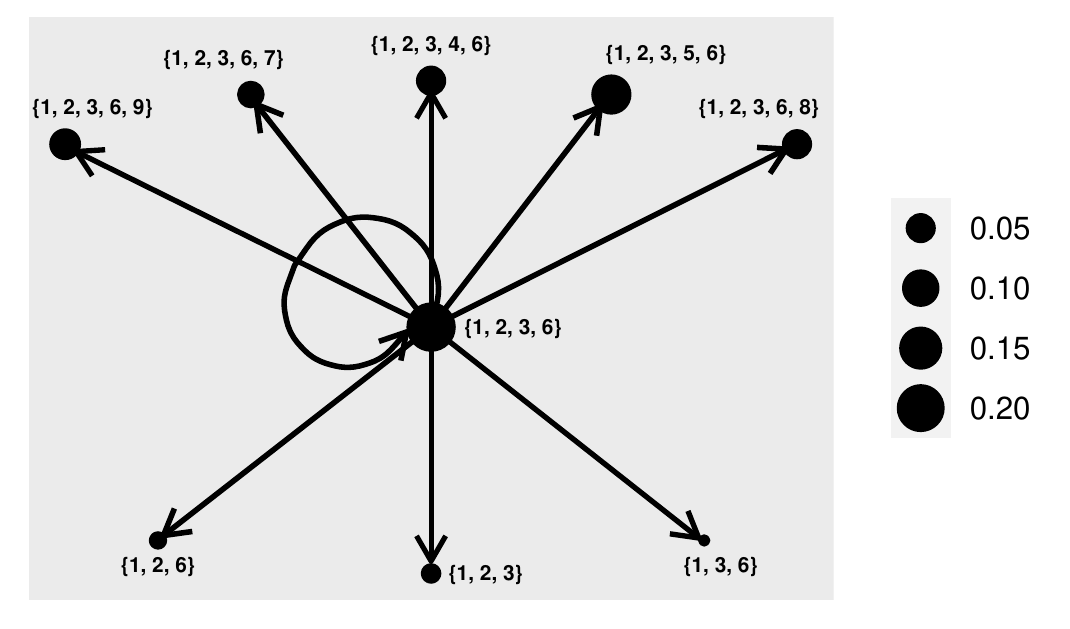} & \hspace{-6mm}\includegraphics[width = 0.345\textwidth, trim = {0 0 20mm 0}, clip]{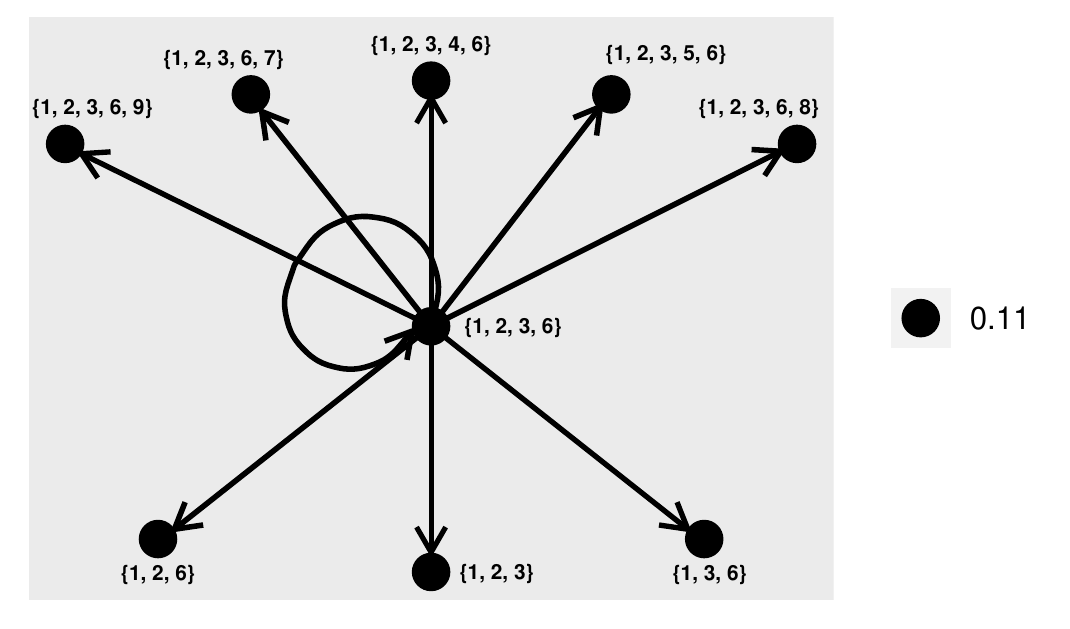} & \hspace{-6mm}\includegraphics[width = 0.345\textwidth, trim = {0 0 20mm 0}, clip]{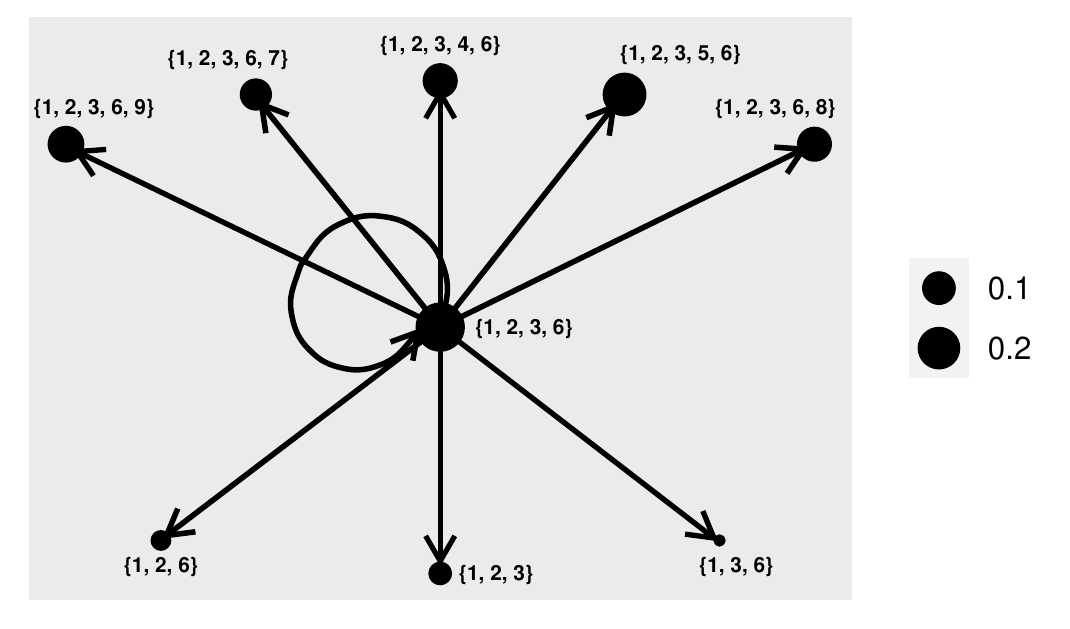} \cr
  \hspace{-4mm}\textbf{\scriptsize (a) Posterior probabilities} & \hspace{-7mm} \textbf{\scriptsize (b) Uninformed sampler} & \hspace{-7mm} \textbf{\scriptsize (c) Informed sampler}
  \end{array}$
\caption{(a) posterior probabilities (represented by the size
of the dots) of models forming the neighbourhood of the model with covariates
$\{1, 2, 3, 6\}$, in a linear-regression analysis of the prostate-cancer
data of \cite{stamey1989prostate}; the models with an additional covariate
can be found in the upper part, whereas those with one less variable are
in the lower part; (b) probabilities of proposing the models forming the
neighbourhood (represented by the size of the dots) when the current model
is that with covariates $\{1, 2, 3, 6\}$ and $g$ is the uniform over the
neighbourhood; (c) probabilities of proposing the models forming the neighbourhood
(represented by the size of the dots) when the current model is that with
covariates $\{1, 2, 3, 6\}$ and $g$ is instead an informed version (as
defined in \protect Section~\ref{sec_objectives})}
\label{fig_1}
\end{figure}

\subsection{Objectives}\label{sec_objectives}

The specification of $g$ has been overlooked; this PMF is indeed typically
set to a uniform distribution as in Figure~\ref{fig_1}~(b). The first objective
of this paper is to propose methodology to incorporate information about
neighbouring models in its design to make fully-informed RJ available.
To achieve this we rely on a generic technique recently introduced by
\cite{zanella2019informed} that is used to construct informed samplers
for discrete state-spaces. Let us assume for a moment that we have access
to the unnormalized posterior model probabilities and that we use a MH
algorithm to sample from $\pi (\, \cdot \mid \mathbf{D}_{n})$; the technique
consists in constructing what the author calls
\textit{locally-balanced} proposal distributions:
%
\begin{align}
\label{eqn_g}
g(k, k') \propto h\left (
\frac{\pi (k' \mid \mathbf{D}_{n})}{\pi (k \mid \mathbf{D}_{n})}
\right ) \, \mathds{1}(k' \in \mathbf{N}(k)),
\end{align}
where $h$ is a continuous positive function such that $h(0) = 0$ and
$h(x) \big / h(1/x) = x$ for all positive $x$ (the square root satisfies
this condition for instance), and $\mathds{1}$ is the indicator function.
Such a function $h$ leads to an acceptance probability in the MH sampler
given by $\alpha _{\text{MH}}(k, k') = 1 \wedge c(k)/c(k')$, where
$c(k)$ is the normalizing constant of $g(k, \cdot \,)$. A motivation for
using this technique is that, in the limit, when the ambient space becomes
larger and larger (but the neighbourhoods have a fixed size), there is
no need for an accept-reject step anymore; the proposal distributions leave
the distribution $\pi (\, \cdot \mid \mathbf{D}_{n})$ invariant.
\cite{zanella2019informed} indeed proves that
$c(k)/c(k') \longrightarrow 1$ for all possible pairs $(k, k')$ under some
conditions. These properties suggest that locally-balanced samplers are
efficient, at least in high dimensions. The author in fact empirically
shows that they perform better than alternative solutions to sample from
PMFs in some practical examples, with a highly marked difference in the
examples with high-dimensional spaces.

The first obstacle to achieving our first objective is that we typically
do not have direct access to the model probabilities, because they involve
integrals over the parameter spaces. Drawing inspiration from the approach of \cite{green2003trans}
that can be viewed as using asymptotic approximations
to $\pi (\, \cdot \mid k',\mathbf{D}_{n})$ to design RJ, where the limit
is with regard to the sample size, we propose to use approximations to
the unnormalized version of the model probabilities
$\pi (k \mid \mathbf{D}_{n})$ whose accuracy increases with $n$. We prove
that, in some scenarios, RJ using both approximations behave asymptotically
as RJ that set
$g(k, k') \propto h\left (
\frac{\pi (k' \mid \mathbf{D}_{n})}{\pi (k \mid \mathbf{D}_{n})}
\right ) \, \mathds{1}(k' \in \mathbf{N}(k))$ and that sample parameters
from the correct conditional distributions
$\pi (\, \cdot \mid k',\mathbf{D}_{n})$. These ideal RJ have acceptance
probabilities equal to $\alpha _{\text{MH}}$. All this suggests that the
resulting RJ with asymptotically locally-balanced proposal distributions
are efficient, at least when $n$ is large enough. We show that it is the
case even in a moderate-size real-data example. In this example, the PMF
$\pi ( \, \cdot \mid \mathbf{D}_{n})$ concentrates on several models. We
also provide evidences that the proposed RJ are efficient when $n$ is large
and the PMF $\pi ( \, \cdot \mid \mathbf{D}_{n})$ is highly concentrated
so that the mass is non-negligible only for a handful of models.

The approximations on which the proposed methodology is based are more
accurate when all parameters of all models take values on the real line.
We thus recommend to apply a transformation to parameters for which this
is not the case. Such a practice is often beneficial for MCMC methods.
It is in fact required when using Hamiltonian Monte Carlo (HMC, see, e.g.,
\cite{neal2011mcmc}), which is employed to perform parameter updates within
RJ in the real-data application of Section~\ref{sec_application}.

The second objective of this paper is to make clear how each function required
for implementation should be specified, at least in some cases, allowing
a fully-automated procedure. These cases are those where each of the log
conditional densities $\log \pi (\, \cdot \mid k, \mathbf{D}_{n})$ has
a well-defined mode and second derivatives that exist and that are continuous.
The procedure can be executed even if the model space is large or infinite,
as long as the posterior model PMF concentrates on a reasonable number
of models, which is expected in practice.

\subsection{Organization of the paper}

In Section~\ref{sec_RJ_and_input}, we address the specification of the inputs
required for the implementation of the proposed samplers; in particular,
we discuss the design of the PMFs $g(k, \cdot \,)$, so that both objectives
of the paper are achieved. We next present in Section~\ref{sec_asymptotics} a theoretical result about the asymptotic behaviour
of the proposed RJ, and an analysis of the limiting RJ. In Section~\ref{sec_application}, the methodology is evaluated in a real-data
variable-selection example. The paper finishes in Section~\ref{sec_discussion} with retrospective comments and possible directions
for future research.

The approximations on which the design of the functions
$g(k, \cdot \,)$, $q_{k\mapsto k'}$ and $\mathcal{D}_{k\mapsto k'}$ rely
can be inaccurate when the sample size is not sufficiently large. There
exist methods that allow to build upon the approximations used to design
$q_{k\mapsto k'}$ and $\mathcal{D}_{k\mapsto k'}$ to improve the parameter-proposal
mechanism. The methods of \cite{karagiannis2013annealed} and
\cite{andrieu2018utility} are examples of such methods. An overview of
these methods is provided in Section~\ref{sec_RJ_and_input}, while the details
are presented in Appendix~\ref{sec_improve_approx}. In Section~\ref{sec_RJ_and_input}, it is explained that these methods are also
useful for improving the model-proposal mechanism. In fact, as the precision
parameters of these methods increase without bounds, the resulting samplers
converge towards ideal ones that have access to the unnormalized model
probabilities to design $g$ and that sample parameters from the correct
conditional distributions $\pi (\, \cdot \mid k',\mathbf{D}_{n})$,
\emph{for fixed $n$} (see Appendix~\ref{sec_improve_approx} for the details).
The methods of \cite{karagiannis2013annealed} and
\cite{andrieu2018utility} are sophisticated and come at a computational
cost (especially that of \cite{karagiannis2013annealed}), which may not
be worth it when $n$ is large enough for the problem at hand. This is the
case in our variable-selection example, highlighting that the simple RJ
proposed in this paper that use simple approximations are efficient in
certain practical situations. The methods of
\cite{karagiannis2013annealed} and \cite{andrieu2018utility} are nevertheless
useful in the variable-selection example as they allow to show which of
the approximations used in the design of $g(k, \cdot \,)$,
$q_{k\mapsto k'}$ and $\mathcal{D}_{k\mapsto k'}$ explain the gap between
the proposed RJ and the ideal ones. Their computational cost is offset
by a significant enough improvement in, for instance, the change-point
problem presented in \cite{green1995reversible} (as shown in
\cite{karagiannis2013annealed} and \cite{andrieu2018utility}).

Some details of the variable-selection example are presented in Appendix~\ref{sec_supp_variable_selection}. All proofs of theoretical results
are provided in Appendix~\ref{sec_proofs}.

\section{Input design and proposed RJ}\label{sec_RJ_and_input}

We start in Section~\ref{sec_model_proposal_design} with the proposed design
of $g$. We next turn in Section~\ref{sec_parameter_proposal_design} to the
proposed design of the functions $q_{k\mapsto k'}$ and
$\mathcal{D}_{k\mapsto k'}$, which will be seen to represent a simplified
version of that in \cite{green2003trans}. We present in Section~\ref{sec_resulting_RJ} the resulting RJ. In Section~\ref{sec_overview_methods_improve}, we present an overview of the methods
of \cite{karagiannis2013annealed} and \cite{andrieu2018utility} to improve
upon the approximations used in the design of the proposal mechanism when
these approximations are not accurate enough.

\subsection{Specification of the model proposal distributions}\label{sec_model_proposal_design}

The design of $g$ starts with the definition of neighbourhoods around all
models which specify the support of $g(k,\cdot \,)$ for all $k$. As mentioned
in Section~\ref{sec_RJ}, it is typically possible to define these neighbourhoods
in a natural way. It will thus be considered that the neighbourhoods are
given and that we can build on these to address the specification of
$g$.

In practice, $g(k,\cdot \,)$ is commonly set to the uniform distribution
over $\mathbf{N}(k)$: $g(k, k') = 1/|\mathbf{N}(k)|$ for all
$k'\in \mathbf{N}(k)$, where $|\mathbf{N}(k)|$ is the cardinality of
$\mathbf{N}(k)$. In the simple situation where
$\mathbf{u}_{k \mapsto k'} = \mathbf{y}_{k'} \sim \pi (\, \cdot \mid k',
\mathbf{D}_{n})$ and $|\mathbf{N}(k)| = |\mathbf{N}(k')|$ for all possible
pairs $(k, k')$ (the latter is true in variable selection with neighbourhoods
defined as in Section~\ref{sec_RJ}), the RJ acceptance probability reduces
to
\begin{equation*}
\alpha _{\text{RJ}}((k,\mathbf{x}_{k}),(k',\mathbf{y}_{k'})) = 1
\wedge
\frac{\pi (k'\mid \mathbf{D}_{n})}{\pi (k \mid \mathbf{D}_{n})}.
\end{equation*}
It is thus easily seen that a chain may get stuck for several iterations
when there are a lot of poor models with
$\pi (k'\mid \mathbf{D}_{n}) \ll \pi (k \mid \mathbf{D}_{n})$ in the neighbourhood;
this is especially true in high dimensions because the number of poor models
may be very high. Our goal is to include local neighbourhood information
in the PMFs $g(k, \cdot \,)$ to skew the latter towards high probability
models.

We propose here to follow the strategy of \cite{zanella2019informed}, implying
that ideally we would set
$g(k, k') \propto h\left (\pi (k' \mid \mathbf{D}_{n}) \big / \pi (k
\mid \mathbf{D}_{n})\right )$ for all $k'\in \mathbf{N}(k)$ (recall \eqref{eqn_g}), where for all $k$,
\begin{equation*}
\pi (k \mid \mathbf{D}_{n}) = \int \pi (k, \mathbf{x}_{k} \mid
\mathbf{D}_{n}) \, \mathrm{d}\mathbf{x}_{k}.
\end{equation*}
These integrals are typically intractable. We propose to approximate them
and for this we assume that each log conditional density
$\log \pi (\, \cdot \mid k, \mathbf{D}_{n})$ has a well-defined mode and
that its second derivatives exist and are continuous. In fact, the integrals
that we propose to approximate are
\begin{equation*}
\int \pi _{\text{un.}}(\mathbf{x}_{k} \mid k, \mathbf{D}_{n}) \,
\mathrm{d}\mathbf{x}_{k},
\end{equation*}
where
$\pi _{\text{un.}}(\, \cdot \mid k, \mathbf{D}_{n}) := \mathcal{L}(\,
\cdot \mid k, \mathbf{D}_{n}) \, \pi (\, \cdot \mid k)$ is the unnormalized
posterior density under Model $k$ given by the product of the likelihood
function and prior parameter density under Model $k$, and we use that
\begin{equation*}
\pi (k \mid \mathbf{D}_{n}) \propto \pi (k) \int \pi _{\text{un.}}(
\mathbf{x}_{k} \mid k, \mathbf{D}_{n}) \, \mathrm{d}\mathbf{x}_{k},
\end{equation*}
$\pi (k)$ being the prior probability assigned to Model $k$. The approximation
that we propose to apply is the Laplace approximation, meaning that we
write
$\pi _{\text{un.}}(\, \cdot \mid k, \mathbf{D}_{n}) = \exp (\log \pi _{
\text{un.}}(\, \cdot \mid k, \mathbf{D}_{n}))$ and
$\log \pi _{\text{un.}}(\mathbf{x}_{k} \mid k, \mathbf{D}_{n})$ is developed
using a Taylor series expansion (see, e.g.,
\cite{davison1986approximate}). This yields:
%
\begin{align}
\label{eqn_pi_hat}
\hat{\pi }(k \mid \mathbf{D}_{n}) \propto \pi (k) \, (2\pi )^{d_{k}/2}
\, \pi _{\text{un.}}(\hat{\mathbf{x}}_{k} \mid k, \mathbf{D}_{n}) \, |
\hat{\mathcal{I}}_{k}|^{-1/2} , \quad \text{for all $k$,}
\end{align}
where $\hat{\mathbf{x}}_{k}$ is the maximizer of
$\pi _{\text{un.}}(\, \cdot \mid k, \mathbf{D}_{n})$ (and
$\pi (\, \cdot \mid k, \mathbf{D}_{n})$), $\hat{\mathcal{I}}_{k}$ is minus
the matrix of second derivatives of
$\log \pi _{\text{un.}}(\, \cdot \mid k, \mathbf{D}_{n})$ (and
$\log \pi (\, \cdot \mid k, \mathbf{D}_{n})$). We use
$\hat{\mathcal{I}}_{k}$ to denote the matrix of minus the second derivatives
because when the prior density of the parameters is uniform, this matrix
corresponds to the observed information matrix. It is evaluated at
$\hat{\mathbf{x}}_{k}$, but we make this implicit to simplify because it
will always be the case in this paper. Note that other approximations may
be employed. In our framework, we are interested by approximations that
are easy to compute (at least in some cases) and asymptotically exact as
$n \longrightarrow \infty $. This is the case for
$\hat{\pi }(k \mid \mathbf{D}_{n})$ in \eqref{eqn_pi_hat}; it is more precisely
a consistent estimator of $\pi (k \mid \mathbf{D}_{n})$ under the assumptions
mentioned above.

We thus propose to define the model proposal distributions as follows:
\begin{equation*}
g(k, k') \propto h\left (\hat{\pi }(k' \mid \mathbf{D}_{n}) \big /
\hat{\pi }(k \mid \mathbf{D}_{n})\right ),
\end{equation*}
for all $k'\in \mathbf{N}(k)$. The function $h$ needs to be specified.
A natural choice (that does not satisfy the conditions mentioned in Section~\ref{sec_objectives}) is the identity function:
$g(k, k') \propto \hat{\pi }(k' \mid \mathbf{D}_{n}) \big / \hat{\pi }(k
\mid \mathbf{D}_{n}) \propto \hat{\pi }(k' \mid \mathbf{D}_{n})$. The resulting
proposal distributions are called \textit{globally-balanced} in
\cite{zanella2019informed}. To understand why, consider the situation where
the size of $\mathcal{K}$ is small and it is feasible to switch from any
model to any other one, and thus we can set
$\mathbf{N}(k) = \mathcal{K}$ for all $k$. In this case, $g(k, k')$ is
exactly equal to $\hat{\pi }(k' \mid \mathbf{D}_{n})$, and for large enough
$n$,
$\hat{\pi }(k' \mid \mathbf{D}_{n}) \approx \pi (k' \mid \mathbf{D}_{n})$,
corresponding to independent Monte Carlo sampling when
$\mathbf{u}_{k \mapsto k'} =\mathbf{y}_{k'} \sim \pi (\, \cdot \mid k',
\mathbf{D}_{n})$. Using such \textit{global} proposal distributions
$g(k, \cdot \,)$ with $\mathbf{N}(k) = \mathcal{K}$ thus asymptotically
leaves the target distribution invariant without an accept/reject step,
hence the name \textit{globally-balanced}. In contrast, when the ambient
space becomes larger and larger and the neighbourhoods have a fixed size,
i.e.\ when the locally-balanced proposal distributions asymptotically leave
the target distribution invariant, setting $h$ to the identity function
instead asymptotically leaves a distribution with
$\mathbf{X}_{K} \mid K \sim \pi (\, \cdot \mid K,\mathbf{D}_{n})$ and
$K \sim \pi (\, \cdot \mid \mathbf{D}_{n})^{2}$ invariant. We recommend
to set $h$ to the identity only when it is to feasible to set
$\mathbf{N}(k) = \mathcal{K}$ for all $k$. Otherwise, it seems better to
use locally-balanced proposal distributions; this is the case for instance
in \cite{zanella2019informed} and our moderate-size (both in $n$ and dimension)
variable-selection example.

Two choices of locally-balanced functions $h$ are considered in
\cite{zanella2019informed}: $h(x) = \sqrt{x}$ and $h(x) = x/(1+x)$. The
choice $h(x) = x/(1+x)$ yields what the author calls
\textit{Barker proposal distributions} because of the connection with
\cite{barker1965monte}'s acceptance-probability choice:
\begin{equation*}
\frac{\widehat{\pi }(k' \mid \mathbf{D}_{n}) \, \big / \, \widehat{\pi }(k \mid \mathbf{D}_{n})}{1 + \widehat{\pi }(k' \mid \mathbf{D}_{n}) \, \big / \, \widehat{\pi }(k \mid \mathbf{D}_{n})}=
\frac{\widehat{\pi }(k' \mid \mathbf{D}_{n})}{\widehat{\pi }(k' \mid \mathbf{D}_{n}) + \widehat{\pi }(k \mid \mathbf{D}_{n})}.
\end{equation*}
The empirical results of \cite{zanella2019informed} suggest that this latter
choice is superior because $h$ is bounded, which stabilizes the normalizing
constants $c(k)$ and $c(k')$ and thus the acceptance probabilities.
\cite{livingstone2019robustness} reached the same conclusion using locally-balanced
proposal distributions for continuous random variables. In our numerical
analyses, both choices lead to similar performances. In practice, a user
will choose one; we thus recommend setting $h(x) = x/(1+x)$.

\subsection{Specification of the parameter proposal distributions}\label{sec_parameter_proposal_design}

In this section, we discuss the specification of the functions
$q_{k\mapsto k'}$ and $\mathcal{D}_{k\mapsto k'}$ that are used when
$k' \neq k$, i.e.\ during model-switch attempts. As mentioned, in Section~\ref{sec_objectives}, HMC is used during parameter-update attempts.
The tuning of HMC free parameters is discussed in Section~\ref{sec_resulting_RJ}.

\cite{green2003trans} proposed to set $q_{k\mapsto k'}$ and
$\mathcal{D}_{k\mapsto k'}$ as if the parameters of both Model $k$ and
Model $k'$ were normally distributed. More precisely, when
$d_{k'} > d_{k}$ (the other cases are similar), the author proposed to
set
%
\begin{align}
\label{eqn_D_green}
\mathbf{y}_{k'} = \boldsymbol{\mu }_{k'} + \mathbf{L}_{k'}\left (
\begin{array}{cc}
\mathbf{L}_{k}^{-1}(\mathbf{x}_{k} - \boldsymbol{\mu }_{k})
\cr
\mathbf{u}_{k\mapsto k'}
\end{array}\right ),
\end{align}
where $\boldsymbol{\mu }_{k'}$ and $\boldsymbol{\mu }_{k}$ are vectors and
$\mathbf{L}_{k'}$ and $\mathbf{L}_{k}$ are matrices. If
$\pi (\, \cdot \mid k, \mathbf{D}_{n})$ and
$\pi (\, \cdot \mid k', \mathbf{D}_{n})$ are normal distributions with
means $\boldsymbol{\mu }_{k}$ and $\boldsymbol{\mu }_{k'}$ and covariance matrices
$\mathbf{L}_{k} \mathbf{L}_{k}^{T}$ and
$\mathbf{L}_{k'} \mathbf{L}_{k'}^{T}$, then setting
$q_{k\mapsto k'}$ to a $(d_{k'} - d_{k})$-dimensional standard normal yields
$\mathbf{y}_{k'} \sim \pi (\, \cdot \mid k', \mathbf{D}_{n})$. The function
yielding $\mathbf{y}_{k'}$ in \eqref{eqn_D_green} defines
$\mathcal{D}_{k\mapsto k'}$. Note that in this case
$\mathbf{u}_{k'\mapsto k}$ is non-existent. Note also that in the following,
we will write
$\varphi ( \, \cdot \, ; \boldsymbol{\mu }, \boldsymbol{\Sigma })$ to denote
both the distribution and PDF of a normal with mean
$\boldsymbol{\mu }$ and covariance matrix $\boldsymbol{\Sigma }$.

The rationale behind the approach of \cite{green2003trans} is to propose
$\mathbf{y}_{k'}$ around a suitable location with a suitable covariance
structure using a normal distribution. This approach is asymptotically
valid in some cases, by virtue of Bernstein-von Mises theorems (see, e.g.,
Theorem 10.1 in \cite{van2000asymptotic} and
\cite{kleijn2012bernstein}), meaning that for regular models,
$\mathbf{X}_{k} \mid k, \mathbf{D}_{n}$ is asymptotically normally distributed
as $n \longrightarrow \infty $, for all $k$, when $\mathcal{K}$ and
$d_{k}$ are fixed and do not depend on $n$. In fact,
$\pi (\, \cdot \mid k, \mathbf{D}_{n})$ concentrates around
$\hat{\mathbf{x}}_{k}$, with
$\hat{\mathbf{x}}_{k} \longrightarrow \mathbf{x}_{k}^{*}$, at a rate of
$1/\sqrt{n}$ with a shape that resembles a normal for large enough
$n$ (see Section~\ref{sec_model_proposal_design} for the definition of
$\hat{\mathbf{x}}_{k}$). When Model $k$ is well specified,
$\mathbf{x}_{k}^{*}$ is the true parameter value and the covariance matrix
is $\mathcal{I}_{k}^{-1}$, the latter denoting the inverse information
matrix based on the whole data set, evaluated at
$\mathbf{x}_{k}^{*}$. We make the dependence on $\mathbf{x}_{k}^{*}$ implicit
to simplify; the inverse information matrix will always be evaluated at
$\mathbf{x}_{k}^{*}$ in the following. The information matrix
$\mathcal{I}_{k}$ is equivalent to $n$ times the information of one data
point in the independent and identically distributed (IID) setting. In
the misspecified case, $\mathbf{x}_{k}^{*}$ is the
\textit{best possible value} under Model $k$ and the covariance matrix is
slightly different than $\mathcal{I}_{k}^{-1}$ (see
\cite{kleijn2012bernstein} for the details).

The approach to specify $q_{k\mapsto k'}$ and
$\mathcal{D}_{k\mapsto k'}$ proposed here is a simpler version of that
of \cite{green2003trans}, but one that is based on the same idea: set
\begin{equation*}
\mathbf{y}_{k'} = \mathbf{u}_{k\mapsto k'},
\end{equation*}
with
$q_{k\mapsto k'} = \varphi (\, \cdot \,; \hat{\mathbf{x}}_{k'},
\hat{\mathcal{I}}_{k'}^{-1})$, for any reachable Model $k'$, regardless
of $k$ (see Section~\ref{sec_model_proposal_design} for the definition of
$\hat{\mathcal{I}}_{k'}$). Consequently,
$\mathbf{u}_{k'\mapsto k} = \mathbf{x}_{k}$ and
$\mathcal{D}_{k\mapsto k'}(\mathbf{x}_{k}$,
$ \mathbf{u}_{k\mapsto k'}) = (\mathbf{u}_{k\mapsto k'}, \mathbf{x}_{k})
= (\mathbf{y}_{k'}, \mathbf{u}_{k' \mapsto k})$, implying that
$q_{k'\mapsto k} = \varphi (\, \cdot \,; \hat{\mathbf{x}}_{k},
\hat{\mathcal{I}}_{k}^{-1})$ and
$|J_{\mathcal{D}_{k\mapsto k'}}(\mathbf{x}_{k}, \mathbf{u}_{k\mapsto k'})|
= 1$. Thus, if Model $k'$ is a well-specified and regular model,
$\mathbf{y}_{k'}$ is asymptotically generated from the correct conditional
distribution, i.e.\ $q_{k\mapsto k'}$ and
$\pi (\, \cdot \mid k', \mathbf{D}_{n})$ are asymptotically equal, because
$\pi (\, \cdot \mid k', \mathbf{D}_{n})$ is asymptotically equal to
$\varphi (\, \cdot \,; \hat{\mathbf{x}}_{k'}, \mathcal{I}_{k'}^{-1})$ (Bernstein-von
Mises theorem) and
$\hat{\mathcal{I}}_{k'} / n - \mathcal{I}_{k'} / n \longrightarrow 0$ in
probability, where we say that a matrix converges in probability whenever
all entries converge in probability.

It may be the case that $q_{k\mapsto k'}$ and
$\pi (\, \cdot \mid k', \mathbf{D}_{n})$ are \emph{not} asymptotically equal
because Model $k'$ may be misspecified, implying that the limiting covariance
of $\pi (\, \cdot \mid k', \mathbf{D}_{n})$ is not
$\mathcal{I}_{k'}^{-1}$. The difference between
$\mathcal{I}_{k}^{-1}$ and the correct limiting covariance depends on how
close Model $k'$ is to the \textit{true model}, in a Kullback-Leibler divergence
sense. In general, the true model is unknown; therefore we cannot obtain
an analytical expression of the correct limiting covariance. The recommended
methodology thus represents in some cases an approximated version of the
asymptotically exact one, but it is a methodology that can be easily applied
and, fortunately, the resulting algorithms are expected to propose rarely
models that are far from the true model because of the design of $g$, as
long as a subset of $\mathcal{K}$ are suitable approximations to the true
model. In fact when the latter is true, the resulting algorithms are expected
to behave similarly to the asymptotically exact ones. This discussion is
made more precise in Section~\ref{sec_asymptotics}. In our numerical example,
the resulting algorithms are sufficiently close to the asymptotically exact
ones and $n$ is sufficiently large so as to yield efficient samplers. Note
that the covariance matrices can be estimated otherwise to obtain a generic
asymptotically exact methodology; for instance, they can be estimated using
pilot MCMC runs. This has for disadvantage of being computationally expensive
and challenging.

We finish this section by noting that the approach of
\cite{green2003trans} is also an approximated version of the asymptotically
exact one when $\boldsymbol{\mu }_{k} = \hat{\mathbf{x}}_{k}$ and
$\mathbf{L}_{k}$ is such that
$\mathbf{L}_{k} \mathbf{L}_{k}^{T} = \hat{\mathcal{I}}_{k}^{-1}$ for all
$k$. We also note that a RJ user employing the approach of
\cite{green2003trans} or that described above, already computes what is
required to design $g(k, \cdot \,)$ as in Section~\ref{sec_model_proposal_design}. This user can therefore implement
the proposed methodology without much additional effort.

\subsection{Resulting RJ}\label{sec_resulting_RJ}

We now present in Algorithm~\ref{proposed_RJ} a special case of Algorithm~\ref{algo_RJ} with inputs specified as in Sections \ref{sec_model_proposal_design} and \ref{sec_parameter_proposal_design}.

\begin{algorithm}[ht]
\caption{Proposed RJ}
\label{proposed_RJ}
\begin{enumerate}
 \itemsep 0mm

  \item Sample $k' \sim g(k, \cdot \,)$ with $g(k, k') \propto h\left (\hat{\pi}(k' \mid \mathbf{D}_{n}) \big/ \hat{\pi}(k \mid \mathbf{D}_{n})\right )$ for all $k'\in \mathbf{N}(k)$.

  \item[2.(a)] If $k' = k$, attempt a parameter update using a MCMC kernel of invariant distribution $\pi (\, \cdot \mid k, \mathbf{D}_{n})$ while keeping the value of the model indicator k fixed.

  \item[2.(b)] If $k' \neq k$, sample $\mathbf{y}_{k'} \sim \varphi (\, \cdot \,; \hat{\mathbf{x}}_{k'}, \hat{\mathcal{I}}_{k'}^{-1})$ and $u \sim \mathcal{U}[0, 1]$. If
      \[
       u \leq \alpha _{\text{RJ}}((k,\mathbf{x}_{k}),(k',\mathbf{y}_{k'})) = 1 \wedge \frac{\pi (k', \mathbf{y}_{k'} \mid \mathbf{D}_{n}) \, g(k', k) \, \varphi (\mathbf{x}_{k}; \hat{\mathbf{x}}_{k}, \hat{\mathcal{I}}_{k}^{-1})}{\pi (k, \mathbf{x}_{k} \mid \mathbf{D}_{n}) \, g(k,k') \, \varphi (\mathbf{y}_{k'}; \hat{\mathbf{y}}_{k'}, \hat{\mathcal{I}}_{k'}^{-1})},
      \]
      set the next state of the chain to $(k', \mathbf{y}_{k'})$. Otherwise, set it to $(k,\mathbf{x}_{k})$.

  \item[3.] Go to Step 1.
 \end{enumerate}
\end{algorithm}

Recall that we recommend to set $h$ to the identity function when
$\mathbf{N}(k) = \mathcal{K}$ for all $k$ and to a locally-balanced function,
namely $h(x) = x/(1+x)$, otherwise.

We recommend to apply HMC to update the parameters in Step 2.(a). HMC generate
paths which evolve according to Hamiltonian dynamics, thus exploiting the
local structure of the target through gradient information; the endpoints
of these paths are used as proposals within a MH scheme. This proposal
mechanism allows the chain to take large steps while not ending up in areas
with negligible densities. HMC thus has the ability to decorrelate, which
is an interesting feature that is especially important in RJ given that
the samplers may not spend a lot of consecutive iterations updating the
parameters.

There exist several variants of HMC, the most popular being the
\textit{No-U-Turn sampler} \citep{hoffman2014no}. The latter can be run
automatically for parameter estimation using the probabilistic programming
langage \textit{Stan} (see, e.g., \cite{carpenter2017stan}). To our knowledge,
it is not possible to directly use the Stan implementation within RJ to
update the parameters, and given that the No-U-Turn sampler is difficult
to implement, we have chosen to employ the ``vanilla'' version of HMC (see,
e.g., \cite{neal2011mcmc}). The implementation of this version for a given
model requires the specification of three free parameters: a step size,
a trajectory length and a mass matrix. To tune these parameters for a given
model, we recommend to first run Stan with the option
\textsf{static HMC} (meaning that the vanilla HMC is employed), and to
extract the step size and marginal empirical standard deviations of all
parameters. This step size results from a tuning procedure and can thus
directly be used. The mass matrix is set to a diagonal matrix with diagonal
elements being these standard deviations. For the trajectory length, a
grid search is applied and the best (found) value is used. The merit of
each trajectory length is evaluated through the minimum of marginal effective
sample sizes. Note that in this trans-dimensional framework, the momentum
variables need in theory to be considered with care. Theoretically, we
may consider that a momentum refreshment is performed every odd iteration,
and that the algorithm proceeds as in Algorithm~\ref{proposed_RJ} every even
iteration. Also, we need (in theory) to add or withdraw momentum variables
when switching models. In practice, we do not have to proceed in this way.
Given that momentum variables are only required when updating the parameters
and that the expectations that one wants to approximate typically do not
depend on these variables, we may generate them only when it is known that
a parameter update is proposed (i.e.\ $k' = k$).

Some authors (for instance, \cite{green2003trans}) mentioned that using
informed RJ samplers like Algorithm~\ref{proposed_RJ} may be problematic when
it is required to gather information for each model before running the algorithms,
because this is infeasible for large (or infinite) model spaces. The information
that is required for an iteration of Algorithm~\ref{proposed_RJ} is in fact
$\hat{\mathbf{x}}_{k'}$ and $\hat{\mathcal{I}}_{k'}$ for all
$k' \in \mathbf{N}(k)$, and possibly free parameters for an HMC step in
Step 2.(a). The required information can thus be gathered on the fly as
the chains reach new models; it is unnecessary to gather all of it beforehand.
The maximizers $\hat{\mathbf{x}}_{k'}$, matrices
$\hat{\mathcal{I}}_{k'}$ and free parameters should be stored and reused
next time they are needed. We only need to make sure that these are independent
of the chain paths to have a valid procedure. For maximizers, any general-purpose
optimization tool can be employed, but a starting point, if needed, should
consequently not depend on the chain path to make sure there is no interference;
the same recommendation follows for the computation of
$\hat{\mathcal{I}}_{k}$ and tuning of HMC free parameters.

This strategy makes the implementation of informed RJ samplers possible,
even if the model space is large or infinite, provided that the posterior
model PMF concentrates on a reasonable number of models (in the sense that
the number of different models visited during algorithm runs is on average
reasonable). When the PMF concentrates on few models, this implementation
strategy is expected to be highly effective as the information required
for model switches and parameter updates will in practice be gathered for
these few models and their neighbours only (if the algorithm is well initialized).

\subsection{Methods to use when the approximations are not accurate}
\label{sec_overview_methods_improve}

For finite $n$, the shapes of the posterior parameter densities under Model
$k$ and Model $k'$ involved during a model-switch attempt may be quite
different from bell curves. When this is the case, using normal approximations
to the distributions, i.e.\ $q_{k \mapsto k'} = \varphi (\, \cdot \,;
\hat{\mathbf{x}}_{k'}, \hat{\mathcal{I}}_{k'}^{-1})$ and
$q_{k' \mapsto k} = \varphi (\, \cdot \,; \hat{\mathbf{x}}_{k},
\hat{\mathcal{I}}_{k}^{-1})$, may lead to high rejection rates. The method
of \cite{karagiannis2013annealed} allows to build upon these approximations
by generating a sequence
$(\mathbf{x}_{k}^{(0)}, \mathbf{y}_{k'}^{(0)}), \ldots , (\mathbf{x}_{k}^{(T
- 1)}$, $ \mathbf{y}_{k'}^{(T-1)})$ using inhomogeneous Markov kernels
$K_{k \mapsto k'}^{(t)}$, with $\mathbf{y}_{k'}^{(T-1)}$ being the proposal
for the parameters of Model $k'$. More precisely, the starting point
$(\mathbf{x}_{k}^{(0)}, \mathbf{y}_{k'}^{(0)})$ is such that
$\mathbf{x}_{k}^{(0)} = \mathbf{x}_{k}$ and
$\mathbf{y}_{k'}^{(0)} \sim \varphi (\, \cdot \,; \hat{\mathbf{x}}_{k'},
\hat{\mathcal{I}}_{k'}^{-1})$, and the following
$(\mathbf{x}_{k}^{(t)}, \mathbf{y}_{k'}^{(t)})$ are sampled from
$K_{k \mapsto k'}^{(t)}((\mathbf{x}_{k}^{(t-1)}, \mathbf{y}_{k'}^{(t-1)}),
\cdot \,)$, $K_{k \mapsto k'}^{(t)}$ being, for each $t$, reversible with
respect to an annealing intermediate distribution given by
%
\begin{align}
\label{eqn_def_rho}
\rho _{k\mapsto k'}^{(t)}(\mathbf{x}_{k}^{(t)},\mathbf{y}_{k'}^{(t)})&
\propto \left [\pi (k,\mathbf{x}_{k}^{(t)}\mid \mathbf{D}_{n})\,
\varphi (\mathbf{y}_{k'}^{(t)}; \hat{\mathbf{x}}_{k'},
\hat{\mathcal{I}}_{k'}^{-1})\right ]^{1-\gamma _{t}}
\cr
&\qquad \times \left [\pi (k',\mathbf{y}_{k'}^{(t)}\mid \mathbf{D}_{n})
\, \varphi (\mathbf{x}_{k}^{(t)}; \hat{\mathbf{x}}_{k},
\hat{\mathcal{I}}_{k}^{-1})\right ]^{\gamma _{t}}.
\end{align}
The free parameter $T$ is a positive integer,
$\gamma _{0} = 0, \gamma _{T} = 1$ and $\gamma _{t} = t / T$ for each
$t \in \{1, \ldots , T - 1\}$.

We notice that when switching from Model $k$ to Model $k'$, we start with
distributions $\rho _{k\mapsto k'}^{(t)}$ close to
$\pi (k,\,\cdot \,\mid \mathbf{D}_{n})\otimes \varphi (\, \cdot \,;
\hat{\mathbf{x}}_{k'}, \hat{\mathcal{I}}_{k'}^{-1})$ to finish, after a
transition phase, with distributions close to
$\varphi (\, \cdot \,; \hat{\mathbf{x}}_{k}, \hat{\mathcal{I}}_{k}^{-1})
\otimes \pi (k',\,\cdot \,\mid \mathbf{D}_{n})$. Under some regularity
conditions, choosing $T$ large enough allows a smooth transition and makes
that the last steps of the sequence, with $t \geq t^{*}$, are similar to
steps of a time-homogeneous Markov chain with
$\varphi (\, \cdot \,; \hat{\mathbf{x}}_{k}, \hat{\mathcal{I}}_{k}^{-1})
\otimes \pi (k',\,\cdot \,\mid \mathbf{D}_{n})$ as a stationary distribution
(because $t / T \geq t^{*}/T \approx 1$), ensuring that
$(\mathbf{x}_{k}^{(T - 1)}, \mathbf{y}_{k'}^{(T - 1)})$ is approximately
distributed as
$\varphi (\, \cdot \,; \hat{\mathbf{x}}_{k}, \hat{\mathcal{I}}_{k}^{-1})
\otimes \pi (k',\,\cdot \,\mid \mathbf{D}_{n})$. A proof can be found in
\cite{gagnon2019NRJ}. Also, the acceptance probability in the resulting
RJ, given by
%
\begin{align}
\label{eqn_acc_andrieu_2013}
\alpha _{\text{RJ2}}((k,\mathbf{x}_{k}^{(0)}),(k',\mathbf{y}_{k'}^{(T-1)}))&:=1
\wedge \frac{g(k',k)}{g(k,k')} \,r_{\text{RJ2}}((k,\mathbf{x}_{k}^{(0)}),(k',
\mathbf{y}_{k'}^{(T-1)}))
\end{align}
with
%
\begin{align}
\label{eqn_ratio_A2013}
r_{\text{RJ2}}((k,\mathbf{x}_{k}^{(0)}),(k',\mathbf{y}_{k'}^{(T-1)})):=
\prod _{t=0}^{T-1}
\frac{\rho _{k\mapsto k'}^{(t+1)}(\mathbf{x}_{k}^{(t)},\mathbf{y}_{k'}^{(t)})}{\rho _{k\mapsto k'}^{(t)}(\mathbf{x}_{k}^{(t)},\mathbf{y}_{k'}^{(t)})},
\end{align}
is such that
$r_{\text{RJ2}}((k,\mathbf{x}_{k}^{(0)}),(k',\mathbf{y}_{k'}^{(T-1)}))$
is a consistent estimator of $\pi (k'\mid \mathbf{D}_{n}) / $
$\pi (k\mid \mathbf{D}_{n})$ as $T\longrightarrow \infty $. This shows
that the resulting RJ behave asymptotically like the ones in which the
parameter proposals are sampled from the correct conditional distributions
$\pi (\,\cdot \,\mid k', \mathbf{D}_{n})$, \emph{for fixed $n$}.

Given that
$r_{\text{RJ2}}((k,\mathbf{x}_{k}^{(0)}),(k',\mathbf{y}_{k'}^{(T-1)}))$
can be seen as an estimator of
$\pi (k'\mid \mathbf{D}_{n})/\pi (k\mid \mathbf{D}_{n})$,
\cite{andrieu2018utility} proposed to exploit this through a scheme allowing
to generate in parallel $N$ estimates to average them to reduce the variance.
With $T$ fixed, increasing $N$ thus gets the resulting RJ closer to the
ones in which the parameter proposals are sampled from the correct conditional
distributions $\pi (\,\cdot \,\mid k', \mathbf{D}_{n})$. In Appendix~\ref{sec_improve_approx}, it is shown that the methods of
\cite{karagiannis2013annealed} and \cite{andrieu2018utility} with their
estimators of ratios of posterior probabilities can be used to enhance
the approximations
$\widehat{\pi }(k'\mid \mathbf{D}_{n})/\widehat{\pi }(k\mid \mathbf{D}_{n})$
in $g(k, \cdot \,)$.

\section{Asymptotic result and analysis}\label{sec_asymptotics}

In Section~\ref{sec_parameter_proposal_design}, we described situations where
using an informed but approximate design for $q_{k\mapsto k'}$ is expected
to yield RJ that behave like asymptotically exact ones, the latter being
asymptotically equivalent to ideal RJ that sample parameters
$\mathbf{y}_{k'}$ during model switches from the correct conditional distributions
$\pi (\, \cdot \mid k', \mathbf{D}_{n})$ and that use the unnormalized
version of the model probabilities $\pi (k \mid \mathbf{D}_{n})$ to construct
$g$. In Section~\ref{sec_asymptotic_result}, we make this discussion precise
and present a weak convergence result. We next provide in Section~\ref{sec_analysis_limiting_RJ} an analysis of the limiting RJ.

\subsection{Asymptotic result}\label{sec_asymptotic_result}

The weak convergence result presented in this section is about weak convergence
of Markov chains, but the convergence does not happen for almost all data
sets $\mathbf{D}_{n}$. It rather occurs with a probability that becomes
closer to 1 as $n \longrightarrow \infty $, where the randomness comes
from $\mathbf{D}_{n}$. A first statement of this kind in MCMC recently
appeared in \cite{schmon2018large}. It allows to exploit Berstein-von Mises
theorems which are about the convergence of posterior parameter distributions
towards normal distributions in total variation (TV), with a probability
that converges to 1 (thus not for almost all data sets
$\mathbf{D}_{n}$). In this section, the randomness for all convergence
in probability comes from $\mathbf{D}_{n}$; the source of randomness is
thus omitted to simplify the statements.

The first condition for the weak convergence result to hold is about an
explicit independence between $n$ and the state-space.

\begin{Assumption}%
\label{ass1}
The model space $\mathcal{K}$ and the parameter dimensions, i.e.\ $d_{k}$
for all $k \in \mathcal{K}$, do not change with $n$.
\end{Assumption}

The second condition is that $\hat{\pi }(k \mid \mathbf{D}_{n})$ is an asymptotically
exact estimator of $\pi (k \mid \mathbf{D}_{n})$, the latter admitting
a limit $\bar{\pi }(k)$, for all $k$. In some cases, the posterior model
mass concentrates. For instance when a model, say Model $k^{*}$, is well
specified and regular, $\bar{\pi }(k^{*}) = 1$ (see, e.g.,
\cite{johnson2012bayesian} in linear regression).

\begin{Assumption}%
\label{ass2}
For each $k \in \mathcal{K}$, the pair of random variables
$(\widehat{\pi }(k\mid \mathbf{D}_{n}), \pi (k\mid \mathbf{D}_{n}))$ is
such that
$|\widehat{\pi }(k\mid \mathbf{D}_{n}) - \pi (k\mid \mathbf{D}_{n})|$ and
$|\pi (k\mid \mathbf{D}_{n}) - \bar{\pi }(k)|$ converge in probability towards
0 as $n \longrightarrow \infty $.
\end{Assumption}

The last assumption is about the asymptotic equivalence in TV between
$q_{k \mapsto k'}$ that is used to generate
$\mathbf{u}_{k \mapsto k'} = \mathbf{y}_{k'}$ and
$\pi (\, \cdot \mid k', \mathbf{D}_{n})$, for all $k'$. This happens when
a Bernstein-von Mises theorem holds for each Model $k'$ and
$q_{k \mapsto k'} = \varphi (\, \cdot \,; \hat{\mathbf{x}}_{k'},
\hat{\boldsymbol{\Sigma }}_{k'} / n)$ with $\hat{\mathbf{x}}_{k'}$ and
$\hat{\boldsymbol{\Sigma }}_{k'}$ being consistent estimator of
$\mathbf{x}_{k'}^{*}$ and $\boldsymbol{\Sigma }_{k'}$, respectively, where
$\boldsymbol{\Sigma }_{k'}$ is the limiting covariance matrix (if we divide
by $n$) of $\pi (\, \cdot \mid k', \mathbf{D}_{n})$. When, for instance,
Model $k'$ is well defined and regular and data points are IID,
$\hat{\boldsymbol{\Sigma }}_{k'} = (\hat{\mathcal{I}}_{k'} / n)^{-1}$ and
$\boldsymbol{\Sigma }_{k'}$ is the inverse information matrix of one data
point.

\begin{Assumption}%
\label{ass3}
For each $k' \in \mathcal{K}$, there exist $\hat{\mathbf{x}}_{k'}$,
$\mathbf{x}_{k'}^{*}$ and $\boldsymbol{\Sigma }_{k'}$ such that
$\hat{\mathbf{x}}_{k'} \longrightarrow \mathbf{x}_{k'}^{*}$ and
$\text{TV}(\pi (\, \cdot \mid k', \mathbf{D}_{n}), \varphi (\, \cdot
\,; \hat{\mathbf{x}}_{k'}, \boldsymbol{\Sigma }_{k'} / n))
\longrightarrow 0$, both in probability (Bernstein-von Mises theorem).
Also,
$q_{k \mapsto k'} = \varphi (\, \cdot \,; \hat{\mathbf{x}}_{k'},
\hat{\boldsymbol{\Sigma }}_{k'} / n)$ with
$\hat{\boldsymbol{\Sigma }}_{k'} \longrightarrow \boldsymbol{\Sigma }_{k'}$
in probability, implying that
%
\begin{align*}
\label{eqn_Bernstein}
\text{TV}(\varphi (\, \cdot \,; \hat{\mathbf{x}}_{k'},
\boldsymbol{\Sigma }_{k'} / n), q_{k \mapsto k'}) \longrightarrow 0
\quad \text{and} \quad \text{TV}(\pi (\, \cdot \mid k', \mathbf{D}_{n}),
q_{k \mapsto k'}) \longrightarrow 0,
\end{align*}
in probability, for all $k'$ (thus regardless of $k$).
\end{Assumption}

As mentioned in Section~\ref{sec_parameter_proposal_design}, the proposed methodology
represents in some cases an approximation to the asymptotically exact one,
because for some probable transitions from Model $k$ to Model $k'$,
$\text{TV}(\varphi (\, \cdot \,; \hat{\mathbf{x}}_{k'},
\boldsymbol{\Sigma }_{k'} / n), q_{k \mapsto k'})$ does not converge to 0
if Model $k'$ is misspecified. In fact, the proposed methodology is asymptotically
exact only if all models are well specified, or at least, if some are misspecified,
then $\bar{\pi }$ must assign a probability of 0 to them. Indeed, in the
latter case, considering that the chain is currently visiting Model
$k$ with $\bar{\pi }(k) > 0$,
\begin{equation*}
g(k, k') \propto h\left (\hat{\pi }(k' \mid \mathbf{D}_{n}) \big /
\hat{\pi }(k \mid \mathbf{D}_{n})\right ) \longrightarrow 0,
\end{equation*}
for all misspecified models with $k' \neq k$, so that the proposal distributions
$q_{k \mapsto k'}$ with
$\text{TV}(\varphi (\, \cdot \,; \hat{\mathbf{x}}_{k'},
\boldsymbol{\Sigma }_{k'} / n), q_{k \mapsto k'}) \not \longrightarrow 0$
will asymptotically never be used.

In practice, it is expected that all models are misspecified, but that
some well approximate the true model. We now briefly explain why the proposed
methodology is a good approximation to the asymptotically exact one in
this case and next present the weak convergence result. Denote by
$\mathcal{K}^{*} \subset \mathcal{K}$ the subset formed of the good models,
and consider that for all $k \in \mathcal{K}^{*}$ and for large enough
$n$,
$\text{TV}(\varphi (\, \cdot \,; \hat{\mathbf{x}}_{k'},
\boldsymbol{\Sigma }_{k'} / n), q_{k \mapsto k'})$ is small (because
$\hat{\boldsymbol{\Sigma }}_{k'} = (\hat{\mathcal{I}}_{k'} / n)^{-1}$ and
$\boldsymbol{\Sigma }_{k'} $ are similar), implying that
$\text{TV}(\pi (\, \cdot \mid k', \mathbf{D}_{n}), q_{k \mapsto k'})$ is
small. Consider also to simplify that all models are either good or poor
approximations to the true model, and that for all the poor models, i.e.\ with
$k' \notin \mathcal{K}^{*}$, $g(k, k') \longrightarrow 0$ for
$k \in \mathcal{K}^{*}$ (because
$\hat{\pi }(k' \mid \mathbf{D}_{n}) / \hat{\pi }(k \mid \mathbf{D}_{n})
\longrightarrow 0$). Therefore, for large enough $n$, if the chain is currently
visiting Model $k \in \mathcal{K}^{*}$, then with probability close to
1, a model in $\mathcal{K}^{*}$ is proposed, say Model $k'$, with
\begin{equation*}
g(k, k') \propto h\left (\hat{\pi }(k' \mid \mathbf{D}_{n}) \big /
\hat{\pi }(k \mid \mathbf{D}_{n})\right ) \approx h\left (\pi (k'
\mid \mathbf{D}_{n}) \big / \pi (k \mid \mathbf{D}_{n})\right ),
\end{equation*}
and next, parameters
$\mathbf{y}_{k'} = \mathbf{u}_{k \mapsto k'} \sim q_{k \mapsto k'}$ are
proposed, with
$\text{TV}(\pi (\, \cdot \mid k', \mathbf{D}_{n}), q_{k \mapsto k'})$ small.
In this situation, the proposed RJ are well approximating the asymptotically
exact ones. We keep this situation in mind for the rest of Section~\ref{sec_asymptotics}.

We now introduce notation that are required to present the weak convergence
result. Use $\{(K, \mathbf{X}_{K})_{n}(m): m \in \mathbb{N} \}$ to denote
a Markov chain simulated by a RJ that targets
$\pi (\, \cdot \,, \cdot \mid \mathbf{D}_{n})$ and sets
$g(k, k') \propto h\left (\hat{\pi }(k' \mid \mathbf{D}_{n}) \big /
\hat{\pi }(k \mid \mathbf{D}_{n})\right )$,
$q_{k \mapsto k'} = \varphi (\, \cdot \,; \hat{\mathbf{x}}_{k'},
\hat{\boldsymbol{\Sigma }}_{k'} / n)$ and
$\mathcal{D}_{k\mapsto k'}(\mathbf{x}_{k}, \mathbf{u}_{k\mapsto k'}) =
(\mathbf{u}_{k\mapsto k'}, \mathbf{x}_{k}) = (\mathbf{y}_{k'},
\mathbf{u}_{k' \mapsto k})$, for all $k$ and $k'\in \mathbf{N}(k)$. Use
$\{(K, \mathbf{X}_{K})_{\text{ideal}}(m): m \in \mathbb{N} \}$ to denote a
Markov chain simulated by an ideal RJ that targets
$\pi (\, \cdot \,, \cdot \mid \mathbf{D}_{n})$ as well, but instead sets
$g(k, k') \propto h\left (\pi (k' \mid \mathbf{D}_{n}) \big / \pi (k
\mid \mathbf{D}_{n})\right )$ and
$q_{k \mapsto k'} = \pi (\, \cdot \mid k', \mathbf{D}_{n})$ with
$\mathcal{D}_{k\mapsto k'}(\mathbf{x}_{k}, \mathbf{u}_{k\mapsto k'}) =
(\mathbf{u}_{k\mapsto k'}, \mathbf{x}_{k}) = (\mathbf{y}_{k'},
\mathbf{u}_{k' \mapsto k})$, for all $k$ and $k'\in \mathbf{N}(k)$.

Because the target distribution concentrates, we need to apply a transformation
to the Markov chains to obtain a non-trivial limit. We apply a transformation
reflecting that even for large $n$, the samplers continue to explore the
parameter spaces, but at different scales given that the parameters are
continuous variables. In contrast, if the posterior PMF
$\pi (\, \cdot \mid \mathbf{D}_{n})$ concentrates on say, one model, then
fewer and fewer models are visited during an algorithm run as more and
more of them have negligible mass as $n \longrightarrow \infty $. The transformation
that is applied aims at reflecting this reality and thus obtaining a limiting
situation that represents an approximation to what one encounters in practice.
We more precisely standardize the parameter variable
$\mathbf{X}_{K}$, but leave the model indicator $K$ as is, i.e.\ we define
$\{(K, \mathbf{Z}_{K})_{n}(m): m \in \mathbb{N} \}$ and
$\{(K, \mathbf{Z}_{K})_{\text{ideal}}(m): m \in \mathbb{N} \}$ such that
$(K, \mathbf{Z}_{K})_{n}(m) = (K, \sqrt{n}(\mathbf{X}_{K} -
\hat{\mathbf{x}}_{K}))_{n}(m)$ and
$(K, \mathbf{Z}_{K})_{\text{ideal}}(m) = (K, \sqrt{n}(\mathbf{X}_{K} -
\hat{\mathbf{x}}_{K}))_{\text{ideal}}(m)$, respectively, for all~$m$.\looseness=1

To be an admissible limit of
$\{(K, \mathbf{Z}_{K})_{n}(m): m \in \mathbb{N} \}$,
$\{(K, \mathbf{Z}_{K})_{\text{ideal}}(m): m \in \mathbb{N} \}$ would need to
be independent of $n$, but it is readily seen to be not the case. For instance,
the invariant conditional density of $\mathbf{Z}_{K}$ given $K$ is
$\pi (\, \cdot \, / \sqrt{n} + \hat{\mathbf{x}}_{k} \mid k,
\mathbf{D}_{n}) / \sqrt{n}$, which is asymptotically equivalent to a normal
with mean $\mathbf{0}$ and covariance $\boldsymbol{\Sigma }_{k}$, but not
equal to it for all $n$. We thus introduce a process for which the transformation
is independent of $n$ and prove that
$\{(K, \mathbf{Z}_{K})_{n}(m): m \in \mathbb{N} \}$ and
$\{(K, \mathbf{Z}_{K})_{\text{ideal}}(m): m \in \mathbb{N} \}$ weakly converge
towards it, showing that they all share a similar behaviour for large enough
$n$.

Denote by
$\{(K, \mathbf{X}_{K})_{\text{limit}}(m): m \in \mathbb{N} \}$ this process.
It is an RJ which targets a distribution such that
$\mathbf{X}_{K} \mid K \sim \varphi (\, \cdot \,; \hat{\mathbf{x}}_{K},
\boldsymbol{\Sigma }_{K} / n)$ and $K \sim \bar{\pi }$. It sets
$g(k, k') \propto h\left (\bar{\pi }(k') \big / \bar{\pi }(k)\right )$,
$q_{k \mapsto k'} = \varphi (\, \cdot \,; \hat{\mathbf{x}}_{k'},
\boldsymbol{\Sigma }_{k'} / n)$, and
$\mathcal{D}_{k\mapsto k'}(\mathbf{x}_{k}, \mathbf{u}_{k\mapsto k'}) =
(\mathbf{u}_{k\mapsto k'}, \mathbf{x}_{k}) = (\mathbf{y}_{k'},
\mathbf{u}_{k' \mapsto k})$, for all $k$ with $\bar{\pi }(k) > 0$ and
$k'\in \mathbf{N}(k)$. Because the chains will be assumed to start in stationarity,
there is no need to define the functions for $k$ with
$\bar{\pi }(k) = 0$. Denote the standardized version of
$\{(K, \mathbf{X}_{K})_{\text{limit}}(m): m \in \mathbb{N} \}$ by
$\{(K, \mathbf{Z}_{K})_{\text{limit}}(m): m \in \mathbb{N} \}$, which is such
that
$(K, \mathbf{Z}_{K})_{\text{limit}}(m) = (K, \sqrt{n}(\mathbf{X}_{K} -
\hat{\mathbf{x}}_{K}))_{\text{limit}}(m)$, for all $m$. We denote the weak
convergence of, for instance,
$\{(K, \mathbf{Z}_{K})_{n}(m): m \in \mathbb{N} \}$ to
$\{(K, \mathbf{Z}_{K})_{\text{limit}}(m): m \in \mathbb{N} \}$ by
$\{(K, \mathbf{Z}_{K})_{n}(m): m \in \mathbb{N} \} \Longrightarrow \{(K,
\mathbf{Z}_{K})_{\text{limit}}(m): m \in \mathbb{N} \}$.

\begin{Theorem}[Weak convergence]%
\label{thm_convergence}
Under Assumptions \ref{ass1} to \ref{ass3}, we have that
$\{(K, \mathbf{Z}_{K})_{n}(m): m \in \mathbb{N} \} \Longrightarrow \{(K,
\mathbf{Z}_{K})_{\text{limit}}(m): m \in \mathbb{N} \}$ and
$\{(K, \mathbf{Z}_{K})_{\text{ideal}}(m): m \in \mathbb{N} \}
\Longrightarrow \{(K, \mathbf{Z}_{K})_{\text{limit}}(m): m \in
\mathbb{N} \}$, both in probability as $n \longrightarrow \infty $, provided
that all chains start in stationarity.
\end{Theorem}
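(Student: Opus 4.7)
The plan is to establish weak convergence at the level of finite-dimensional distributions, which, since both processes are Markov chains that start in stationarity, reduces to two checks: (i) weak convergence of the initial (stationary) distribution and (ii) weak convergence of the one-step transition kernel, pointwise in the current state, with both statements holding in probability over $\mathbf{D}_{n}$. I would first carry out these checks for $\{(K, \mathbf{Z}_{K})_{n}(m)\}$, and then observe that a simpler version of the same argument handles $\{(K, \mathbf{Z}_{K})_{\text{ideal}}(m)\}$.

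For (i), the stationary law of $(K, \mathbf{Z}_{K})_{n}(0)$ assigns mass $\pi(k \mid \mathbf{D}_{n})$ to $k$ and, given $K = k$, puts $\mathbf{Z}_{K}$ in the law of $\sqrt{n}(\mathbf{X}_{k} - \hat{\mathbf{x}}_{k})$ under $\pi(\,\cdot\mid k, \mathbf{D}_{n})$. Assumption~\ref{ass2} gives $\pi(k \mid \mathbf{D}_{n}) \to \bar{\pi}(k)$ in probability, and a change of variable together with Assumption~\ref{ass3} gives convergence in total variation, in probability, of the standardized conditional to $\varphi(\,\cdot\,; \mathbf{0}, \boldsymbol{\Sigma}_{k})$. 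Combined, these yield weak convergence of the initial law to the stationary law of $\{(K, \mathbf{Z}_{K})_{\text{limit}}(m)\}$. The ideal chain shares this target and the argument transfers verbatim.

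For (ii), I would decompose a one-step transition into its three ingredients and analyze each separately. For $k$ with $\bar{\pi}(k) > 0$ and $k' \in \mathbf{N}(k)$: (a) continuity and positivity of $h$ combined with Assumption~\ref{ass2} give $g_{n}(k,\cdot) \to g_{\text{limit}}(k,\cdot)$ pointwise in probability; (b) after standardization, the parameter-proposal density becomes $\varphi(\,\cdot\,; \mathbf{0}, \hat{\boldsymbol{\Sigma}}_{k'})$, which converges in total variation, in probability, to $\varphi(\,\cdot\,; \mathbf{0}, \boldsymbol{\Sigma}_{k'})$ by Assumption~\ref{ass3}; (c) for the acceptance probability, I would rewrite $\alpha_{\text{RJ}}$ in the standardized variables $\mathbf{z}_{k} = \sqrt{n}(\mathbf{x}_{k} - \hat{\mathbf{x}}_{k})$, replace the posterior conditional densities by their Bernstein--von Mises Gaussian surrogates (the residual error disappears in probability when integrating against bounded continuous test functions, by Assumption~\ref{ass3}), and show that the resulting expression converges pointwise in probability to the acceptance probability of the limit chain. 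For $\{(K, \mathbf{Z}_{K})_{\text{ideal}}(m)\}$, a simplified version of (a) and (c) applies, with no Laplace step inside $g$ and with the exact conditional replacing $q_{k \mapsto k'}$.

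The hard part will be the mode of convergence: classical weak-convergence theorems for Markov chains are phrased deterministically, whereas here everything is in probability over $\mathbf{D}_{n}$ and the state-space structure itself depends on $\mathbf{D}_{n}$ through $\hat{\mathbf{x}}_{k}$ and $\hat{\boldsymbol{\Sigma}}_{k}$. I would handle this, along the lines of \cite{schmon2018large}, by taking an arbitrary subsequence of $n$, extracting a further subsequence along which all ``in probability'' conclusions of Assumptions~\ref{ass2}--\ref{ass3} become almost sure (standard subsequence criterion for convergence in probability), applying a deterministic weak-convergence theorem for Markov chains on that subsequence to get almost-sure weak convergence of finite-dimensional distributions to the limit process, and then concluding weak convergence in probability of the original sequence. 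A secondary technical point is ruling out blow-up of the Gaussian density ratios in $\alpha_{\text{RJ}}$ under standardization; centring at $\hat{\mathbf{x}}_{k}$ keeps the evaluation arguments bounded in probability uniformly in $n$, which should suffice.
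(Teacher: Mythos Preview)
Your outline for step (i) is fine and matches the paper. The gap is in step (ii): you aim for weak convergence of $P_{n}((k,\mathbf{z}_{k}),\cdot)$ \emph{pointwise in the current state} $(k,\mathbf{z}_{k})$, with the acceptance probability converging ``pointwise in probability'' after substituting Bernstein--von Mises surrogates. But Assumption~\ref{ass3} only gives total-variation convergence of $\pi(\,\cdot\mid k,\mathbf{D}_{n})$ to the Gaussian, not pointwise convergence of the density. The acceptance ratio has $\pi_{K,\mathbf{Z}_{K}}(k,\mathbf{z}_{k}\mid\mathbf{D}_{n})$ in its denominator, evaluated at a \emph{fixed} $\mathbf{z}_{k}$; there is no reason this converges, so $\alpha((k,\mathbf{z}_{k}),(k',\mathbf{y}_{k'}))$ need not converge for fixed $\mathbf{z}_{k}$. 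Your parenthetical ``the residual error disappears when integrating against bounded continuous test functions'' does not help here, because in your plan the test function is integrated only in $\mathbf{y}_{k'}$, not in $\mathbf{z}_{k}$.

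The paper avoids this by using the integrated criterion from \cite{schmon2018large}, namely showing
\[
\sum_{k}\int \bigl|P_{n}\phi(k,\mathbf{z}_{k})-P_{\text{limit}}\phi(k,\mathbf{z}_{k})\bigr|\,\pi_{K,\mathbf{Z}_{K}}(k,\mathbf{z}_{k}\mid\mathbf{D}_{n})\,\mathrm{d}\mathbf{z}_{k}\longrightarrow 0
\]
in probability. Integrating over $\mathbf{z}_{k}$ against the stationary density is exactly what makes the argument close: one uses that $x\mapsto 1\wedge x$ is $1$-Lipschitz to bound $|\alpha-\alpha_{\text{limit}}|$ by a difference of products, and the factor $\pi_{K,\mathbf{Z}_{K}}(k,\mathbf{z}_{k}\mid\mathbf{D}_{n})$ in the denominator of $\alpha$ cancels against the integration weight. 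Every remaining term is then controlled by a TV distance (between $\pi(\,\cdot\mid k',\mathbf{D}_{n})$ and a Gaussian, or between two Gaussians with nearby covariances) or by $|g-g_{\text{limit}}|$ and $|\pi(k\mid\mathbf{D}_{n})-\bar\pi(k)|$, all of which Assumptions~\ref{ass2}--\ref{ass3} deliver. Your subsequence device for passing from ``in probability'' to almost-sure is a reasonable alternative to citing \cite{schmon2018large} directly, but it does not repair the pointwise-versus-integrated issue above; you still need to replace the pointwise kernel convergence by an $L^{1}$-in-$\mathbf{z}_{k}$ statement and carry out the Lipschitz/cancellation step.
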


Formally, in Theorem~\ref{thm_convergence}, it is considered that, for each
sampler, the proposal mechanism for parameter updates and model switches
is the same. For instance, in the implementable RJ (that simulate
$\{(K, \mathbf{Z}_{K})_{n}(m): m \in \mathbb{N} \}$), a normal is used to
sample the parameter proposals in parameter-update attempts. This is often
not the case in practice; for instance in our numerical example, we use
HMC steps. To accommodate for such types of parameter-update mechanisms,
an additional, rather technical, assumption is required about the convergence
of the associated Markov kernels in some sense as
$n \longrightarrow \infty $. We proceeded in that way to simplify and for
brevity.

\subsection{Analysis of the limiting RJ}\label{sec_analysis_limiting_RJ}

The stochastic process
$\{(K, \mathbf{Z}_{K})_{\text{limit}}(m): m \in \mathbb{N} \}$ can be used
as a proxy to $\{(K, \mathbf{Z}_{K})_{n}(m): m \in \mathbb{N} \}$ in the large-sample
regime. An analysis of it thus helps understand how
$\{(K, \mathbf{Z}_{K})_{n}(m): m \in \mathbb{N} \}$, and thus
$\{(K, \mathbf{X}_{K})_{n}(m): m \in \mathbb{N} \}$, behave in this regime.
The stationary distribution of
$\{(K, \mathbf{Z}_{K})_{\text{limit}}(m): m \in \mathbb{N} \}$ is such that
$\mathbf{Z}_{K} \mid K \sim \varphi (\, \cdot \,; \mathbf{0},
\boldsymbol{\Sigma }_{K})$ and $K \sim \bar{\pi }$.

A theoretical result that can be established in the situation where the
posterior model PMF concentrates on one model is the dominance of the limiting
RJ over any RJ targeting the same distribution and using the same functions,
except $g(k, \cdot \,)$. This is true because the limiting RJ does not
waste iterations trying to propose to switch to models other than the best.
This shows the relevance of using informed RJ which automatically incorporate
such features of the target.

Let us denote by $P_{\text{limit},1}$ and $P_{\text{limit},2}$ the Markov
kernels of the limiting RJ setting
$g(k, k') \propto h\left (\bar{\pi }(k') \big / \bar{\pi }(k)\right )$ and
that of the other RJ with the only difference being in $g$, respectively.
The result is a Peskun-Tierney ordering
\citep{peskun1973optimum,tierney1998note}, meaning an order on the Markov
kernels, implying an order on the asymptotic variances. We use
$\text{var}(f, P_{\text{limit},i})$ to denote the asymptotic variance of
the function $f$ applied to the Markov chain of transition kernel
$P_{\text{limit},i}$ at equilibrium, $i = 1, 2$.

\begin{Proposition}%
\label{prop_peskun}
Assume that there exists $k^{*}$ with $\bar{\pi }(k^{*}) = 1$. Then, for
any $\mathbf{z}_{k^{*}} \in \mathbb{R} ^{d_{k^{*}}}$ and measurable set
$A$,
\begin{equation*}
P_{\text{limit},1}((k^{*}, \mathbf{z}_{k^{*}}), A \setminus \{(k^{*},
\mathbf{z}_{k^{*}})\}) \geq P_{\text{limit},2}((k^{*}, \mathbf{z}_{k^{*}}),
A \setminus \{(k^{*}, \mathbf{z}_{k^{*}})\}),
\end{equation*}
implying that for any square-integrable function $f$,
\begin{equation*}
\text{var}(f, P_{\text{limit},1}) \leq \text{var}(f, P_{\text{limit},2}).
\end{equation*}
\end{Proposition}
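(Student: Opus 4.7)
The plan is to establish the off-diagonal kernel inequality directly, exploiting the degeneracy $\bar{\pi}(k^*) = 1$, and then invoke \cite{tierney1998note}'s ordering theorem to upgrade this inequality to the asymptotic variance statement. The intuition is that when one model carries all the limiting mass, the informed proposal $g_1$ of $P_{\text{limit},1}$ exactly mirrors the fact that no switch out of $k^*$ can ever be accepted, so $P_{\text{limit},1}$ wastes no iterations on proposals that are guaranteed to be rejected, whereas the alternative $P_{\text{limit},2}$ generally does.

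First, I would show that $g_1(k^*, k^*) = 1$. Since $\bar{\pi}(k^*) = 1$ and $\bar{\pi}$ is a PMF on $\mathcal{K}$, every $k' \in \mathbf{N}(k^*) \setminus \{k^*\}$ satisfies $\bar{\pi}(k') = 0$. Together with $h(0) = 0$ this forces $h(\bar{\pi}(k')/\bar{\pi}(k^*)) = 0$ for such $k'$, while $h(1) > 0$ by the positivity of $h$ on $(0, \infty)$. After normalisation, $g_1(k^*, k^*) = 1$, so $P_{\text{limit},1}((k^*, \mathbf{z}_{k^*}), \cdot)$ reduces to the parameter-update kernel used in Step 2.(a) of Algorithm~\ref{proposed_RJ}, which I will denote by $Q_{k^*}$.

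Next, I would analyse $P_{\text{limit},2}$. The two samplers share $Q_{k^*}$, but $g_2$ may assign positive mass to models $k' \neq k^*$. For any such $k'$, the joint limiting target evaluated at $(k', \mathbf{y}_{k'})$ equals $\bar{\pi}(k')\varphi(\mathbf{y}_{k'}; \hat{\mathbf{x}}_{k'}, \boldsymbol{\Sigma}_{k'}/n) = 0$, so the numerator of \eqref{eqn_acc_prob_RJ} vanishes identically and every proposed switch is rejected with probability one. Thus, starting from $(k^*, \mathbf{z}_{k^*})$, $P_{\text{limit},2}$ performs a $Q_{k^*}$-step with probability $g_2(k^*, k^*) \leq 1$ and otherwise remains at $(k^*, \mathbf{z}_{k^*})$. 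Stripping the diagonal atom yields
\[
P_{\text{limit},2}((k^*, \mathbf{z}_{k^*}), A \setminus \{(k^*, \mathbf{z}_{k^*})\}) = g_2(k^*, k^*)\, Q_{k^*}(\mathbf{z}_{k^*}, A_{k^*} \setminus \{\mathbf{z}_{k^*}\}) \leq Q_{k^*}(\mathbf{z}_{k^*}, A_{k^*} \setminus \{\mathbf{z}_{k^*}\}),
\]
where $A_{k^*} := \{\mathbf{z} \in \mathbb{R}^{d_{k^*}} : (k^*, \mathbf{z}) \in A\}$. The right-hand side equals $P_{\text{limit},1}((k^*, \mathbf{z}_{k^*}), A \setminus \{(k^*, \mathbf{z}_{k^*})\})$, which is the claimed Peskun inequality.

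For the variance statement, both kernels are reversible with respect to the same limiting target, whose mass is concentrated on $\{k^*\} \times \mathbb{R}^{d_{k^*}}$, so the off-diagonal inequality just established holds $\pi$-almost everywhere. Applying \cite{tierney1998note}'s ordering theorem on the effective state space $\{k^*\} \times \mathbb{R}^{d_{k^*}}$ then yields $\text{var}(f, P_{\text{limit},1}) \leq \text{var}(f, P_{\text{limit},2})$ for every square-integrable $f$. I do not anticipate a real obstacle: the whole argument reduces to the two elementary facts $h(0) = 0$ and $\bar{\pi}(k') = 0$ for $k' \neq k^*$. The only mild subtlety is noting that behaviour outside the support of $\pi$ is invisible to the asymptotic variance when the chain starts in stationarity, so verifying the Peskun inequality only at states $(k^*, \mathbf{z}_{k^*})$ is not an obstruction.
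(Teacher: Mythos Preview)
Your proof is correct and follows essentially the same approach as the paper: you establish $g_1(k^*,k^*)=1$ from $h(0)=0$ and $\bar\pi(k')=0$ for $k'\neq k^*$, observe that any model switch proposed by $P_{\text{limit},2}$ is rejected since the target assigns zero mass to $k'\neq k^*$, deduce the off-diagonal inequality $P_{\text{limit},2}=g_2(k^*,k^*)\,Q_{k^*}\leq Q_{k^*}=P_{\text{limit},1}$, and then invoke \cite{tierney1998note}. The paper's proof is simply a terser version of the same argument.

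One small inaccuracy: the parameter-update kernel $Q_{k^*}$ in the \emph{limiting} RJ is not the HMC kernel of Step~2.(a) in Algorithm~\ref{proposed_RJ}; as the paper notes after Theorem~\ref{thm_convergence}, the limiting process uses the same normal proposal $q_{k\mapsto k}=\varphi(\,\cdot\,;\hat{\mathbf{x}}_{k},\boldsymbol{\Sigma}_{k}/n)$ for updates as for switches. This does not affect your argument, since all that matters is that $P_{\text{limit},1}$ and $P_{\text{limit},2}$ share the same $Q_{k^*}$, whatever it is.
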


Note that this result holds for globally-balanced proposal distributions
as well. What is required is that $h(x) = 0$ when $x = 0$ and
$h(x) > 0$ when $x> 0$, which holds when $h$ is the identity function.

In the limiting RJ, the parameters are sampled from the correct conditional
distributions and therefore the transition probabilities for $K$ are the
same as a MH sampler targeting $\bar{\pi }$ using
$g(k, k') \propto h\left (\bar{\pi }(k') \big / \bar{\pi }(k)\right )$ with
acceptance probabilities given by
%
\begin{align}
\label{eqn_acc_limiting}
1 \wedge \frac{\bar{\pi }(k') \, g(k', k)}{\bar{\pi }(k) \, g(k, k')} = 1
\wedge \frac{c(k)}{c(k')}.
\end{align}
This implies that any analysis of MH samplers with $h$ set to a locally-balanced
function (which is what we recommend in typical cases) applies to the limiting
RJ. For a thorough analysis of such samplers, we refer the reader to
\cite{zanella2019informed}. Recall that $c(k)$ and $c(k')$ are the normalizing
constants of $g(k, \cdot \,)$ and $g(k', \cdot \,)$, respectively. As mentioned
in Section~\ref{sec_objectives}, \cite{zanella2019informed} proved that
$c(k)/c(k')$ are close to 1 in some situations, suggesting that locally-balanced
samplers are efficient.

Note that in practice even when $\pi (\, \cdot \mid \mathbf{D}_{n})$ concentrates,
we have that $\pi (k \mid \mathbf{D}_{n}) >0$ for any finite $n$ and any
$k$, implying that before finding the best model with a significantly larger
probability, the chain may move around for a while. For these transitions
with acceptance probabilities close to $\alpha _{\text{MH}}(k, k')$ in the
large-sample regime, the result of \cite{zanella2019informed}, stating
that $\alpha _{\text{MH}}(k, k') \longrightarrow 1$ as the ambient space
becomes larger and larger, holds under some conditions; this suggests again
an efficiency in exploring the state-space and thus finding the best model.

When $h$ is the globally-balanced function, i.e.\ the identity function,
and $\mathbf{N}(k) = \mathcal{K}$ for all $k$, then the limiting RJ proposes
models and parameters independently of the current state of the chain and
these proposals are always accepted. It thus corresponds to independent
Monte Carlo sampling, which is often seen as the gold standard in MCMC.

\section{Application: Variable selection in wholly-robust linear regression}\label{sec_application}

In this section, we apply the proposed RJ methodology to a variable-selection
problem in a wholly-robust linear regression. An overview of wholly-robust
linear regression is provided in Section~\ref{sec_context_example} and then
the computational results are presented and analysed in Section~\ref{sec_results_example}.

\subsection{Wholly-robust linear regression}\label{sec_context_example}

A new technique emerged to gain robustness against outliers in parametric
modelling: replace the traditional distribution assumption (which is a
normal assumption in the problems previously studied) by a super-heavy-tailed
distribution assumption
\citep{desgagne2015robustness,DesGag2019,gagnon2018regression, gagnon2017PCR}.\vadjust{\goodbreak}
The rationale is that this latter assumption is more adapted to the eventual
presence of outliers by giving higher probabilities to extreme values.
A proof of effectiveness of the approach resides in the following: the
posterior distribution converges towards that based on the non-outliers
only (i.e.\ excluding the outliers) as the outliers move further and further
away from the bulk of the data. This theoretical result corresponds to
a property in Bayesian statistics called \textit{whole robustness}
\citep{desgagne2015robustness}, and implies a
\textit{conflict resolution} \citep{o2012bayesian}. As explained in the
papers cited above, the models with super-heavy-tailed distributions have
built-in robustness that resolve conflicts due to contradictory sources
of information in a sensitive way. It takes full consideration of non-outliers
and excludes observations that are undoubtedly outlying; in between these
two extremes, it balances and bounds the impact of possible outliers, reflecting
that there is a uncertainty about whether or not these observations really
are outliers.

These features of wholly-robust models are appealing, but they come at
price, mainly a computational one. The super-heavy-tailed distribution
used in the papers cited above cannot indeed be represented as a scale
mixture of normal distributions, contrarily to the Student distribution
which is commonly assumed in Bayesian robust models. A scale mixture representation
of the Student allows a straightforward implementation of the Gibbs sampler
in, for instance, linear regression, for parameter estimation and variable
selection \citep{verdinelli1991bayesian}. A motivation for assuming a distribution
with heavier tails than the Student is that the latter only allows to reach
\textit{partial robustness}. In linear regression, partial robustness translates
into robust regression-coefficient point estimates, but inflated posterior
variances, which contaminates uncertainty assessments and thus model selection
(see \cite{hayashi2020bayesian} for a proof of partial robustness in linear
regression).

In \cite{gagnon2018regression}, the convergence of the posterior distribution
is proved under the most general linear-regression framework, encompassing
analysis of variance and covariance (ANOVA and ANCOVA), and variable selection.
In this section, we apply the RJ methodology presented in the previous
sections to sample from a joint posterior distribution of wholly-robust
linear regression models and their parameters. RJ is thus required for
this task. The data analysed are the same prostate-cancer data mentioned
in Figure~\ref{fig_1}.

The models are thus linear regressions, but the errors follow a super-heavy-tailed
distribution. See Appendix~\ref{sec_supp_variable_selection} for all the details
about the models and inputs required for RJ implementation. The super-heavy-tailed
distribution used is called \textit{log-Pareto-tailed normal} (LPTN). This
distribution was introduced by \cite{desgagne2015robustness}. Its density
exactly matches that of the normal on the interval $[-\tau , \tau ]$, where
$\mathbb{P}(-\tau \leq \varphi ( \, \cdot \,; 0, 1) \leq \tau ) =
\rho $. Outside of this area, the tails of this continuous density behave
as a log-Pareto: $(1 / |x|)(\log |x|)^{-\lambda - 1}$, hence its name.
The only free parameter of this distribution is $\rho $: the parameter
$\lambda $ is a function of $\rho $ and $\tau $, the latter being itself
a function of $\rho $. The linear regression with a LPTN is thus expected
to behave similarly to the traditional normal one in the absence of outliers.
Not only this is the case in the absence of outliers, but the limiting
robust-regression posterior distribution (as the distance between the outliers
and the bulk of the data approaches infinity) is also similar to the normal
one, the latter being instead based on the non-outliers only.

This resemblance is useful in our informed RJ framework because, first,
it suggests that the proposed methodology is an approximation to the asymptotically
exact one if some linear regressions are good approximations to the true
model. Indeed, in the normal linear-regression framework, Assumptions \ref{ass1} and \ref{ass2} of Section~\ref{sec_asymptotic_result} hold, and
a Berstein-von Mises theorem holds for each model. It is thus expected
to be the case in the robust LPTN framework as well, yet it is much more
difficult to prove.

Second, the resemblance between the wholly-robust and normal regressions
provides us with an approximation to the observed information matrix. We
indeed set $\hat{\mathcal{I}}_{k}$ in the robust LPTN framework to the
observed information matrix of the normal regression, but evaluated at
the maximizer under the wholly-robust model. The matrix
$\hat{\mathcal{I}}_{k}$ is thus basically the maximum a posteriori estimate
of the scale parameter in the robust model multiplied by the observed covariate
correlation matrix. The approximation is thus expected to be accurate when
there are no severe outliers among the covariate observations. One may
instead use a robust version of the correlation matrix (see
\cite{gagnon2017PCR} for such a robust alternative), but this is not investigated
here for brevity. Whether or not a robust version of the covariate matrix
is used is expected to have no impact on our comparison of the algorithms
in the next section because they all use
$q_{k \mapsto k'} = \varphi (\, \cdot \,; \hat{\mathbf{x}}_{k'},
\hat{\mathcal{I}}_{k'}^{-1})$ and the matrix
$\hat{\mathcal{I}}_{k'}$ only appears in $q_{k \mapsto k'}$. Indeed,
$|\hat{\mathcal{I}}_{k'}|^{-1/2}$ in
$\hat{\pi }(k' \mid \mathbf{D}_{n})$ \eqref{eqn_pi_hat} cancels with another
term in the same expression.

Finally, the resemblance between the wholly-robust and normal regressions
is useful to validate our RJ computer code. Indeed, the robust approach
yields an outlier-detection method and a ``clean'' data set without outliers
can thus be identified. Based on this data set, posterior coefficient estimates
and model probabilities of the normal models can be computed using their
closed-form expressions. These estimates and probabilities are expected
to be similar to those obtained under the robust models.

\subsection{Results and analysis}\label{sec_results_example}

The performance of the different algorithms are summarized in Table~\ref{table_performances} and Figure~\ref{fig_2}.
The results for the uninformed
RJ are based on 1,000 runs of 100,000 iterations, with burn-ins of 10,000.
The uninformed RJ is in fact a ``non-fully'' informed RJ corresponding
to Algorithm~\ref{proposed_RJ} with the only difference that
$g(k, \cdot \,) = \mathcal{U}\{\mathbf{N}(k)\}$ instead of locally- or
globally-balanced proposal distributions such that
$g(k, k') \propto h\left (\hat{\pi }(k' \mid \mathbf{D}_{n}) \big /
\hat{\pi }(k \mid \mathbf{D}_{n})\right )$ for all
$k'\in \mathbf{N}(k)$. It may thus be seen as corresponding to the approach
of \cite{green2003trans}. For a fair comparison with this RJ, the number
of iterations for which the fully informed RJ, i.e. Algorithm~\ref{proposed_RJ},
is run is reduced to account for the additional complexity of using informed
distributions $g(k, \cdot \,)$. The number of iterations has to be reduced
to 85,000 to reach the same runtime as the uninformed RJ; one run takes
about one minute in R on a i9 computer.\footnote{We do not claim optimality
of the computer code; we coded all algorithms in the same fashion for a
fair comparison.} To measure performance, we display the model-switching
acceptance rates, model-visit rates and relative increases in TV with respect
to the best sampler, which is Algorithm~\ref{proposed_RJ} with
$h(x)=\sqrt{x}$. The TVs are in between approximated model distributions
from RJ outputs and the posterior model PMF\footnote{We used accurate MCMC approximations
to the posterior model probabilities. We verified that the TV goes to 0
for all algorithms as the number of iterations increases.}.

The model-switching acceptance rates are the acceptance rates, but computed
considering only the iterations in which model switches are proposed. The
visit rates are the average number of model switches in one run, reported
per iteration. For both these measures, we count the number of accepted
model switches, and this number is divided by either the number of proposed
model switches or total number of iterations. These rates are thus similar
but they convey different information: model-switching acceptance rates
reflect the quality of the proposal distributions used during model switches,
while visit rates measure the propensity to propose a switch to another
model (and accept it). They together allow to understand why, for instance,
Algorithm~\ref{proposed_RJ} with $h(x)=\sqrt{x}$ is (slightly) better than
Algorithm~\ref{proposed_RJ} with $h(x)=x / (1 + x)$ in terms of TV. Indeed, even
if it has a (slightly) worse model-switch proposal scheme (model-switching
acceptance rates: 66\% vs 67\%), it proposes more often to switch models
(visit rates: 55\% vs 53\%), which has an impact on the TV.

\begin{table}
\caption{Performance of the uninformed and informed RJ in terms of model-switching
acceptance rate, model-visit rate and relative increase in TV with respect
to \protect Algorithm~\ref{proposed_RJ} with $h(x)=\sqrt{x}$}
\label{table_performances}
\centering
\begin{tabular}{l rrr}
\toprule
\textbf{Algorithms} & \textbf{Acc. rate} & \textbf{Visit rate} & \textbf{Rel. increase in TV} \cr
\midrule
Algorithm~\ref{proposed_RJ} w. $h(x)=\sqrt{x}$ & 66\% & 55\% & --- \cr
\midrule
Algorithm~\ref{proposed_RJ} w. $h(x)=x / (1 + x)$ & 67\% & 53\% & 2\% \cr
Uninformed RJ & 30\% & 27\% & 28\% \cr
Algorithm~\ref{proposed_RJ} w. $h(x) = x$ & 57\% & 46\% & 40\% \cr
\bottomrule
\end{tabular}
\end{table}

This robust-variable-selection application shows that even in a moderate-size
example, in this case with 8 covariates and 256 models, the locally-balanced
proposal distributions lead to RJ that can outperform RJ proposing models
uniformly at random, \emph{at fixed computational budget}. In contrast,
the globally-balanced proposal distribution does not yield an improvement
that is significant enough to compensate for the decrease in number of
iterations. We observed that even when the RJ with the function
$h(x) = x$ is run for the same number of iterations as the uniformed one,
the former only slightly outperforms the latter.

For the informed RJ to be effective, the approximations on which they rely
have to be accurate. Recall that their accuracy depends on the sample size
(at least in some scenarios regarding the approximations of
$\pi (\, \cdot \mid k', \mathbf{D}_{n})$ by $q_{k \mapsto k'}$, as explained
in Section~\ref{sec_asymptotic_result}). In this example, $n = 97$. To understand
if this is actually large for such a robust-linear-regression problem with
8 covariates, we can compare the model-switching acceptance rates of Algorithm~\ref{proposed_RJ}
with those of the ideal RJ they approximate. Recall
that ideal RJ set
$g(k, k') \propto h\left (\pi (k' \mid \mathbf{D}_{n}) \big / \pi (k
\mid \mathbf{D}_{n})\right )$ and
$q_{k \mapsto k'} = \pi (\, \cdot \mid k', \mathbf{D}_{n})$. It is possible
to evaluate their model-switching acceptance rates because these correspond
to the acceptance rates of MH samplers targeting the PMF
$\pi (\, \cdot \mid \mathbf{D}_{n})$ using
$g(k, k') \propto h\left (\pi (k' \mid \mathbf{D}_{n}) \big / \pi (k
\mid \mathbf{D}_{n})\right )$, computed considering only iterations in
which $k' \neq k$ is proposed. When $h(x)=x / (1 + x)$, the acceptance
rate of the MH sampler is 88\%, which is sufficiently close to 1 to suggest
that the asymptotic regime presented in \cite{zanella2019informed} where
the size of the discrete space increases and the acceptance probabilities
converge to 1 is nearly reached. To understand which approximations explain
the gap between Algorithm~\ref{proposed_RJ} with $h(x)=x / (1 + x)$ and a model-switching
acceptance rate of 67\% and its ideal counterpart with a rate of 88\%,
we can look at the model-switching acceptance rate of a RJ using methods
to sample parameter proposals from a distribution closer to
$\pi (\, \cdot \mid k', \mathbf{D}_{n})$ and next at that of a RJ additionally
using a method making $g(k, k')$ closer to being proportional to
$h\left (\pi (k' \mid \mathbf{D}_{n}) \big / \pi (k \mid \mathbf{D}_{n})
\right )$ (recall the overview of the methods presented in Section~\ref{sec_overview_methods_improve}). It is 84\% for the first RJ (with
$T = N = 10$), and it increases to 87\% for the second RJ. Using such methods
simply make RJ too computationally expensive to be competitive in this
example. Recall that their precision parameters can be increased to make
the second RJ arbitrarily close to the ideal one with a model-switching
acceptance rate of 88\%.

These results suggest that $n$ is of moderate size, in the sense that the
approximations used in Algorithm~\ref{proposed_RJ} are good enough to yield an
improvement, but it is expected that the improvement would be even more
marked if we had access to more data points. Also, the results suggest
that the approximations explaining the most part of the gap between Algorithm~\ref{proposed_RJ} with $h(x)=x / (1 + x)$ and a model-switching acceptance
rate of 67\% and its ideal counterpart with a rate of 88\% are the approximations
of $\pi (\, \cdot \mid k', \mathbf{D}_{n})$ by $q_{k \mapsto k'}$. We investigated
to see if there are significant differences between the covariance matrices
used in $q_{k \mapsto k'}$ and those of
$\pi (\, \cdot \mid k', \mathbf{D}_{n})$, and it is not the case. This
means that the approximations of
$\pi (\, \cdot \mid k', \mathbf{D}_{n})$ by $q_{k \mapsto k'}$ are not
as accurate as we would like because the densities
$\pi (\, \cdot \mid k', \mathbf{D}_{n})$ are significantly different from
bell curves. Our analysis also allows to rule out an issue with using the
same observed covariance matrix as the normal regression in the robust
LPTN framework. In fact, a robust-linear-regression analysis reveals the
presence of outliers, but only for models with negligible posterior mass
(both in the robust and non-robust frameworks). We finally note that, for
models with non-negligible posterior mass, there is no gross violation
of the assumptions underlying linear regression.

\begin{figure}[ht]
\centering
\includegraphics[width = 0.6\textwidth]{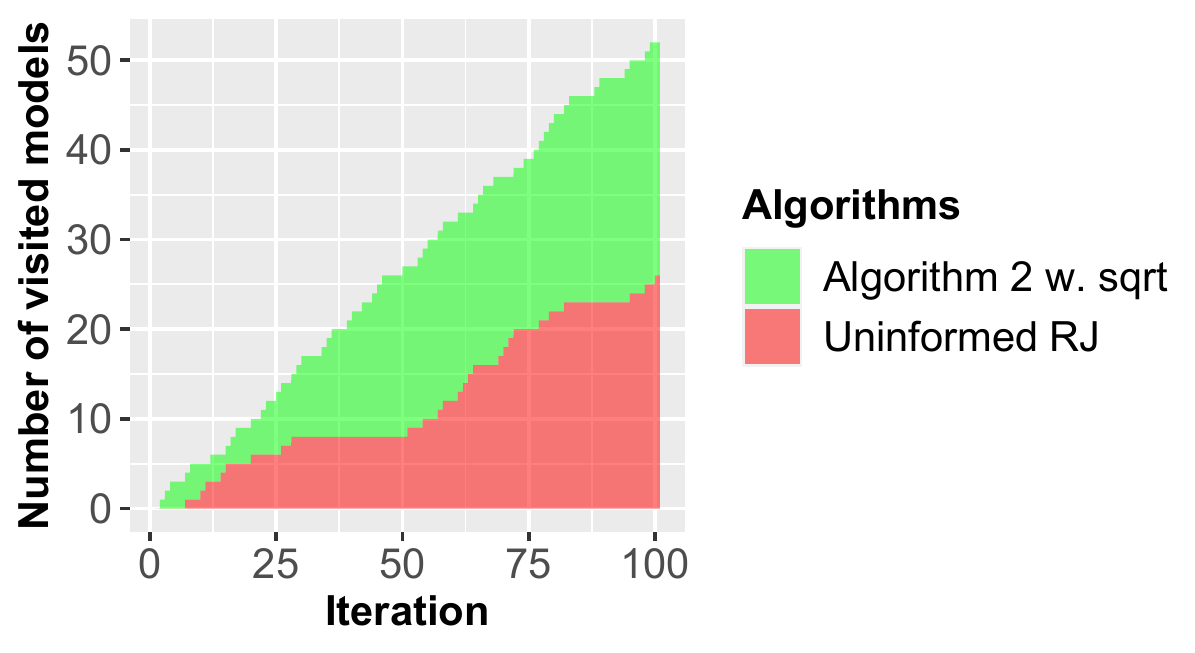}
\caption{Number of visited models as a function of the iteration
number in a typical run of the uninformed RJ and \protect
Algorithm~\ref{proposed_RJ} with $h(x)=\sqrt{x}$}
\label{fig_2}\vspace*{-3.5pt}
\end{figure}
\normalsize

\section{Discussion}\label{sec_discussion}

In this paper, we proposed simple fully-informed RJ and described the situations
in which they are expected to be efficient: when the sample size is sufficiently
large, Assumptions \ref{ass1}--\ref{ass2} of Section~\ref{sec_asymptotic_result} are verified, some models among
those considered well approximate\vadjust{\goodbreak} the true data generating process and Bernstein-von Mises theorems hold for these models. Informed
proposal distributions are crucial when the model probabilities and parameter
densities vary significantly within neighbourhoods of states of the Markov
chains. They are expected to vary significantly in the large-sample regime
in which the target concentrates. But we noticed with our numerical example
that they may vary greatly even when this large-sample regime is not reached.
The proposed RJ show major improvements in this example as the chains spend
less iterations at the same state, compared with the RJ that naively proposes
models uniformly at random and thus often tries to reach low probability
models, leading to higher rejection rates.\looseness=-1

Yet, the proposed samplers are reversible which allows them to return to
recently visited models often. The next step in the line of research of
trans-dimensional samplers for non-nested model selection is to propose
sampling schemes which do not suffer from this diffusive behaviour, but
instead induce persistent movement in the model-indicator process. A first
step in this direction has recently been made by
\cite{gagnon2020lifted}, but their approach cannot be used in all contexts
of non-nested model selection. It can nevertheless be applied in any variable-selection
problem. It is however expected to be efficient when the posterior model
mass is diffused over a large number of models with different number of
covariates, which is not the case in our numerical example.\vspace*{-3.5pt}

\bibliographystyle{rss}
\bibliography{reference}

\appendix

\section{Improving the approximations}\vspace*{-3.5pt}
\label{sec_improve_approx}

In practice, the sample size may not be large enough for the approximations
in the proposal mechanisms to be accurate. In Sections \ref{sec_andrieu_2013} and \ref{sec_andrieu_2018}, we present the details
of the methods reviewed in Section~\ref{sec_overview_methods_improve} that
allow to improve upon the normal-distribution approximations to
$\pi (\, \cdot \mid k, \mathbf{D}_{n})$. These methods turn out to be useful
for improving upon the approximations forming the model proposal distributions
$g(k, \cdot \,)$ as well; this is explained in detail in Section~\ref{sec_improve_models}.\vspace*{-3.5pt}

\subsection{RJ with the method of \cite{karagiannis2013annealed}}\vspace*{-3.5pt}
\label{sec_andrieu_2013}

In the definition of the annealing intermediate distributions
$\rho _{k \mapsto k'}^{(t)}$ in \eqref{eqn_def_rho}, it is considered that
$\mathcal{D}_{k\mapsto k'}(\mathbf{x}_{k}, \mathbf{u}_{k\mapsto k'}) =
(\mathbf{u}_{k\mapsto k'}, \mathbf{x}_{k}) = (\mathbf{y}_{k'},
\mathbf{u}_{k' \mapsto k})$, implying that\vadjust{\goodbreak}
$|J_{\mathcal{D}_{k\mapsto k'}}(\mathbf{x}_{k}, \mathbf{u}_{k\mapsto k'})|
= 1$, and
$q_{k'\mapsto k} = \varphi (\, \cdot \,; \hat{\mathbf{x}}_{k},
\hat{\mathcal{I}}_{k}^{-1})$. For the general definition, see \cite{karagiannis2013annealed}.

We now present in Algorithm~\ref{algo_RJ_andrieu_2013} an informed RJ incorporating
the method of \cite{karagiannis2013annealed}. Recall that the acceptance
probability in this RJ is defined in \eqref{eqn_acc_andrieu_2013}. Note
that Algorithm~\ref{algo_RJ_andrieu_2013} corresponds to Algorithm~\ref{proposed_RJ} when $T=1$; no path is sampled in Step 2.(b).

\begin{algorithm}[ht]
\caption{RJ incorporating the method of \cite{karagiannis2013annealed}}%
\label{algo_RJ_andrieu_2013}
\begin{enumerate}
 \itemsep 0mm

  \item[1.] Sample $k' \sim g(k, \cdot \,)$ with $g(k, k') \propto h\left (\hat{\pi}(k' \mid \mathbf{D}_{n}) \big/ \hat{\pi}(k \mid \mathbf{D}_{n})\right )$ for all $k'\in \mathbf{N}(k)$.

  \item[2.(a)] If $k' = k$, attempt a parameter update using a MCMC kernel of invariant distribution $\pi (\, \cdot \mid k, \mathbf{D}_{n})$ while keeping the value of the model indicator k fixed.

  \item[2.(b)] If $k' \neq k$, sample $\mathbf{y}_{k'}^{(0)} \sim \varphi (\, \cdot \,; \hat{\mathbf{x}}_{k'}, \hat{\mathcal{I}}_{k'}^{-1})$ and $u \sim \mathcal{U}[0, 1]$, and set $\mathbf{x}_{k}^{(0)}:=\mathbf{x}_{k}$. Sample a path $(\mathbf{x}_{k}^{(1)}, \mathbf{y}_{k'}^{(1)}),\ldots ,(\mathbf{x}_{k}^{(T-1)}, \mathbf{y}_{k'}^{(T-1)})$, where $(\mathbf{x}_{k}^{(t)}, \mathbf{y}_{k'}^{(t)})\sim K_{k \mapsto k'}^{(t)}((\mathbf{x}_{k}^{(t-1)}, \mathbf{y}_{k'}^{(t-1)}), \cdot \,)$. If $u  \leq \alpha _{\text{RJ2}}((k,\mathbf{x}_{k}^{(0)}),(k',\mathbf{y}_{k'}^{(T-1)}))$ set the next state of the chain to $(k',\mathbf{y}_{k'}^{(T-1)})$. Otherwise, set it to $(k,  \mathbf{x}_{k})$.

  \item[3.] Go to Step 1.
 \end{enumerate}
\end{algorithm}

Using the same proof technique as in \cite{gagnon2019NRJ}, it can be proved
that under regularity conditions a Markov chain simulated by Algorithm~\ref{algo_RJ_andrieu_2013} converges weakly to that of a RJ which is
able to sample from $\pi (\,\cdot \,\mid k, \mathbf{D}_{n})$ for all
$k$ with acceptance probabilities $\alpha _{\text{MH}}$ but using
$g(k, k') \propto h\left (\hat{\pi }(k' \mid \mathbf{D}_{n}) \big /
\hat{\pi }(k \mid \mathbf{D}_{n})\right )$, as
$T\longrightarrow \infty $, for fixed $n$. Note that the weak convergence
in this case is not in probability because the target is considered non-random
(contrarily to the framework under which Theorem~\ref{thm_convergence} is stated).

As $T$ increases, it is expected that the acceptance probabilities increase
steadily towards $\alpha _{\text{MH}}$ until convergence is reached. A RJ
user thus has to find a balance between a RJ close to be ideal and the
computational cost associated to this. As explained in
\cite{karagiannis2013annealed}, the computational cost scales linearly
with $T$. In some situations, the improvement as a function of $T$ is very
marked for $T \leq T_{0}$, effectively offsetting the computational cost.
This is not the case in our numerical example.

\cite{karagiannis2013annealed} prove that under two conditions Algorithm~\ref{algo_RJ_andrieu_2013} is valid, in the sense that the target distribution
is an invariant distribution. These conditions are the following.
\begin{description}
\item[Symmetry condition:] For $t=1,\ldots ,T-1$ the pairs of transition
kernels $K_{k \mapsto k'}^{(t)}(\,\cdot \,, \cdot \,)$ and
$K_{k' \mapsto k}^{(T-t)}(\,\cdot \,, \cdot \,)$ satisfy
\begin{align}
\label{eqn_symmetry}
K_{k \mapsto k'}^{(t)}((\mathbf{x}_{k}, \mathbf{y}_{k'}), \cdot \,) = K_{k'
\mapsto k}^{(T-t)}((\mathbf{x}_{k}, \mathbf{y}_{k'}), \cdot \,)
\quad \text{for any } (\mathbf{x}_{k}, \mathbf{y}_{k'}).
\end{align}
\item[Reversibility condition:] For $t=1,\ldots ,T-1$, and for any
$(\mathbf{x}_{k},\mathbf{y}_{k'})$ and
$(\mathbf{x}_{k}', \mathbf{y}_{k'}')$,
\begin{align}
\label{eqn_reversibility}
&\rho _{k\mapsto k'}^{(t)}(\mathbf{x}_{k},\mathbf{y}_{k'}) \, K_{k
\mapsto k'}^{(t)}((\mathbf{x}_{k}, \mathbf{y}_{k'}), (\mathbf{x}_{k}',
\mathbf{y}_{k'}'))
 =\rho _{k\mapsto k'}^{(t)}(\mathbf{x}_{k}',\mathbf{y}_{k'}')
\, K_{k \mapsto k'}^{(t)}((\mathbf{x}_{k}', \mathbf{y}_{k'}'), (
\mathbf{x}_{k}, \mathbf{y}_{k'})).
\end{align}
\end{description}

As mentioned in \cite{karagiannis2013annealed}, \eqref{eqn_symmetry} is
verified if for all $t$,
$K_{k \mapsto k'}^{(t)}(\,\cdot \,, \cdot \,)$ and
$K_{k' \mapsto k}^{(T-t)}(\,\cdot \,, \cdot \,)$ are MH kernels sharing
the same proposal distributions. \eqref{eqn_reversibility} is satisfied
if $K_{k \mapsto k'}^{(t)}(\,\cdot \,, \cdot \,)$ is a MH kernel targeting
$\rho _{k\mapsto k'}^{(t)}$. We recommend to use MALA (Metropolis-adjusted
Langevin algorithm, \cite{roberts1996exponential}) proposal distributions
whenever this is possible; see \cite{karagiannis2013annealed} for other
examples. If MALA proposal distributions are used, a step size is required
for each pair $(k, k')$ defining a transition from Model $k$ to Model
$k'$. For any of these model switches, we recommend to set the step size
to $\ell / (d_{k} + d_{k'})^{1/6}$, according to the findings in
\cite{roberts1998optimal}, and to tune $\ell $. To achieve the latter,
do a line search with a fixed value of $T$ and choose $\ell $ that maximizes
the model-switching acceptance rate, so that the resulting RJ is the closest
to the ideal one.

\subsection{RJ with additionally the method of \cite{andrieu2018utility}}
\label{sec_andrieu_2018}

Denote the $N$ estimates of
$\pi (k'\mid \mathbf{D}_{n})/\pi (k\mid \mathbf{D}_{n})$ produced by the
scheme of \cite{karagiannis2013annealed} by
$r_{\text{RJ2}}((k,\mathbf{x}_{k}^{(0)}),(k',\mathbf{y}_{k'}^{(T-1,1)})),
\ldots , r_{\text{RJ2}}((k $,
$\mathbf{x}_{k}^{(0)}),(k',\mathbf{y}_{k'}^{(T-1,N)}))$. Denote the average
(with simplified notation) by
\begin{equation*}
\bar{r}(k, k'):=\frac{1}{N} \sum _{j=1}^{N} r_{\text{RJ2}}((k,
\mathbf{x}_{k}^{(0)}),(k',\mathbf{y}_{k'}^{(T-1,j)})).
\end{equation*}

We now present in Algorithm~\ref{algo_RJ_andrieu_2018} the RJ additionally incorporating
the method of \cite{andrieu2018utility}.

\begin{algorithm}[ht]
 \caption{RJ additionally incorporating the method of \cite{andrieu2018utility}}
\label{algo_RJ_andrieu_2018}
{\setlength\leftmargini{30pt}
\begin{enumerate}
 \itemsep 0mm

  \item[1.] Sample $k' \sim g(k, \cdot \,)$ with $g(k, k') \propto h\left (\hat{\pi}(k' \mid \mathbf{D}_{n}) \big/ \hat{\pi}(k \mid \mathbf{D}_{n})\right )$ for all $k'\in \mathbf{N}(k)$.

  \item[2.(a)] If $k' = k$, attempt a parameter update using a MCMC kernel of invariant distribution $\pi (\, \cdot \mid k, \mathbf{D}_{n})$ while keeping the value of the model indicator k fixed.

  \item[2.(b)] If $k'\neq k$, generate $u_{a}, u_{c}\sim \mathcal{U}(0, 1)$. If $u_{c}\leq 1/2$ go to Step 2.(b-i), otherwise go to Step 2.(b-ii).

  \item[2.(b-i)] Sample $N$ proposals $\mathbf{y}_{k'}^{(T-1, 1)},\ldots ,\mathbf{y}_{k'}^{(T-1, N)}$ as in Step 2.(b) of \autoref{algo_RJ_andrieu_2013}. Sample $j^{*}$ from a PMF such that $\mathbb {P}(J^{*}=j) \propto r_{\text{RJ2}}((k,\mathbf{x}_{k}), (k', \mathbf{y}_{k'}^{(T-1,j)}))$. If
\begin{align*}
   u_{a}  &\leq \frac{g(k',k)}{g(k,k')} \, \bar{r}(k, k'),
  \end{align*}
set the next state of the chain to $(k',\mathbf{y}_{k'}^{(T-1,j^{*})})$. Otherwise, set it to $(k,  \mathbf{x}_{k})$.

  \item[2.(b-ii)] Sample one forward path as in Step 2.(b) of \autoref{algo_RJ_andrieu_2013}. Denote the endpoint by $\mathbf{y}_{k'}^{(T-1, 1)}$. From $\mathbf{y}_{k'}^{(T-1, 1)}$,  sample $N - 1$ reverse paths again as in Step 2.(b) of \autoref{algo_RJ_andrieu_2013}, yielding $N - 1$ proposals for the parameters of Model $k$. If
\begin{align*}
   u_{a}  &\leq \frac{g(k',k)}{g(k,k')} \, \bar{r}(k', k)^{-1},
  \end{align*}
set the next state of the chain to $(k', \mathbf{y}_{k'}^{(T-1,1)})$. Otherwise, set it to $(k,  \mathbf{x}_{k})$.

  \item[3.] Go to Step 1.
 \end{enumerate}}
\end{algorithm}

No additional assumptions to those presented in Section~\ref{sec_andrieu_2013} are required to guarantee that Algorithm~\ref{algo_RJ_andrieu_2018} is valid. \cite{andrieu2018utility} prove
that increasing $N$ decreases monotonically the asymptotic variances of
the Monte Carlo estimates produced by RJ incorporating their approach.

It is expected that increasing $N$ (as increasing $T$ in the last section)
leads to a steady increase in the acceptance probabilities until convergence
is reached. As with Algorithm~\ref{algo_RJ_andrieu_2013}, there exists a balance
between a RJ close to be ideal and the computational cost associated to
this. An advantage of the approach presented in this section is that several
operations can be executed in parallel, so that the computational burden
is alleviated. The computational cost (over that of using the approach
of \cite{karagiannis2013annealed}) nevertheless depends on $N$ because
computational overheads have to be taken into account. In our numerical
example, the improvement as a function of $N$ is not significant enough
to offset the computational cost.

\subsection{Improving the model proposal distribution}
\label{sec_improve_models}

We have seen in Section~\ref{sec_andrieu_2018} that $\bar{r}(k, k')$ and the
ratios $r_{\text{RJ2}}$ forming it are estimators of
$\pi (k'\mid \mathbf{D}_{n})/\pi (k\mid \mathbf{D}_{n})$. They can thus
be used to enhance the approximations to the posterior probability ratios
in $g(k, \cdot \,)$.

If we want to enhance the PMF $g(k, \cdot \,)$, we need to improve
$\widehat{\pi }(l\mid \mathbf{D}_{n}) / \widehat{\pi }(k\mid \mathbf{D}_{n})$
for all $l\in \mathbf{N}(k)$ as these are all involved in the construction
of the PMF. Also, once $k'$ has been sampled, we need to do the same for
$g(k', \cdot \,)$ given that this PMF comes into play in the computation
of the acceptance probabilities (see, e.g., Algorithm~\ref{algo_RJ_andrieu_2018}). We thus need parameter proposals
$\mathbf{y}_{l}^{(T-1,1)}, \ldots , \mathbf{y}_{l}^{(T-1,N)}$ for all Models
$l\in \mathbf{N}(k)$, and also for all models belonging to
$\mathbf{N}(k')$. The ratios $r_{\text{RJ2}}$ are next computed.

There are several ways to combine these ratios with
$\widehat{\pi }(l\mid \mathbf{D}_{n}) / \widehat{\pi }(k\mid \mathbf{D}_{n})$
and
$\widehat{\pi }(s\mid \mathbf{D}_{n}) / \widehat{\pi }(k'\mid
\mathbf{D}_{n})$ to improve the estimation of
$\pi (l\mid \mathbf{D}_{n}) / \pi (k\mid \mathbf{D}_{n})$ and
$\pi (s\mid \mathbf{D}_{n}) / \pi (k'\mid \mathbf{D}_{n})$. We define the
improved version of the PMF $g$ as follows to reflect this flexibility:
\begin{align}
\label{eqn_g_imp}
g_{\text{imp.}}(k, k', \mathbf{x}_{k \mapsto \bullet }^{(0:T-1,
\bullet )}, \mathbf{y}_{\bullet }^{(0:T-1, \bullet )}) := h\left (
\frac{\tilde{\pi }(k'\mid \mathbf{D}_{n})}{\tilde{\pi }(k\mid \mathbf{D}_{n})}
\right ) \bigg / c_{\text{imp.}}(k),
\end{align}
where
\begin{align*}
\frac{\tilde{\pi }(k'\mid \mathbf{D}_{n})}{\tilde{\pi }(k\mid \mathbf{D}_{n})}
&:= \varrho \left (
\frac{\widehat{\pi }(k'\mid \mathbf{D}_{n})}{\widehat{\pi }(k\mid \mathbf{D}_{n})},
r_{\text{RJ2}}((k,\mathbf{x}_{k}^{(0)}),(k',\mathbf{y}_{k'}^{(T-1,1)})),
\ldots , r_{\text{RJ2}}((k,\mathbf{x}_{k}^{(0)}),(k',\mathbf{y}_{k'}^{(T-1,N)}))
\right ),
\end{align*}
$\mathbf{y}_{\bullet }^{(0:T-1, \bullet )}$ is the vector containing
$\mathbf{y}_{l}^{(0, j)}, \ldots , \mathbf{y}_{l}^{(T-1, j)}$ for all
$j\in \{1,\ldots ,N\}$ and $l\in \mathbf{N}(k)$, and
$c_{k}^{\text{imp.}}$ is the normalizing constant. $\varrho $ is a function
aiming at putting together the information; its choice is discussed below.
$\mathbf{x}_{k \mapsto \bullet }^{(0:T-1, \bullet )}$ is the vector containing
$\mathbf{x}_{k \mapsto l}^{(0, j)}, \ldots , \mathbf{x}_{k \mapsto l}^{(T-1,
j)}$ for all $j\in \{1,\ldots ,N\}$ and $l\in \mathbf{N}(k)$. Here we added
the subscript $k \mapsto l$ and superscript $(t, j)$ to highlight that
the sequence generated using Step 2.(b) of Algorithm~\ref{algo_RJ_andrieu_2013} depends on $j$ (different sequences are
generated for different $j$), and also on $l$ through
$\rho _{k \mapsto l}^{(t)}$ \eqref{eqn_def_rho}. Note that
$\tilde{\pi }(l\mid \mathbf{D}_{n}) / \tilde{\pi }(k\mid \mathbf{D}_{n})$
is an estimator of
$\pi (l\mid \mathbf{D}_{n}) / \pi (k\mid \mathbf{D}_{n})$ which is in fact
a function of
$\mathbf{x}_{k \mapsto l}^{(0, j)}, \ldots , \mathbf{x}_{k \mapsto l}^{(T-1,
j)}$ and
$\mathbf{y}_{l}^{(0, j)}, \ldots , \mathbf{y}_{l}^{(T-1, j)}$ for all
$j\in \{1,\ldots ,N\}$ additionally to $k$ and $l$; we used this notation
to simplify and make the connection with
$\widehat{\pi }(l\mid \mathbf{D}_{n}) / \widehat{\pi }(k\mid \mathbf{D}_{n})$.

Algorithm~\ref{algo_RJ_imp_model} includes the idea of improving
$\widehat{\pi }(l\mid \mathbf{D}_{n}) / \widehat{\pi }(k\mid \mathbf{D}_{n})$
using ratios $r_{\text{RJ2}}$ in a valid way (as indicated by Proposition~\ref{prop_inv_algo_model_imp} below). This algorithm is quite complicated,
but once a code for Algorithm~\ref{algo_RJ_andrieu_2018} has been written, we
can use the connections between Algorithm~\ref{algo_RJ_imp_model} and Algorithm~\ref{algo_RJ_andrieu_2018} to facilitate the coding of the former.
Another drawback of Algorithm~\ref{algo_RJ_imp_model} is that it requires to
perform the computations for $g_{\text{imp.}}(k', \cdot \,)$ even when
$k'=k$. This is because
$g_{\text{imp.}}(k, k', \mathbf{x}_{k \mapsto \bullet }^{(0:T-1,
\bullet )}, \mathbf{y}_{\bullet }^{(0:T-1, \bullet )})$ is different from
$g_{\text{imp.}}(k', k, \mathbf{y}_{k' \mapsto \bullet }^{(0:T-1,j^{*})}
$,
$\mathbf{z}_{\bullet }^{(0:T-1, j^{*})}, \mathbf{z}_{\bullet }^{(0:T-1,
\bullet )})$ (see Step 2.(i)), even when $k' = k$. At least, most of computations
for the two main steps (Steps 2.(i) and 2.(ii)) can be performed in parallel.
Again, computational overheads have to be taken into account. In our numerical
example, the improvement is not significant enough to offset the computational
cost.

\begin{algorithm}[H]
\caption{RJ additionally improving the model proposal distribution}
\label{algo_RJ_imp_model}
\begin{enumerate}
 \itemsep 0mm

  \item[1.]  Sample $u_{a}, u_{c}\sim \mathcal{U}(0, 1)$. If $u_{c}\leq 1/2$ go to Step 2.(i), otherwise go to Step 2.(ii).

  \item[2.(i)] Do:

  \begin{enumerate}
    \itemsep 0mm

    \item[(a)] For all $l\in \mathbf{N}(k)$, sample proposals as in Step 2.(b-i) of \autoref{algo_RJ_andrieu_2018}: from $\mathbf{x}_{k}$, sample $N$ proposals $\mathbf{y}_{l}^{(T-1,1)}, \ldots , \mathbf{y}_{l}^{(T-1,N)}$ as in Step 2.(b) of \autoref{algo_RJ_andrieu_2013}.

    \item[(b)] Compute $\tilde{\pi}(l\mid \mathbf{D}_{n})\big/\tilde{\pi}(k\mid \mathbf{D}_{n})$ for all $l\in \mathbf{N}(k)$ and $g_{\text{imp.}}(k, \cdot \,)$ as in \eqref{eqn_g_imp}.

    \item[(c)] Sample $k'\sim g_{\text{imp.}}(k, \cdot \,)$ and $j^{*}$ from a PMF such that $\mathbb {P}(J^{*}=j) \propto  r_{\text{RJ2}}((k,\mathbf{x}_{k}^{(0)}),(k',\mathbf{y}_{k'}^{(T-1,j)}))$, and compute $\bar{r}(k, k')$.

    \item[(d)] For all $s\in \mathbf{N}(k')\setminus \{k\}$, sample proposals as in Step 2.(b-ii) of \autoref{algo_RJ_andrieu_2018}: from $\mathbf{y}_{k'}^{(T-1,j^{*})}$, sample one path as in Step 2.(b) of \autoref{algo_RJ_andrieu_2013}. Denote the endpoint by $\mathbf{z}_{s}^{(0, j^{*})}$. From $\mathbf{z}_{s}^{(0, j^{*})}$,  sample $N - 1$ reverse paths again as in Step 2.(b) of \autoref{algo_RJ_andrieu_2013}, yielding $N - 1$ proposals for the parameters of Model $k'$ that we denote by $\mathbf{z}_{s \mapsto k'}^{(T-1, j)}$, $j \in \{1, \ldots , N\} \setminus \{j^{*}\}$.

    \item[(e)]  Compute $\tilde{\pi}(s\mid \mathbf{D}_{n})\big/\tilde{\pi}(k'\mid \mathbf{D}_{n})$ for all $s\in \mathbf{N}(k')\setminus \{k\}$. Compute $g_{\text{imp.}}(k', \cdot \,)$ using these and $(\tilde{\pi}(k'\mid \mathbf{D}_{n})\big/\tilde{\pi}(k\mid \mathbf{D}_{n}))^{-1}$ computed in (b).

    \item[(f)]  If
\begin{align*}
   u_{a}  &\leq \frac{g_{\text{imp.}}(k', k, \mathbf{y}_{k' \mapsto \bullet}^{(0:T-1,j^{*})}, \mathbf{z}_{\bullet}^{(0:T-1, j^{*})}, \mathbf{z}_{\bullet}^{(0:T-1, \bullet )})}{g_{\text{imp.}}(k, k', \mathbf{x}_{k \mapsto \bullet}^{(0:T-1, \bullet )},  \mathbf{y}_{\bullet}^{(0:T-1, \bullet )})} \, \bar{r}(k, k'),
  \end{align*}
set the next state of the chain to $(k',\mathbf{y}_{k'}^{(T-1,j^{*})})$. Otherwise, set it to $(k,  \mathbf{x}_{k})$.

  \end{enumerate}

  \item[2.(ii)]  Do:

  \begin{enumerate}
    \itemsep 0mm

    \item[(a)] For all $l\in \mathbf{N}(k)$, sample proposals as in Step 2.(b-ii) of \autoref{algo_RJ_andrieu_2018}: from $\mathbf{x}_{k}$, sample one path as in Step 2.(b) of \autoref{algo_RJ_andrieu_2013}. Denote the endpoint by $\mathbf{y}_{l}^{(T-1, 1)}$. From $\mathbf{y}_{l}^{(T-1, 1)}$,  sample $N - 1$ reverse paths again as in Step 2.(b) of \autoref{algo_RJ_andrieu_2013}, yielding $N - 1$ proposals for the parameters of Model $k$ that we denote by $\mathbf{y}_{l \mapsto k}^{(0, 2)}, \ldots , \mathbf{y}_{l \mapsto k}^{(0, N)}$.

     \item[(b)] Compute $\tilde{\pi}(l\mid \mathbf{D}_{n})\big/\tilde{\pi}(k\mid \mathbf{D}_{n})$ for all $l\in \mathbf{N}(k)$ and $g_{\text{imp.}}(k, \cdot \,)$ as in \eqref{eqn_g_imp}.

    \item[(c)] Sample $k'\sim g_{\text{imp.}}(k, \cdot \,)$ and compute $\bar{r}(k', k)^{-1}$.

     \item[(d)] For all $s\in \mathbf{N}(k')\setminus \{k\}$, sample proposals as in Step 2.(b-i) of \autoref{algo_RJ_andrieu_2018}: from $\mathbf{y}_{k'}^{(T-1, 1)}$, sample $N$ proposals $\mathbf{z}_{s}^{(0,1)}, \ldots , \mathbf{z}_{s}^{(0,N)}$ as in Step 2.(b) of \autoref{algo_RJ_andrieu_2013}.

     \item[(e)]  Compute $\tilde{\pi}(s\mid \mathbf{D}_{n})\big/\tilde{\pi}(k'\mid \mathbf{D}_{n})$ for all $s\in \mathbf{N}(k')\setminus \{k\}$. Compute $g_{\text{imp.}}(k', \cdot \,)$ using these and $(\tilde{\pi}(k'\mid \mathbf{D}_{n})\big/\tilde{\pi}(k\mid \mathbf{D}_{n}))^{-1}$ computed in (b).

     \item[(f)] If
\begin{align*}
   u_{a}  &\leq \frac{g_{\text{imp.}}(k', k, \mathbf{y}_{k' \mapsto \bullet}^{(0:T-1, 1)}, \mathbf{z}_{\bullet}^{(0:T-1, \bullet )})}{g_{\text{imp.}}(k, k', \mathbf{x}_{k \mapsto \bullet}^{(0:T-1, 1)},   \mathbf{y}_{\bullet}^{(0:T-1, 1)}, \mathbf{y}_{\bullet \mapsto k}^{(0:T-1, \bullet )})} \, \bar{r}(k', k)^{-1},
  \end{align*}
set the next state of the chain to $(k',\mathbf{y}_{k'}^{(T-1, 1)})$. Otherwise, set it to $(k,  \mathbf{x}_{k})$.

  \end{enumerate}

  \item[3.] Go to Step 1.
 \end{enumerate}
\end{algorithm}

\begin{Proposition}%
\label{prop_inv_algo_model_imp}
Under the two assumptions presented in Section~\ref{sec_andrieu_2013}, \eqref{eqn_symmetry}--\eqref{eqn_reversibility}, Algorithm~\ref{algo_RJ_imp_model} is valid.
\end{Proposition}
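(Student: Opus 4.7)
The plan is to follow the pseudo-marginal/augmented-target strategy of \cite{andrieu2018utility}, adapted to the present setting where the same estimators of posterior-probability ratios are additionally fed into the model proposal distribution through $g_{\text{imp.}}$. Validity here means that $\pi(\,\cdot\,,\cdot \mid \mathbf{D}_{n})$ is invariant under the transition kernel of Algorithm~\ref{algo_RJ_imp_model}; I would establish this by lifting to an extended state-space that includes all the auxiliary paths generated in the two sub-steps, writing down an extended target, and showing that each sub-step is reversible with respect to it. Invariance of the original target then follows by marginalisation.

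First, I would write out explicitly the joint law of all the auxiliary path variables $\mathbf{x}_{k\mapsto\bullet}^{(0:T-1,\bullet)}$ and $\mathbf{y}_{\bullet}^{(0:T-1,\bullet)}$ conditional on $(k,\mathbf{x}_{k})$. Using the reversibility condition \eqref{eqn_reversibility} for each kernel $K_{k\mapsto k'}^{(t)}$ together with the symmetry condition \eqref{eqn_symmetry}, the telescoping identity that underlies the unbiasedness of $r_{\text{RJ2}}$ in \cite{karagiannis2013annealed} translates into the identity
\begin{equation*}
\pi(k,\mathbf{x}_{k}\mid \mathbf{D}_{n})\,\mathbf{Q}_{k\mapsto k'}^{\text{fwd}} \, r_{\text{RJ2}}((k,\mathbf{x}_{k}^{(0)}),(k',\mathbf{y}_{k'}^{(T-1)}))
= \pi(k',\mathbf{y}_{k'}^{(T-1)}\mid \mathbf{D}_{n})\,\mathbf{Q}_{k'\mapsto k}^{\text{bwd}},
\end{equation*}
where $\mathbf{Q}^{\text{fwd}}$ and $\mathbf{Q}^{\text{bwd}}$ denote the joint densities of forward and backward path constructions as defined in Steps 2.(i) and 2.(ii) of Algorithm~\ref{algo_RJ_andrieu_2013}. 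This is the key local balance identity that the Karagiannis--Andrieu construction was designed to produce.

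Next, I would treat Steps 2.(i) and 2.(ii) of Algorithm~\ref{algo_RJ_imp_model}, selected each with probability $1/2$, as a symmetric pair of sub-kernels on the extended state-space. In 2.(i), all forward paths from $(k,\mathbf{x}_{k})$ to every $l\in\mathbf{N}(k)$ are generated, $k'$ is sampled from $g_{\text{imp.}}(k,\cdot\,)$, an index $j^{\ast}$ is drawn with probability proportional to $r_{\text{RJ2}}$, and then backward paths from $\mathbf{y}_{k'}^{(T-1,j^{\ast})}$ to every $s\in\mathbf{N}(k')\setminus\{k\}$ are generated. Step 2.(ii) does the mirror construction starting from $(k',\mathbf{y}_{k'}^{(T-1,1)})$. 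Pairing 2.(i) with 2.(ii) under \eqref{eqn_symmetry} makes the joint auxiliary variables swap roles under the proposed move, so the acceptance ratios in 2.(i)(f) and 2.(ii)(f) — which involve $g_{\text{imp.}}(k',\cdot)/g_{\text{imp.}}(k,\cdot)$ together with $\bar{r}(k,k')$ or $\bar{r}(k',k)^{-1}$ — can be identified, after algebraic simplification, with the ratio of the extended target evaluated at the proposed and current augmented states divided by the ratio of the two path proposal densities.

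The main obstacle, and the bulk of the work, is the careful bookkeeping required to verify this last identity. One has to track (i) the normalising constants $c_{\text{imp.}}(k)$ and $c_{\text{imp.}}(k')$ appearing implicitly in $g_{\text{imp.}}$, (ii) the selection probability of $j^{\ast}$ which is itself proportional to $r_{\text{RJ2}}$, and (iii) the contributions of \emph{all} $l\in\mathbf{N}(k)$ and $s\in\mathbf{N}(k')\setminus\{k\}$ to the path densities. The pivotal observation is that the specific form of the combining function $\varrho$ is irrelevant to validity: it enters the numerator and denominator of the MH ratio symmetrically through $g_{\text{imp.}}(k,k',\cdots)$ and $g_{\text{imp.}}(k',k,\cdots)$ and is simply carried along. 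Once the extended-target reversibility of each of 2.(i) and 2.(ii) is verified, the mixture over $u_c$ yields a kernel that preserves the extended target, and marginalising the auxiliary paths out delivers invariance of $\pi(\,\cdot\,,\cdot\mid \mathbf{D}_{n})$ as claimed.
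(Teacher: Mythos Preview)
Your proposal is essentially correct and identifies the same mechanism as the paper's proof: the key is that a forward move through Step~2.(i) is balanced by a reverse move through Step~2.(ii), and vice versa. The paper carries this out by direct computation (working in the simplified case $T=1$, claiming the general case is similar), writing the transition kernel as a sum over the two sub-steps and showing, via an explicit change of integration variables and the identities
\[
\bar{q}_{k\mapsto\mathbf{N}(k)\setminus\{k'\}}=\tilde{\tilde{q}}_{k\mapsto\mathbf{N}(k)\setminus\{k'\}},\quad
\bar{q}_{k\mapsto k'}\cdot\frac{q_{k'\mapsto k}}{q_{k\mapsto k'}}=\tilde{q}_{k'\mapsto k},\quad
\bar{\bar{q}}_{k'\mapsto\mathbf{N}(k')\setminus\{k\}}=\tilde{q}_{k'\mapsto\mathbf{N}(k')\setminus\{k\}},
\]
that the Step~2.(i) contribution to $\pi(k,\mathbf{x}_k\mid\mathbf{D}_n)\,P((k,\mathbf{x}_k),(k',\mathrm{d}\mathbf{y}_{k'}))$ transforms into the Step~2.(ii) contribution to $\pi(k',\mathbf{y}_{k'}\mid\mathbf{D}_n)\,P((k',\mathbf{y}_{k'}),(k,\mathrm{d}\mathbf{x}_k))$. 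Your extended-target framing is a higher-level repackaging of the same argument; the paper's computation is the concrete realisation of the ``careful bookkeeping'' you describe.

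One wording caution: you write ``extended-target reversibility of each of 2.(i) and 2.(ii)'', but neither sub-kernel is individually reversible with respect to a natural target. The correct statement --- which both you and the paper ultimately rely on --- is the \emph{cross} detailed balance between 2.(i)-forward and 2.(ii)-reverse; the $1/2$ coin flip then symmetrises this into genuine reversibility of the mixture kernel. Your earlier ``pairing'' language has this right, so this is likely loose phrasing rather than a conceptual gap, but in a written-out proof you should make the cross-balance structure explicit.
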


It is natural to set
$\tilde{\pi }(l\mid \mathbf{D}_{n}) / \tilde{\pi }(k\mid \mathbf{D}_{n})$
to 1 when $l = k$. This implies that we in fact do not need to sample proposals
for Model $k$ in the first parts of Steps 2.(i) and 2.(ii). Also, in the
second parts of Steps 2.(i) and 2.(ii), it is not required to sample proposals
for $s = k'$ for the same reason. When $k' = k$, we actually perform a
``vanilla'' parameter-update step (an HMC step in our numerical example).
In this case, the MH ratio as to be multiplied by
\begin{equation*}
\frac{g_{\text{imp.}}(k', k, \mathbf{y}_{k' \mapsto \bullet }^{(0:T-1,j^{*})}, \mathbf{z}_{\bullet }^{(0:T-1, j^{*})}, \mathbf{z}_{\bullet }^{(0:T-1, \bullet )})}{g_{\text{imp.}}(k, k', \mathbf{x}_{k \mapsto \bullet }^{(0:T-1, \bullet )}, \mathbf{y}_{\bullet }^{(0:T-1, \bullet )})}
\end{equation*}
or
\begin{equation*}
\frac{g_{\text{imp.}}(k', k, \mathbf{y}_{k' \mapsto \bullet }^{(0:T-1, 1)}, \mathbf{z}_{\bullet }^{(0:T-1, \bullet )})}{g_{\text{imp.}}(k, k', \mathbf{x}_{k \mapsto \bullet }^{(0:T-1, 1)}, \mathbf{y}_{\bullet }^{(0:T-1, 1)}, \mathbf{y}_{\bullet \mapsto k}^{(0:T-1, \bullet )})},
\end{equation*}
depending if Step 2.(i) or Step 2.(ii) is used.

The function $\varrho $ in \eqref{eqn_g_imp} specifies the way the information
is combined. It may be set for instance to the simple average:
\begin{align}
\label{eqn_est_ratio}
\frac{\tilde{\pi }(l\mid \mathbf{D}_{n})}{\tilde{\pi }(k\mid \mathbf{D}_{n})}
= \frac{1}{N + 1} \left (
\frac{\widehat{\pi }(l\mid \mathbf{D}_{n})}{\widehat{\pi }(k\mid \mathbf{D}_{n})}
+ \sum _{j=1}^{N} r_{\text{RJ2}}((k,\mathbf{x}_{k}^{(0)}),(l,
\mathbf{y}_{l}^{(T-1,j)}))\right ).
\end{align}
One may alternatively take the average of
$\widehat{\pi }(l\mid \mathbf{D}_{n}) / \widehat{\pi }(k\mid \mathbf{D}_{n})$
and $\bar{r}(k, l)$:
\begin{align*}
\frac{\tilde{\pi }(l\mid \mathbf{D}_{n})}{\tilde{\pi }(k\mid \mathbf{D}_{n})}
&= \frac{1}{2} \left (
\frac{\widehat{\pi }(l\mid \mathbf{D}_{n})}{\widehat{\pi }(k\mid \mathbf{D}_{n})}
+ \bar{r}(k, l)\right )
\cr
&= \frac{1}{2} \left (
\frac{\widehat{\pi }(l\mid \mathbf{D}_{n})}{\widehat{\pi }(k\mid \mathbf{D}_{n})}
+ \frac{1}{N} \sum _{j=1}^{N} r_{\text{RJ2}}((k,\mathbf{x}_{k}^{(0)}),(l,
\mathbf{y}_{l}^{(T-1,j)}))\right ).
\end{align*}
These reflect a choice of putting more or less weight on
$\widehat{\pi }(l\mid \mathbf{D}_{n}) / \widehat{\pi }(k\mid \mathbf{D}_{n})$.
We know that if $T$ and $N$ are large enough then $\bar{r}(k, l)$ is close
to $\pi (l\mid \mathbf{D}_{n}) / \pi (k\mid \mathbf{D}_{n})$, which may
not be the case for
$\widehat{\pi }(l\mid \mathbf{D}_{n}) / \widehat{\pi }(k\mid \mathbf{D}_{n})$
when $n$ is not sufficiently large. The latter ratio may thus act as outlying/conflicting
information against which these averages above are not robust. A robust
approach consists in setting $\varrho $ to be the median of
$\widehat{\pi }(l\mid \mathbf{D}_{n}) / \widehat{\pi }(k\mid \mathbf{D}_{n}),
r_{\text{RJ2}}((k,\mathbf{x}_{k}^{(0)}),(l,\mathbf{y}_{l}^{(T-1,1)})) $,
$\ldots , r_{\text{RJ2}}((k,\mathbf{x}_{k}^{(0)}),(l,\mathbf{y}_{l}^{(T-1,N)}))$.
We recommend this approach and use it in our numerical example.

Furthermore, as $T,N\longrightarrow \infty $,
$\tilde{\pi }(l\mid \mathbf{D}_{n})/\tilde{\pi }(k\mid \mathbf{D}_{n})$ is
a consistent estimator of
$\pi (l\mid \mathbf{D}_{n}) / \pi (k\mid \mathbf{D}_{n})$ for fixed
$n$ when the median or \eqref{eqn_est_ratio} is used (recall the properties
of $r_{\text{RJ2}}$ and $\bar{r}$ mentioned in Sections \ref{sec_andrieu_2013} and \ref{sec_andrieu_2018}). Therefore, if the function
$h$ is such that $h(x) = x \, h(1/x)$ for $x>0$, then the acceptance probabilities
in Algorithm~\ref{algo_RJ_imp_model} converge towards
$1 \wedge c(k) / c(k')$, where $c(k)$ and $c(k')$ are the normalizing constants
with
\begin{equation*}
c(k) = \sum _{l \in \mathbf{N}(k)} h\left (
\frac{\pi (l\mid \mathbf{D}_{n})}{\pi (k\mid \mathbf{D}_{n})}\right ).
\end{equation*}
In fact, the same technique as in the proof of Theorem 1 in
\cite{gagnon2019NRJ} allows to prove that a Markov chain simulated by Algorithm~\ref{algo_RJ_imp_model} converges weakly for fixed $n$ to that of an
ideal RJ which has access to the unnormalized version of the posterior probabilities
$\pi (k\mid \mathbf{D}_{n})$ and is able to sample from the conditional
distributions $\pi (\, \cdot \mid k,\mathbf{D}_{n})$ (and for which the
acceptance probabilities are $1 \wedge c(k) / c(k')$), with its good mixing
properties as discussed in Section~\ref{sec_objectives}.

\section{Details of the variable-selection example}
\label{sec_supp_variable_selection}

We review linear regression and introduce notation in Section~\ref{sec_linear_reg}. We present in Section~\ref{sec_normal_reg} all the
details to compute estimates for the normal linear-regression model. Some
are useful for the computation of algorithm inputs, such as the observed
information matrix, and to verify the validity of our RJ code (as mentioned
in Section~\ref{sec_context_example}). In Section~\ref{sec_robust_reg}, we turn
to the details of the robust linear regressions and the computation of
algorithm inputs.

\subsection{Linear regression}
\label{sec_linear_reg}

We first introduce notation. We define
$\gamma _{1}, \ldots , \gamma _{n} \in \mathbb{R} $ to be $n$ data points
from the dependent variable, and
$\boldsymbol{\gamma }_{n} := (\gamma _{1}, \ldots , \gamma _{n})^{T}$. We
denote the full design matrix containing $n$ observations from all covariates
by $\mathbf{C} \in \mathbb{R} ^{n \times p}$, where $p$ is a positive integer.
For simplicity, we refer to the first column of $\mathbf{C}$ as the
\textit{first} covariate even if, as usual, it is a column of $1$'s. The
design matrix associated with Model $k$ whose columns form a subset of
$\mathbf{C}$ is denoted by $\mathbf{C}_{k}$, with lines denoted by
$\mathbf{c}_{i,k}^{T}$. We use $d_{k}$ to denote the number of covariates
in Model $k$; we therefore slightly abuse notation given that the number
of parameters for Model $k$ is $d_{k}+1$ (one regression coefficient per
covariate plus the scale parameter of the error term).

The linear regression is
\begin{equation*}
\gamma _{i} = \mathbf{c}_{i, K}^{T} \, \boldsymbol{\beta }_{K} +
\epsilon _{i, K}, \quad i = 1, \ldots , n, \quad K \in \mathcal{K},
\end{equation*}
where $K$ is the model indicator, $\boldsymbol{\beta }_{K}$ is the random
vector containing the regression coefficients of Model $K$ and
$\epsilon _{1,K}, \ldots , \epsilon _{n, K} \in \mathbb{R} $ are the random
errors of Model $K$. We assume that
$\epsilon _{1,K}, \ldots , \epsilon _{n, K}$ and
$\boldsymbol{\beta }_{K}$ are $n+1$ conditionally independent random variables
given $(K, \sigma _{K})$, with $\sigma _{K}>0$ being the scale parameter
of the errors of Model $K$. The conditional density of
$\epsilon _{i, K}$ is given by
\begin{equation*}
\epsilon _{i, K} \mid K, \sigma _{K}, \boldsymbol{\beta }_{K}
\stackrel{d}{=} \epsilon _{i, K} \mid K, \sigma _{K}
\stackrel{d}{\sim } (1 / \sigma _{K}) f(\epsilon _{i, K} / \sigma _{K}),
\quad i = 1, \ldots , n.
\end{equation*}

\subsection{Normal linear regression}
\label{sec_normal_reg}

We present in this section a result giving the precise form of the joint
posterior density for the normal linear regression.

\begin{Proposition}%
\label{prop_posterior_normal}
If $f = \varphi (\, \cdot \,; 0, 1)$ and
$\pi (\boldsymbol{\beta }_{k}, \sigma _{k} \mid k) \propto 1 / \sigma _{k}$,
then
\begin{gather}
\label{eqn_post_k_reg}
\pi (k \mid \boldsymbol{\gamma }_{n}) \propto \pi (k) \,
\frac{\Gamma ((n - d_{k}) / 2) \, \pi ^{d_{k} / 2}}{\|\boldsymbol{\gamma }_{n} - \widehat{\boldsymbol{\gamma }}_{k}\|_{2}^{n - d_{k}} \, |\mathbf{C}_{k}^{T} \mathbf{C}_{k}|^{1/2}},
\\
\pi (\sigma _{k} \mid k, \boldsymbol{\gamma }_{n}) =
\frac{2^{1-\frac{n-d_{k}}{2}} \|\boldsymbol{\gamma }_{n} - \widehat{\boldsymbol{\gamma }}_{k}\|_{2}^{n - d_{k}}}{\Gamma ((n-d_{k})/2) \, \sigma _{k}^{n-d_{k}+1}}
\, \exp \left \{  -\frac{1}{2\sigma _{k}^{2}}\, \|\boldsymbol{\gamma }_{n}
- \widehat{\boldsymbol{\gamma }}_{k}\|_{2}^{2}\right \}  ,
\nonumber
\end{gather}
and
\begin{equation*}
\boldsymbol{\beta }_{K} \mid K, \sigma _{K}, \boldsymbol{\gamma }_{n}
\sim \varphi (\, \cdot \,; \widehat{\boldsymbol{\beta }}_{k}, \sigma _{K}^{2}
(\mathbf{C}_{K}^{T} \mathbf{C}_{K})^{-1}),
\end{equation*}
where
$\widehat{\boldsymbol{\gamma }}_{k} := \mathbf{C}_{k} (\mathbf{C}_{k}^{T}
\mathbf{C}_{k})^{-1} \mathbf{C}_{k}^{T} \boldsymbol{\gamma }_{n}$,
$\widehat{\boldsymbol{\beta }}_{k} := (\mathbf{C}_{K}^{T} \mathbf{C}_{K})^{-1}
\mathbf{C}_{K}^{T} \boldsymbol{\gamma }_{n}$, and $\| \cdot \|_{2}$ is the
Euclidian norm. Note that the normalization constant of
$\pi (k \mid \boldsymbol{\gamma }_{n})$ is the sum over $k$ of the expression
on the right-hand side (RHS) in \eqref{eqn_post_k_reg}.
\end{Proposition}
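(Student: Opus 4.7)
The plan is a textbook derivation via the classical orthogonal decomposition, iterated integration, and pattern-matching of known densities. I would proceed in three stages.

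First, assemble the joint posterior density. Using conditional independence of $\epsilon_{1,k},\ldots,\epsilon_{n,k}$ given $(k,\sigma_k)$ and the standard normal density for $f$, the likelihood under Model $k$ is
\begin{equation*}
\mathcal{L}(\boldsymbol{\gamma}_n\mid k,\boldsymbol{\beta}_k,\sigma_k)=(2\pi)^{-n/2}\sigma_k^{-n}\exp\!\left(-\frac{1}{2\sigma_k^2}\|\boldsymbol{\gamma}_n-\mathbf{C}_k\boldsymbol{\beta}_k\|_2^2\right).
\end{equation*}
Multiplying by the prior $\pi(k)\,\pi(\boldsymbol{\beta}_k,\sigma_k\mid k)\propto \pi(k)/\sigma_k$ yields the joint posterior up to a global normalizing constant.

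Second, apply the orthogonal decomposition
\begin{equation*}
\|\boldsymbol{\gamma}_n-\mathbf{C}_k\boldsymbol{\beta}_k\|_2^2=\|\boldsymbol{\gamma}_n-\widehat{\boldsymbol{\gamma}}_k\|_2^2+(\boldsymbol{\beta}_k-\widehat{\boldsymbol{\beta}}_k)^T\mathbf{C}_k^T\mathbf{C}_k(\boldsymbol{\beta}_k-\widehat{\boldsymbol{\beta}}_k),
\end{equation*}
which follows from the fact that $\widehat{\boldsymbol{\gamma}}_k=\mathbf{C}_k\widehat{\boldsymbol{\beta}}_k$ is the orthogonal projection of $\boldsymbol{\gamma}_n$ onto the column space of $\mathbf{C}_k$, so the residual $\boldsymbol{\gamma}_n-\widehat{\boldsymbol{\gamma}}_k$ is orthogonal to $\mathbf{C}_k(\boldsymbol{\beta}_k-\widehat{\boldsymbol{\beta}}_k)$. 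Plugging this into the joint posterior factorizes the $\boldsymbol{\beta}_k$-dependence into a Gaussian kernel with mean $\widehat{\boldsymbol{\beta}}_k$ and covariance $\sigma_k^2(\mathbf{C}_k^T\mathbf{C}_k)^{-1}$, which immediately identifies the third claim of the proposition (the conditional normal law of $\boldsymbol{\beta}_K$ given $K,\sigma_K,\boldsymbol{\gamma}_n$) after dividing by the marginal.

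Third, marginalize in two steps. Integrating out $\boldsymbol{\beta}_k$ via the Gaussian integral supplies a factor $(2\pi)^{d_k/2}\sigma_k^{d_k}|\mathbf{C}_k^T\mathbf{C}_k|^{-1/2}$, leaving
\begin{equation*}
\pi(\sigma_k,k\mid\boldsymbol{\gamma}_n)\propto\pi(k)\,|\mathbf{C}_k^T\mathbf{C}_k|^{-1/2}\,\sigma_k^{-(n-d_k)-1}\exp\!\left(-\frac{\|\boldsymbol{\gamma}_n-\widehat{\boldsymbol{\gamma}}_k\|_2^2}{2\sigma_k^2}\right).
\end{equation*}
This is proportional, as a function of $\sigma_k$, to the density stated for $\pi(\sigma_k\mid k,\boldsymbol{\gamma}_n)$ (an inverse-chi-type density), and the normalizing factor needed to turn the proportionality into an equality is obtained by the change of variables $u=\|\boldsymbol{\gamma}_n-\widehat{\boldsymbol{\gamma}}_k\|_2^2/(2\sigma_k^2)$, which reduces the integral to $\Gamma((n-d_k)/2)$ times elementary constants. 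Dividing the joint by this conditional gives the marginal $\pi(k\mid\boldsymbol{\gamma}_n)$ in the claimed form.

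I do not expect a real obstacle: the only subtlety is bookkeeping of powers of $\sigma_k$, $2$, and $\pi$ through the two integrations, and making sure that the Jeffreys-type improper prior $1/\sigma_k$ produces a proper posterior, which requires $n>d_k$ and $\mathbf{C}_k$ of full column rank so that $\mathbf{C}_k^T\mathbf{C}_k$ is invertible and $\|\boldsymbol{\gamma}_n-\widehat{\boldsymbol{\gamma}}_k\|_2>0$ almost surely; these are the implicit regularity assumptions under which the stated formulas make sense.
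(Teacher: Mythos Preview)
Your proposal is correct and follows essentially the same route as the paper's proof: write the joint posterior, invoke the orthogonal decomposition $\|\boldsymbol{\gamma}_n-\mathbf{C}_k\boldsymbol{\beta}_k\|_2^2=\|\boldsymbol{\gamma}_n-\widehat{\boldsymbol{\gamma}}_k\|_2^2+(\boldsymbol{\beta}_k-\widehat{\boldsymbol{\beta}}_k)^T\mathbf{C}_k^T\mathbf{C}_k(\boldsymbol{\beta}_k-\widehat{\boldsymbol{\beta}}_k)$, and then recognize the normal and inverse-gamma-type factors. The paper simply presents the final factorization after multiplying and dividing by the appropriate constants rather than spelling out the two marginalization steps, but the content is identical.
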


Note that $\sigma _{K}^{2} \mid K, \boldsymbol{\gamma }_{n}$ has an inverse-gamma
distribution with shape and scale parameters given by
$(n - d_{k}) / 2$ and
$\|\boldsymbol{\gamma }_{n} - \widehat{\boldsymbol{\gamma }}_{k}\|_{2}^{2} /
2$, respectively. We work on the log-scale for the scale parameters so
that all the parameters take values on the real line, and presumably, have
densities that are closer to the bell curve. We thus define
$\eta _{k} := \log \sigma _{k}$. The associated conditional distribution
is given by
\begin{equation*}
\pi (\eta _{k} \mid k, \boldsymbol{\gamma }_{n}) :=
\frac{2^{1-\frac{n-d_{k}}{2}} \|\boldsymbol{\gamma }_{n} - \widehat{\boldsymbol{\gamma }}_{k}\|_{2}^{n - d_{k}}}{\Gamma ((n-d_{k})/2) \, \mathrm{e}^{(n - d_{k}) \eta _{k}}}
\, \exp \left \{  -\frac{1}{2 \mathrm{e}^{2 \eta _{k}}}\, \|
\boldsymbol{\gamma }_{n} - \widehat{\boldsymbol{\gamma }}_{k}\|_{2}^{2}
\right \}  .
\end{equation*}
The other terms in the joint posterior do not change except that we now
consider that
$\boldsymbol{\beta }_{K} \mid K, \eta _{K}, \boldsymbol{\gamma }_{n}
\sim \varphi (\, \cdot \,; \widehat{\boldsymbol{\beta }}_{k}, \exp (2
\eta _{K}) (\mathbf{C}_{K}^{T} \mathbf{C}_{K})^{-1})$.

The maximizers of the conditional posterior densities given that
$K =k$ are:
\begin{equation*}
\hat{\mathbf{x}}_{k} = (\widehat{\boldsymbol{\beta }}_{k},
\widehat{\eta }_{k}),
\end{equation*}
where
\begin{equation*}
\widehat{\eta }_{k} := \log \sqrt{\frac{1}{n} \, \|\boldsymbol{\gamma }_{n}
- \widehat{\boldsymbol{\gamma }}_{k}\|_{2}^{2}}.
\end{equation*}
The observed information matrix evaluated at the maximizers is thus given
by
\begin{equation*}
\hat{\mathcal{I}}_{k} = \left (
\begin{array}{cc}
\mathbf{C}_{k}^{T} \mathbf{C}_{k} / \mathrm{e}^{2 \hat{\eta }_{k}} &
\mathbf{0}
\cr
\mathbf{0} & 2n
\end{array}
\right ).
\end{equation*}
Then,
\begin{align}
\label{eqn_inverse_fisher}
\hat{\mathcal{I}}_{k}^{-1} = \left (
\begin{array}{cc}
\mathrm{e}^{2 \hat{\eta }_{k}}(\mathbf{C}_{k}^{T} \mathbf{C}_{k})^{-1}
& \mathbf{0}
\cr
\mathbf{0} & (2n)^{-1}
\end{array} \right ).
\end{align}

Relying on improper priors such as
$\pi (\boldsymbol{\beta }_{k}, \sigma _{k} \mid k) \propto 1 / \sigma _{k}$
may lead to inconsistencies in model selection (see, e.g.,
\cite{casella2009consistency}). When this problem happens, the phenomenon
is referred to as the \textit{Jeffreys-Lindley paradox} (\cite{lindley1957paradox}
and \cite{jeffreys1967prob}) in the literature. It can be shown that the
Jeffreys-Lindley paradox does not arise under the framework of normal linear
regression described above when the prior distribution on $K$ is carefully
chosen, and more precisely, set to
\begin{equation*}
\pi (k) \propto |\mathbf{C}_{k}^{T} \mathbf{C}_{k}|^{1/2} / n^{d_{k} /
2}.
\end{equation*}
See \cite{gagnon2017PCR} for an analogous proof in a framework of principal
component regression.

When the covariates are orthonormal,
$|\mathbf{C}_{k}^{T} \mathbf{C}_{k}|^{1 / 2} = |n \mathbf{I}_{d_{k}}|^{1
/ 2} = n^{d_{k} / 2}$ (if all covariate observations are standardized using
a standard deviation in which the divisor is $n$). In this case,
$\pi (k) \propto |\mathbf{C}_{k}^{T} \mathbf{C}_{k}|^{1/2} / n^{d_{k} /
2} = 1$. The proposed prior on $K$ can thus be seen as a relative adjustment
of the volume spanned by the columns of
$\mathbf{C}_{K}^{T} \mathbf{C}_{K}$.

\subsection{Robust linear regression}
\label{sec_robust_reg}

In this section, we consider that the density $f$ is a LPTN with parameter
$\rho \in (2\Phi (1) - 1, 1) \approx (0.6827, 1)$, i.e.
\begin{equation}
\label{eqn_log_pareto_pcr}
f(x)=\left \{
\begin{array}{lcc}
\varphi (x; 0, 1) & \text{ if } & \left |x\right |\leq \tau ,
\\
\varphi (\tau ; 0, 1)\,\frac{\tau }{|x|}\left (
\frac{\log \tau }{\log |x|}\right )^{\lambda +1} & \text{ if } & \left |x
\right |>\tau ,
\\
\end{array}
\right .
\end{equation}
where $x\in \mathbb{R} $ and $\Phi (\,\cdot \,)$ is the cumulative distribution
function (CDF) of a standard normal. The terms $\tau >1$ and
$\lambda >0$ are functions of $\rho $ and satisfy
\begin{align*}%
& \tau = \{\tau : \mathbb{P}(-\tau \leq Z \leq \tau )= \rho \,
\text{ for }\, Z \sim \varphi (\, \cdot \,; 0, 1)\},
\\
& \lambda = 2(1-\rho )^{-1}\varphi (\tau ; 0, 1) \, \tau \log (\tau ).
\end{align*}
Setting $\rho $ to $0.95$ has proved to be suitable for practical purposes
(see \cite{gagnon2018regression}). Accordingly, this is the value that
is used in our numerical example.

Using the same prior as the normal regression, the joint posterior density
is:
\begin{align*}
\pi (k, \boldsymbol{\beta }_{k}, \sigma _{k} \mid \boldsymbol{\gamma }_{n})
\propto
\frac{|\mathbf{C}_{k}^{T} \mathbf{C}_{k}|^{1/2}}{n^{d_{k} / 2}}
\frac{1}{\sigma _{k}} \prod _{i = 1}^{n} \frac{1}{\sigma _{k}} f
\left (
\frac{\gamma _{i} - \mathbf{c}_{i, k}^{T} \boldsymbol{\beta }_{k}}{\sigma _{k}}
\right ).
\end{align*}

With the change of variable $\eta _{k} = \log \sigma _{k}$, we have
\begin{align*}
\pi (k, \boldsymbol{\beta }_{k}, \eta _{k} \mid \boldsymbol{\gamma }_{n})
\propto
\frac{|\mathbf{C}_{k}^{T} \mathbf{C}_{k}|^{1/2}}{n^{d_{k} / 2}}
\frac{1}{\mathrm{e}^{\eta _{k} n}} \prod _{i = 1}^{n} f\left (
\frac{\gamma _{i} - \mathbf{c}_{i, k}^{T} \boldsymbol{\beta }_{k}}{\mathrm{e}^{\eta _{k}}}
\right ).
\end{align*}

Recall that for the robust models we set
$\hat{\mathcal{I}}_{k}^{-1}$ as in \eqref{eqn_inverse_fisher} but use the
maximizer $\hat{\eta }_{k}$ of the function above. We need log conditionals
and their gradients for optimizers and HMC:
\begin{equation*}
\log \pi (\boldsymbol{\beta }_{k}, \eta _{k} \mid k, \boldsymbol{\gamma }_{n})
\propto - n \eta _{k} + \sum _{i = 1}^{n} \log f\left (
\frac{\gamma _{i} - \mathbf{c}_{i, k}^{T} \boldsymbol{\beta }_{k}}{\mathrm{e}^{\eta _{k}}}
\right ),
\end{equation*}
where the proportional sign has to be understood in the original scale,
and
\begin{align*}
&\frac{\partial }{\partial \boldsymbol{\beta }_{k}} \log \pi (
\boldsymbol{\beta }_{k}, \eta _{k} \mid k, \boldsymbol{\gamma }_{n}) =
\sum _{i = 1}^{n} \mathrm{e}^{-2 \eta _{k}} (\gamma _{i} -
\mathbf{c}_{i, k}^{T} \boldsymbol{\beta }_{k}) \mathbf{c}_{i, k}
\mathds{1}\left ((\gamma _{i} - \mathbf{c}_{i, k}^{T}
\boldsymbol{\beta }_{k}) / \mathrm{e}^{\eta _{k}} \leq \tau \right )
\cr
& + \left [
\frac{\text{sgn}(\gamma _{i} - \mathbf{c}_{i, k}^{T} \boldsymbol{\beta }_{k}) \mathbf{c}_{i, k}}{|\gamma _{i} - \mathbf{c}_{i, k}^{T} \boldsymbol{\beta }_{k}|}
+ (\lambda + 1)
\frac{\text{sgn}(\gamma _{i} - \mathbf{c}_{i, k}^{T} \boldsymbol{\beta }_{k}) \mathbf{c}_{i, k}}{|\gamma _{i} - \mathbf{c}_{i, k}^{T} \boldsymbol{\beta }_{k}| \log \left ((\gamma _{i} - \mathbf{c}_{i, k}^{T} \boldsymbol{\beta }_{k}) / \mathrm{e}^{\eta _{k}}\right )}
\right ] \mathds{1}\left ((\gamma _{i} - \mathbf{c}_{i, k}^{T}
\boldsymbol{\beta }_{k}) / \mathrm{e}^{\eta _{k}} > \tau \right ),
\end{align*}
and
\begin{align*}
\frac{\partial }{\partial \eta _{k}} \log \pi (\boldsymbol{\beta }_{k},
\eta _{k} \mid k, \boldsymbol{\gamma }_{n})
 &= -n + \sum _{i = 1}^{n} \mathrm{e}^{-2 \eta _{k}} (\gamma _{i}
- \mathbf{c}_{i, k}^{T} \boldsymbol{\beta }_{k})^{2} \mathds{1}\left ((
\gamma _{i} - \mathbf{c}_{i, k}^{T} \boldsymbol{\beta }_{k}) /
\mathrm{e}^{\eta _{k}} \leq \tau \right )
\cr
& \qquad + \left [1 + (\lambda + 1)
\frac{1}{ \log \left ((\gamma _{i} - \mathbf{c}_{i, k}^{T} \boldsymbol{\beta }_{k}) / \mathrm{e}^{\eta _{k}}\right )}
\right ] \mathds{1}\left ((\gamma _{i} - \mathbf{c}_{i, k}^{T}
\boldsymbol{\beta }_{k}) / \mathrm{e}^{\eta _{k}} > \tau \right ),
\end{align*}
$\text{sgn}(\, \cdot \,)$ being the sign function.

To use the annealing distributions in Algorithms \ref{algo_RJ_andrieu_2013}, \ref{algo_RJ_andrieu_2018} and \ref{algo_RJ_imp_model}, we work with the log densities; therefore we simply
multiply $\log \pi (\, \cdot \mid k, \boldsymbol{\gamma }_{n})$ by
$1 - t / T$ and
$\log \pi (\, \cdot \mid k', \boldsymbol{\gamma }_{n})$ by $t / T$ to obtain
$\log \rho _{k \mapsto k'}^{(t)}$. To use MALA proposal distributions,
we however need to compute the gradient of
$\log \rho _{k \mapsto k'}^{(t)}$. We now compute the gradient with respect
to $\mathbf{x}_{k}^{(t)}$; the gradient with respect to
$\mathbf{y}_{k'}^{(t)}$ is computed similarly. The proportional sign ``$
\propto $'' is with respect to everything that are not the parameters:
\begin{align*}
\pi (\mathbf{x}_{k}^{(t)} \mid k, \mathbf{D}_{n})^{1 - t / T} q_{k'
\mapsto k}(\mathbf{x}_{k}^{(t)})^{t / T}
&= \left [ \pi (\boldsymbol{\beta }_{k}, \eta _{k} \mid k,
\boldsymbol{\gamma }_{n})\right ]^{1 - t/T}
\cr
&\qquad \times \left [
\frac{|\mathbf{C}_{k}^{T} \mathbf{C}_{k}|^{1/2}}{(2\pi )^{d_{k}/2} \mathrm{e}^{d_{k} \widehat{\eta }_{k}}}
\exp \left (-\frac{1}{2\mathrm{e}^{2\widehat{\eta }_{k}}}(
\boldsymbol{\beta }_{k} - \widehat{\boldsymbol{\beta }}_{k})^{T} (
\mathbf{C}_{k}^{T} \mathbf{C}_{k})(\boldsymbol{\beta }_{k} -
\widehat{\boldsymbol{\beta }}_{k}) \right )\right ]^{t/T}
\cr
& \qquad \times \left [\frac{1}{\sqrt{2\pi (1/(2n))}} \exp \left (-
\frac{1}{2(1/(2n))}(\eta _{k} - \widehat{\eta }_{k})^{2}\right )
\right ]^{t/T}
\cr
&\propto \left [ \pi (\boldsymbol{\beta }_{k}, \eta _{k} \mid k,
\boldsymbol{\gamma }_{n})\right ]^{1 - t/T}
\cr
&\qquad \times \frac{1}{\mathrm{e}^{d_{k} (t/T) \widehat{\eta }_{k}}}
\exp \left (-\frac{(t / T)}{2 \mathrm{e}^{2\widehat{\eta }_{k}}}(
\boldsymbol{\beta }_{k} - \widehat{\boldsymbol{\beta }}_{k})^{T} (
\mathbf{C}_{k}^{T} \mathbf{C}_{k})(\boldsymbol{\beta }_{k} -
\widehat{\boldsymbol{\beta }}_{k}) \right )
\cr
&\qquad \times \exp \left (- n(t / T)(\eta _{k} - \widehat{\eta }_{k})^{2}
\right ),
\end{align*}
where we omitted the superscript ``(t)'' for the variables to simplify.
Therefore,
\begin{align*}
&\frac{\partial }{\partial \boldsymbol{\beta }_{k}} \log \pi (\mathbf{x}_{k}^{(t)}
\mid k, \mathbf{D}_{n})^{1 - t / T} q_{k' \mapsto k}(\mathbf{x}_{k}^{(t)})^{t
/ T}
\cr
&\qquad = (1 - t/T) \frac{\partial }{\partial \boldsymbol{\beta }_{k}}
\log \pi (\boldsymbol{\beta }_{k}, \eta _{k} \mid k, \boldsymbol{\gamma }_{n})
- (t/T) \mathrm{e}^{-2\widehat{\eta }_{k}}(\mathbf{C}_{k}^{T}
\mathbf{C}_{k})(\boldsymbol{\beta }_{k} - \widehat{\boldsymbol{\beta }}_{k}),
\end{align*}
and
\begin{align*}
&\frac{\partial }{\partial \eta _{k}} \log \pi (\mathbf{x}_{k}^{(t)}
\mid k, \mathbf{D}_{n})^{1 - t / T} q_{k' \mapsto k}(\mathbf{x}_{k}^{(t)})^{t
/ T}
\cr
& \qquad = (1 - t/T) \frac{\partial }{\partial \eta _{k}} \log \pi (
\boldsymbol{\beta }_{k}, \eta _{k} \mid k, \boldsymbol{\gamma }_{n}) - 2 n
(t/T) (\eta _{k} - \widehat{\eta }_{k}).
\end{align*}

\section{Proofs}
\label{sec_proofs}

We present the proofs of Theorem~\ref{thm_convergence}, Proposition~\ref{prop_peskun}, Proposition~\ref{prop_inv_algo_model_imp} and Proposition~\ref{prop_posterior_normal} in that order.

\begin{proof}[Proof of Theorem~\ref{thm_convergence}]
We here only prove that
$\{(K, \mathbf{Z}_{K})_{n}(m): m \in \mathbb{N} \} \Longrightarrow \{(K,
\mathbf{Z}_{K})_{\text{limit}}(m): m \in \mathbb{N} \}$ in probability as the
proof of
$\{(K, \mathbf{Z}_{K})_{\text{ideal}}(m): m \in \mathbb{N} \}
\Longrightarrow \{(K, \mathbf{Z}_{K})_{\text{limit}}(m): m \in
\mathbb{N} \}$ is similar. To prove this result, we use Theorem 2 of
\cite{schmon2018large}. We thus have to verify the following three conditions.

\begin{enumerate}
\item
$(K,\mathbf{Z}_{K})_{n}(0)\Longrightarrow (K,\mathbf{Z}_{K})_{
\text{limit}}(0)$ in probability as $n\longrightarrow \infty $.
\item Use $P_{n}$ and $P_{\text{limit}}$ to denote the transition kernels
of $\{(K,\mathbf{Z}_{K})_{n}(m): m\in \mathbb{N} \}$ and
$\{(K,\mathbf{Z}_{K})_{\text{limit}}(m): m\in \mathbb{N} \}$, respectively.
These are such that
\begin{align}
\label{eqn_step2}
\sum _{k} \int \left |P_{n} \phi (k, \mathbf{z}_{k}) - P_{
\text{limit}} \phi (k, \mathbf{z}_{k})\right | \, \pi _{K, \mathbf{Z}_{K}}(k,
\mathbf{z}_{k} \mid \mathbf{D}_{n}) \, \mathrm{d}\mathbf{z}_{k}
\longrightarrow 0,
\end{align}
in probability as $n\longrightarrow \infty $ for all
$\phi \in \text{BL}$, where
$\pi _{K, \mathbf{Z}_{K}}(\, \cdot \,, \cdot \mid \mathbf{D}_{n})$ denotes
the stationary distribution/density of
$\{(K, \mathbf{Z}_{K})_{n}(m): m \in \mathbb{N} \}$ and $\text{BL}$ denotes
the set of bounded Lipschitz functions.
\item The transition kernel $P_{\text{limit}}$ is such that
$P_{\text{limit}} \phi $ is continuous for any
$\phi \in \mathcal{C}_{b}$ (the set of continuous bounded functions).
\end{enumerate}

We start with Step 1. Given that $\mathcal{K}$ is finite (Assumption~\ref{ass1}),
it suffices to verify that
\begin{equation*}
\left |\pi (k\mid \mathbf{D}_{n})\,\mathbb{P}(\mathbf{Z}_{k,n} \in A)
- \bar{\pi }(k)\,\mathbb{P}(\mathbf{Z}_{k,\text{limit}} \in A) \right |
\longrightarrow 0 \quad \text{in probability},
\end{equation*}
for any $k$ and measurable set $A$, where
$\mathbb{P}(\mathbf{Z}_{k,n} \in A)$ and
$\mathbb{P}(\mathbf{Z}_{k,\text{limit}} \in A)$ are computed using the
conditional distributions given that $K=k$. Using the triangle inequality,
we have that
\begin{align*}
& \left |\pi (k\mid \mathbf{D}_{n})\,\mathbb{P}(\mathbf{Z}_{k,n}
\in A) - \bar{\pi }(k)\,\mathbb{P}(\mathbf{Z}_{k,\text{limit}} \in A)
\right |
\cr
&\quad \leq \left |\pi (k\mid \mathbf{D}_{n})\,\mathbb{P}(\mathbf{Z}_{k,n}
\in A) - \bar{\pi }(k)\,\mathbb{P}(\mathbf{Z}_{k,n} \in A) \right |
\cr
&\qquad + \left | \bar{\pi }(k)\,\mathbb{P}(\mathbf{Z}_{k,n} \in A) -
\bar{\pi }(k)\,\mathbb{P}(\mathbf{Z}_{k,\text{limit}} \in A) \right |.
\end{align*}
We now show that both absolute values converge towards 0 in probability
which will allow to conclude by Slutsky's theorem and monotonicity of probabilities.
We first have that
\begin{equation*}
\left |\pi (k\mid \mathbf{D}_{n})\,\mathbb{P}(\mathbf{Z}_{k,n} \in A)
- \bar{\pi }(k)\,\mathbb{P}(\mathbf{Z}_{k,n} \in A) \right |\leq
\left |\pi (k\mid \mathbf{D}_{n}) - \bar{\pi }(k)\right |
\longrightarrow 0,
\end{equation*}
in probability by Assumption~\ref{ass2} and the fact that
$\mathbb{P}(\mathbf{Z}_{k,n} \in A)\leq 1$. Using now that
$\bar{\pi }(k)\leq 1$, we have that
\begin{align*}
&\left | \bar{\pi }(k)\,\mathbb{P}(\mathbf{Z}_{k,n} \in A) -
\bar{\pi }(k)\,\mathbb{P}(\mathbf{Z}_{k,\text{limit}} \in A)\right |
\cr
&\qquad \leq \left | \mathbb{P}(\mathbf{Z}_{k,n} \in A) -
\mathbb{P}(\mathbf{Z}_{k,\text{limit}} \in A)\right |
\cr
&\qquad =\left | \mathbb{P}(\mathbf{X}_{k,n} \in A_{n}) -
\mathbb{P}(\mathbf{X}_{k,\text{limit}} \in A_{n})\right |
\cr
&\qquad \leq \int \left |\pi (\mathbf{x}_{k}\mid k, \mathbf{D}_{n}) -
\varphi (\mathbf{x}_{k}; \hat{\mathbf{x}}_{k}, \boldsymbol{\Sigma }_{k}/n)
\right | \, \mathrm{d}\mathbf{x}_{k} \longrightarrow 0,
\end{align*}
in probability, by Jensen's inequality and Assumption~\ref{ass3}, where
$A_{n}$ is the set $A$ after applying the inverse transformation to retrieve
the original random variables. Note that in the last inequality, we also
used that $A_{n}\subset \mathbb{R} ^{d_{k}}$.

We continue with Step 2 and prove \eqref{eqn_step2}. We have that
\begin{align*}
P_{\text{limit}}((k, \mathbf{z}_{k}), (k', \mathrm{d}\mathbf{y}_{k'})) &=
g_{\text{limit}}(k, k') \, \alpha _{\text{limit}}(k, k') \,\varphi (
\mathrm{d}\mathbf{y}_{k'}; \mathbf{0}, \boldsymbol{\Sigma }_{k'})
\cr
&\quad + \delta _{(k, \mathbf{z}_{k})}(k', \mathrm{d}\mathbf{y}_{k'})
\sum _{l \in \mathbf{N}(k)} g_{\text{limit}}(k, l) \, (1 - \alpha _{
\text{limit}}(k, l)),
\end{align*}
where
\begin{equation*}
g_{\text{limit}}(k, k') :=
\frac{h\left (\frac{\bar{\pi }(k')}{\bar{\pi }(k)}\right )}{\sum _{l \in \mathbf{N}(k)} h\left (\frac{\bar{\pi }(l)}{\bar{\pi }(k)}\right )},
\end{equation*}
and
\begin{equation*}
\alpha _{\text{limit}}(k, k') := 1 \wedge
\frac{\bar{\pi }(k') \, g_{\text{limit}}(k', k)}{\bar{\pi }(k) \, g_{\text{limit}}(k, k')}.
\end{equation*}

By definition,
\begin{align*}
P_{\text{limit}} \phi (k, \mathbf{z}_{k})&= \sum _{k' \in \mathbf{N}(k)}
\int \phi (k', \mathbf{y}_{k'}) \, P_{\text{limit}}((k, \mathbf{z}_{k}),
(k', \mathrm{d}\mathbf{y}_{k'}))
\cr
&=\sum _{k' \in \mathbf{N}(k)}\int \phi (k', \mathbf{y}_{k'}) \, g_{
\text{limit}}(k, k') \, \alpha _{\text{limit}}(k, k') \, \varphi (
\mathbf{y}_{k'}; \mathbf{0}, \boldsymbol{\Sigma }_{k'}) \, \mathrm{d}
\mathbf{y}_{k'}
\cr
&\qquad + \phi (k, \mathbf{z}_{k}) \sum _{l \in \mathbf{N}(k)} g_{
\text{limit}}(k, l) \, (1 - \alpha _{\text{limit}}(k, l)).
\end{align*}
$P_{\text{limit}} \phi $ is thus continuous for any
$\phi \in \mathcal{C}_{b}$ so Condition 3 above is satisfied.

We also have that
\begin{align*}
&P_{n}((k, \mathbf{z}_{k}), (k', \mathrm{d}\mathbf{y}_{k'})) = g(k, k')
\, \varphi (\mathrm{d}\mathbf{y}_{k'}; \mathbf{0},
\hat{\boldsymbol{\Sigma }}_{k'}) \, \alpha ((k, \mathbf{z}_{k}), (k',
\mathbf{y}_{k'}))
\cr
& + \delta _{(k, \mathbf{z}_{k})}(k', \mathrm{d}\mathbf{y}_{k'})
  \sum _{l \in \mathbf{N}(k)}\int \left (1 - \alpha ((k,
\mathbf{z}_{k}), (l, \mathbf{u}_{k\mapsto k'}))\right ) g(k, l) \,
\varphi (\mathbf{u}_{k\mapsto k'}; \mathbf{0},
\hat{\boldsymbol{\Sigma }}_{k'}) \, \mathrm{d}\mathbf{u}_{k\mapsto k'},
\end{align*}
where in this case
\begin{equation*}
\alpha ((k, \mathbf{z}_{k}), (k', \mathbf{y}_{k'})) = 1 \wedge
\frac{\pi _{K, \mathbf{Z}_{K}}(k',\mathbf{y}_{k'} \mid \mathbf{D}_{n}) \, g(k', k) \, \varphi (\mathbf{z}_{k}; \mathbf{0}, \hat{\boldsymbol{\Sigma }}_{k})}{\pi _{K, \mathbf{Z}_{K}}(k,\mathbf{z}_{k} \mid \mathbf{D}_{n}) \, g(k, k') \, \varphi (\mathbf{y}_{k'}; \mathbf{0}, \hat{\boldsymbol{\Sigma }}_{k'})}.
\end{equation*}
Therefore,
\begin{align*}
P_{n} \phi (k, \mathbf{z}_{k})&= \sum _{k'}\int \phi (k', \mathbf{y}_{k'})
\, P_{n}((k, \mathbf{z}_{k}), (k', \mathrm{d}\mathbf{y}_{k'}))
\cr
&= \sum _{k' \in \mathbf{N}(k)}\int \phi (k', \mathbf{y}_{k'}) \, g(k,
k') \, \varphi (\mathbf{y}_{k'}; \mathbf{0}, \hat{\boldsymbol{\Sigma }}_{k'})
\, \alpha ((k, \mathbf{z}_{k}), (k', \mathbf{y}_{k'})) \, \mathrm{d}
\mathbf{y}_{k'}
\cr
&%
\hspace{-15mm}
+ \phi (k, \mathbf{z}_{k}) \sum _{l \in \mathbf{N}(k)}\int \left (1 -
\alpha ((k, \mathbf{z}_{k}), (l, \mathbf{u}_{k\mapsto k'}))\right )
\, g(k, l) \, \varphi (\mathbf{u}_{k\mapsto k'}; \mathbf{0},
\hat{\boldsymbol{\Sigma }}_{k'}) \, \mathrm{d}\mathbf{u}_{k\mapsto k'}.
\end{align*}
Consequently,
\begin{align}
\label{eqn_proof1_1}
&\sum _{k} \int \left |P_{n} \phi (k, \mathbf{z}_{k}) - P_{
\text{limit}} \phi (k, \mathbf{z}_{k})\right | \, \pi _{K, \mathbf{Z}_{K}}(k,
\mathbf{z}_{k} \mid \mathbf{D}_{n}) \, \mathrm{d}\mathbf{z}_{k}
\cr
&\quad \leq \sum _{k} \int \left |\sum _{k' \in \mathbf{N}(k)}\int
\left (\phi (k', \mathbf{y}_{k'}) \, g(k, k') \, \varphi (\mathbf{y}_{k'};
\mathbf{0}, \hat{\boldsymbol{\Sigma }}_{k'}) \, \alpha ((k, \mathbf{z}_{k}),
(k', \mathbf{y}_{k'})) \right .\right .
\cr
&%
\hspace{0mm}
\left .\left .-\phi (k', \mathbf{y}_{k'}) \, g_{\text{limit}}(k, k')
\, \alpha _{\text{limit}}(k, k') \, \varphi (\mathbf{y}_{k'};
\mathbf{0}, \boldsymbol{\Sigma }_{k'})\right ) \mathrm{d}\mathbf{y}_{k'}
\right | \pi _{K, \mathbf{Z}_{K}}(k, \mathbf{z}_{k} \mid \mathbf{D}_{n})
\, \mathrm{d}\mathbf{z}_{k}
\cr
& + \sum _{k} \int \pi _{K, \mathbf{Z}_{K}}(k, \mathbf{z}_{k} \mid
\mathbf{D}_{n})
\cr
&\times \left |\phi (k, \mathbf{z}_{k}) \sum _{l \in \mathbf{N}(k)}
\int \left (1 - \alpha ((k, \mathbf{z}_{k}), (l, \mathbf{u}_{k
\mapsto k'}))\right ) \, g(k, l) \, \varphi (\mathbf{u}_{k\mapsto k'};
\mathbf{0}, \hat{\boldsymbol{\Sigma }}_{k'}) \, \mathrm{d}\mathbf{u}_{k
\mapsto k'}\right .
\cr
&\left .
\hspace{10mm}
- \phi (k, \mathbf{z}_{k}) \sum _{l \in \mathbf{N}(k)} g_{
\text{limit}}(k, l) \, (1 - \alpha _{\text{limit}}(k, l)) \right | \,
\mathrm{d}\mathbf{z}_{k},
\end{align}
using the triangle inequality.

We now show that first term on the RHS in \eqref{eqn_proof1_1} converges
towards 0 in probability. The second term converges towards 0 in probability
following similar arguments. We will thus be able to conclude by Slutsky's
theorem and monotonicity of probabilities.

Firstly,
\begin{align*}
&\sum _{k} \int \left |\sum _{k' \in \mathbf{N}(k)}\int \left (\phi (k',
\mathbf{y}_{k'}) \, g(k, k') \, \varphi (\mathbf{y}_{k'}; \mathbf{0},
\hat{\boldsymbol{\Sigma }}_{k',n}) \, \alpha ((k, \mathbf{z}_{k}), (k',
\mathbf{y}_{k'})) \right .\right .
\cr
&%
\hspace{0mm}
\left .\left .-\phi (k', \mathbf{y}_{k'}) \, g_{\text{limit}}(k, k')
\, \alpha _{\text{limit}}(k, k') \, \varphi (\mathbf{y}_{k'};
\mathbf{0}, \boldsymbol{\Sigma }_{k'})\right ) \mathrm{d}\mathbf{y}_{k'}
\right | \pi _{K, \mathbf{Z}_{K}}(k, \mathbf{z}_{k} \mid \mathbf{D}_{n})
\, \mathrm{d}\mathbf{z}_{k}
\cr
&\leq M \sum _{k} \sum _{k' \in \mathbf{N}(k)} \int \left |g(k, k')
\, \varphi (\mathbf{y}_{k'}; \mathbf{0}, \hat{\boldsymbol{\Sigma }}_{k',n})
\, \alpha ((k, \mathbf{z}_{k}), (k', \mathbf{y}_{k'})) \right .
\cr
&
\hspace{15mm}%
\left .- g_{\text{limit}}(k, k') \, \alpha _{\text{limit}}(k, k') \,
\varphi (\mathbf{y}_{k'}; \mathbf{0}, \boldsymbol{\Sigma }_{k'})\right |
\pi _{K, \mathbf{Z}_{K}}(k, \mathbf{z}_{k} \mid \mathbf{D}_{n}) \,
\mathrm{d}\mathbf{y}_{k'} \, \mathrm{d}\mathbf{z}_{k},
\end{align*}
because we can assume that $|\phi | \leq M$ with $M$ a positive constant
and using Jensen's inequality. Because the sums are on a finite number
of terms, it suffices to prove that the integral converges to 0, for any
$k, k'$. Note that we have not properly defined
$g_{\text{limit}}(k, k')$ when $\bar{\pi }(k)=0$. For the proof, we can consider
that these $g_{\text{limit}}(k, k')$ with $\bar{\pi }(k)=0$ have any definition,
because we can use that
\begin{align*}
&\int \left |g(k, k') \, \varphi (\mathbf{y}_{k'}; \mathbf{0},
\hat{\boldsymbol{\Sigma }}_{k'}) \, \alpha ((k, \mathbf{z}_{k}), (k',
\mathbf{y}_{k'})) \right .
\cr
&
\hspace{15mm}%
\left .- g_{\text{limit}}(k, k') \, \alpha _{\text{limit}}(k, k') \,
\varphi (\mathbf{y}_{k'}; \mathbf{0}, \boldsymbol{\Sigma }_{k'})\right |
\pi _{K, \mathbf{Z}_{K}}(k, \mathbf{z}_{k} \mid \mathbf{D}_{n}) \,
\mathrm{d}\mathbf{y}_{k'} \, \mathrm{d}\mathbf{z}_{k}
\cr
&\leq \int \varphi (\mathbf{y}_{k'}; \mathbf{0},
\hat{\boldsymbol{\Sigma }}_{k'}) \pi _{K, \mathbf{Z}_{K}}(k, \mathbf{z}_{k}
\mid \mathbf{D}_{n}) \, \mathrm{d}\mathbf{y}_{k'} \, \mathrm{d}
\mathbf{z}_{k}
\cr
&
\hspace{15mm}
+ \int \varphi (\mathbf{y}_{k'}; \mathbf{0}, \boldsymbol{\Sigma }_{k'})
\pi _{K, \mathbf{Z}_{K}}(k, \mathbf{z}_{k} \mid \mathbf{D}_{n}) \,
\mathrm{d}\mathbf{y}_{k'} \, \mathrm{d}\mathbf{z}_{k} = 2 \, \pi (k
\mid \mathbf{D}_{n}) \longrightarrow 0,
\end{align*}
using the triangle inequality, that
$0 \leq g, \alpha , g_{\text{limit}}, \alpha _{\text{limit}} \leq 1$, and
that $\pi (k \mid \mathbf{D}_{n}) \longrightarrow \bar{\pi }(k) = 0$.

We can thus restrict our attention to the case where
$\bar{\pi }(k)>0$. Note that for similar reasons, we can restrict our attention
to the case where $g_{\text{limit}}(k, k')>0$ (implying that
$\bar{\pi }(k')>0$). Indeed, consider that $\bar{\pi }(k)>0$, then all terms
involved in $g(k, k')$, i.e.\ $h(\hat{\pi }(l\mid \mathbf{D}_{n}) /
\hat{\pi }(k\mid \mathbf{D}_{n}))$ with $l \in \mathbf{N}(k)$, have a well-defined
limit in probability given by $h(\bar{\pi }(l) / \bar{\pi }(k))$ by Slutsky's
and continuous mapping theorems and Assumption~\ref{ass2}. Given that
$g(k, k')$ is a continuous function of a finite number of random variables
all converging towards a constant in probability,
$g(k, k') \longrightarrow g_{\text{limit}}(k, k') = 0$ by Slutsky's theorem, and $g_{\text{limit}}(k, k') = 0$ when
$\bar{\pi }(k')=0$.

Using twice the triangle inequality,
\begin{align}
\label{eq:weak_conv}
&\int \left |g(k, k') \, \varphi (\mathbf{y}_{k'}; \mathbf{0},
\hat{\boldsymbol{\Sigma }}_{k'}) \, \alpha ((k, \mathbf{z}_{k}), (k',
\mathbf{y}_{k'})) \right .
\cr
&
\hspace{15mm}%
\left .- g_{\text{limit}}(k, k') \, \alpha _{\text{limit}}(k, k') \,
\varphi (\mathbf{y}_{k'}; \mathbf{0}, \boldsymbol{\Sigma }_{k'})\right |
\pi _{K, \mathbf{Z}_{K}}(k, \mathbf{z}_{k} \mid \mathbf{D}_{n}) \,
\mathrm{d}\mathbf{y}_{k'} \, \mathrm{d}\mathbf{z}_{k}
\cr
&\leq \int g(k, k')\, \varphi (\mathbf{y}_{k'}; \mathbf{0},
\hat{\boldsymbol{\Sigma }}_{k'}) \, \pi _{K, \mathbf{Z}_{K}}(k,
\mathbf{z}_{k} \mid \mathbf{D}_{n})
\, |\alpha ((k, \mathbf{z}_{k}), (k', \mathbf{y}_{k'}) -
\alpha _{\text{limit}}(k, k')| \, \mathrm{d}\mathbf{y}_{k'} \,
\mathrm{d}\mathbf{z}_{k}
\cr
&\quad +|g(k, k') - g_{\text{limit}}(k, k')| \, \alpha _{\text{limit}}(k,
k')
 \int \varphi (\mathbf{y}_{k'}; \mathbf{0},
\hat{\boldsymbol{\Sigma }}_{k'}) \pi _{K, \mathbf{Z}_{K}}(k, \mathbf{z}_{k}
\mid \mathbf{D}_{n}) \, \mathrm{d}\mathbf{y}_{k'} \, \mathrm{d}
\mathbf{z}_{k}
\cr
&\quad +g_{\text{limit}}(k, k') \, \alpha _{\text{limit}}(k, k')
 \int |\varphi (\mathbf{y}_{k'}; \mathbf{0},
\hat{\boldsymbol{\Sigma }}_{k'}) - \varphi (\mathbf{y}_{k'}; \mathbf{0},
\boldsymbol{\Sigma }_{k'})|\pi _{K, \mathbf{Z}_{K}}(k, \mathbf{z}_{k}
\mid \mathbf{D}_{n}) \, \mathrm{d}\mathbf{y}_{k'} \, \mathrm{d}
\mathbf{z}_{k}.
\end{align}
It is easily seen that the second term on the RHS converges to 0 given
that $g(k, k') \longrightarrow g_{\text{limit}}(k, k')$. The third term
can be bounded above as follows:
\begin{align*}
&g_{\text{limit}}(k, k') \, \alpha _{\text{limit}}(k, k')
 \int |\varphi (\mathbf{y}_{k'}; \mathbf{0},
\hat{\boldsymbol{\Sigma }}_{k'}) - \varphi (\mathbf{y}_{k'}; \mathbf{0},
\boldsymbol{\Sigma }_{k'})|\pi _{K, \mathbf{Z}_{K}}(k, \mathbf{z}_{k}
\mid \mathbf{D}_{n}) \, \mathrm{d}\mathbf{y}_{k'} \, \mathrm{d}
\mathbf{z}_{k}
\cr
&\leq \int |\varphi (\mathbf{y}_{k'}; \mathbf{0},
\hat{\boldsymbol{\Sigma }}_{k'}) - \varphi (\mathbf{y}_{k'}; \mathbf{0},
\boldsymbol{\Sigma }_{k'})| \, \mathrm{d}\mathbf{y}_{k'}
\cr
&\leq \left [\text{tr}(\boldsymbol{\Sigma }_{k'}^{-1}
\hat{\boldsymbol{\Sigma }}_{k'} - \mathbb{I}_{k'}) - \log \text{det}(
\hat{\boldsymbol{\Sigma }}_{k'}\boldsymbol{\Sigma }_{k'}^{-1})\right ]^{1/2},
\end{align*}
using that
$0 \leq g_{\text{limit}}, \alpha _{\text{limit}}, \pi (k \mid
\mathbf{D}_{n}) \leq 1$ in the first inequality and then Proposition 2.1
of \cite{devroye2018total} in the second one, where
$\mathbb{I}_{k'}$ is the identity matrix of size $d_{k'}$, and
$\text{tr}(\, \cdot \,)$ and $\text{det}(\, \cdot \,)$ are the trace and
determinant operators, respectively. The upper bound converges to 0 under Assumption~\ref{ass3}. For the first term in \eqref{eq:weak_conv}, it is less
direct. Using that $1\wedge x$ is a Lipschitz function with constant
$1$ for all $x \geq 0$ and the triangle inequality,
\begin{align*}
&\int g(k, k')\, \varphi (\mathbf{y}_{k'}; \mathbf{0},
\hat{\boldsymbol{\Sigma }}_{k'}) \, \pi _{K, \mathbf{Z}_{K}}(k,
\mathbf{z}_{k} \mid \mathbf{D}_{n})
\, |\alpha ((k, \mathbf{z}_{k}), (k', \mathbf{y}_{k'}) -
\alpha _{\text{limit}}(k, k')| \, \mathrm{d}\mathbf{y}_{k'} \,
\mathrm{d}\mathbf{z}_{k}
\cr
&\leq \int \left |\pi _{K, \mathbf{Z}_{K}}(k',\mathbf{y}_{k'} \mid
\mathbf{D}_{n}) \, g(k', k) \, \varphi (\mathbf{z}_{k}; \mathbf{0},
\hat{\boldsymbol{\Sigma }}_{k}) \right .
\cr
&%
\hspace{5mm}%
\left .- g(k, k')\, \varphi (\mathbf{y}_{k'}; \mathbf{0},
\hat{\boldsymbol{\Sigma }}_{k'}) \, \pi _{K, \mathbf{Z}_{K}}(k,
\mathbf{z}_{k} \mid \mathbf{D}_{n}) \,
\frac{\bar{\pi }(k') \, g_{\text{limit}}(k', k)}{\bar{\pi }(k) \, g_{\text{limit}}(k, k')}
\right | \, \mathrm{d}\mathbf{y}_{k'} \, \mathrm{d}\mathbf{z}_{k}
\cr
&=\int \left |\pi _{\mathbf{Z}_{K}}(\mathbf{y}_{k'} \mid k',
\mathbf{D}_{n}) \, \pi (k' \mid \mathbf{D}_{n}) \, g(k', k) \,
\varphi (\mathbf{z}_{k}; \mathbf{0}, \hat{\boldsymbol{\Sigma }}_{k})
\right .
\cr
&%
\hspace{0mm}%
\left .- g(k, k')\, \varphi (\mathbf{y}_{k'}; \mathbf{0},
\hat{\boldsymbol{\Sigma }}_{k'}) \, \pi _{\mathbf{Z}_{K}}(\mathbf{z}_{k}
\mid k, \mathbf{D}_{n}) \, \pi (k \mid \mathbf{D}_{n}) \,
\frac{\bar{\pi }(k') \, g_{\text{limit}}(k', k)}{\bar{\pi }(k) \, g_{\text{limit}}(k, k')}
\right | \, \mathrm{d}\mathbf{y}_{k'} \, \mathrm{d}\mathbf{z}_{k}
\cr
&\leq \int \left |\pi _{\mathbf{Z}_{K}}(\mathbf{y}_{k'} \mid k',
\mathbf{D}_{n}) \, \pi (k' \mid \mathbf{D}_{n}) \, g(k', k) \,
\varphi (\mathbf{z}_{k}; \mathbf{0}, \hat{\boldsymbol{\Sigma }}_{k})
\right .
\cr
&%
\hspace{10mm}%
\left .- \varphi (\mathbf{y}_{k'}; \mathbf{0},
\hat{\boldsymbol{\Sigma }}_{k'}) \, \pi _{\mathbf{Z}_{K}}(\mathbf{z}_{k}
\mid k, \mathbf{D}_{n}) \, \bar{\pi }(k') \, g_{\text{limit}}(k', k)
\right | \, \mathrm{d}\mathbf{y}_{k'} \, \mathrm{d}\mathbf{z}_{k}
\cr
&+\left |1 -
\frac{\pi (k \mid \mathbf{D}_{n}) \, g(k, k')}{\bar{\pi }(k) \, g_{\text{limit}}(k, k')}
\right |\bar{\pi }(k') \, g_{\text{limit}}(k', k)
 \int \varphi (\mathbf{y}_{k'}; \mathbf{0},
\hat{\boldsymbol{\Sigma }}_{k'}) \, \pi _{\mathbf{Z}_{K}}(\mathbf{z}_{k}
\mid k, \mathbf{D}_{n}) \, \mathrm{d}\mathbf{y}_{k'} \, \mathrm{d}
\mathbf{z}_{k}.
\end{align*}
The last term is easily seen to converge towards 0 by Slutsky's theorem.

We now analyse the other term. Using the triangle inequality,
\begin{align*}
&\int \left |\pi _{\mathbf{Z}_{K}}(\mathbf{y}_{k'} \mid k',\mathbf{D}_{n})
\, \pi (k' \mid \mathbf{D}_{n}) \, g(k', k) \, \varphi (\mathbf{z}_{k};
\mathbf{0}, \hat{\boldsymbol{\Sigma }}_{k}) \right .
\cr
&%
\hspace{10mm}%
\left .- \varphi (\mathbf{y}_{k'}; \mathbf{0},
\hat{\boldsymbol{\Sigma }}_{k'}) \, \pi _{\mathbf{Z}_{K}}(\mathbf{z}_{k}
\mid k, \mathbf{D}_{n}) \, \bar{\pi }(k') \, g_{\text{limit}}(k', k)
\right | \, \mathrm{d}\mathbf{y}_{k'} \, \mathrm{d}\mathbf{z}_{k}
\cr
&\leq \left |\pi (k' \mid \mathbf{D}_{n}) \, g(k', k) - \bar{\pi }(k')
\, g_{\text{limit}}(k', k)\right |
\cr
&\qquad \times \int \pi _{\mathbf{Z}_{K}}(\mathbf{y}_{k'} \mid k',
\mathbf{D}_{n}) \, \varphi (\mathbf{z}_{k}; \mathbf{0},
\hat{\boldsymbol{\Sigma }}_{k}) \, \mathrm{d}\mathbf{y}_{k'} \,
\mathrm{d}\mathbf{z}_{k}
\cr
& \quad +\bar{\pi }(k') \, g_{\text{limit}}(k', k)
\cr
&\times \int \left |\pi _{\mathbf{Z}_{K}}(\mathbf{y}_{k'} \mid k',
\mathbf{D}_{n}) \, \varphi (\mathbf{z}_{k}; \mathbf{0},
\hat{\boldsymbol{\Sigma }}_{k}) - \varphi (\mathbf{y}_{k'}; \mathbf{0},
\hat{\boldsymbol{\Sigma }}_{k'}) \, \pi _{\mathbf{Z}_{K}}(\mathbf{z}_{k}
\mid k, \mathbf{D}_{n})\right |\, \mathrm{d}\mathbf{y}_{k'} \,
\mathrm{d}\mathbf{z}_{k}.
\end{align*}
The first term is seen to converge towards 0 by Slutsky's theorem.

Using again the triangle inequality, we obtain the following bound on the
other term:
\begin{align*}
&\int \left |\pi _{\mathbf{Z}_{K}}(\mathbf{y}_{k'} \mid k',\mathbf{D}_{n})
\, \varphi (\mathbf{z}_{k}; \mathbf{0}, \hat{\boldsymbol{\Sigma }}_{k}) -
\varphi (\mathbf{y}_{k'}; \mathbf{0}, \hat{\boldsymbol{\Sigma }}_{k'})
\, \pi _{\mathbf{Z}_{K}}(\mathbf{z}_{k} \mid k, \mathbf{D}_{n})
\right |\, \mathrm{d}\mathbf{y}_{k'} \, \mathrm{d}\mathbf{z}_{k}
\cr
&\leq \int \left |\pi _{\mathbf{Z}_{K}}(\mathbf{y}_{k'} \mid k',
\mathbf{D}_{n}) \, \varphi (\mathbf{z}_{k}; \mathbf{0},
\hat{\boldsymbol{\Sigma }}_{k}) - \varphi (\mathbf{y}_{k'}; \mathbf{0},
\hat{\boldsymbol{\Sigma }}_{k'}) \, \varphi (\mathbf{z}_{k}; \mathbf{0},
\hat{\boldsymbol{\Sigma }}_{k})\right |\, \mathrm{d}\mathbf{y}_{k'} \,
\mathrm{d}\mathbf{z}_{k}
\cr
&+\int \left |\varphi (\mathbf{y}_{k'}; \mathbf{0},
\hat{\boldsymbol{\Sigma }}_{k'}) \, \varphi (\mathbf{z}_{k}; \mathbf{0},
\hat{\boldsymbol{\Sigma }}_{k}) - \varphi (\mathbf{y}_{k'}; \mathbf{0},
\hat{\boldsymbol{\Sigma }}_{k'}) \, \pi _{\mathbf{Z}_{K}}(\mathbf{z}_{k}
\mid k, \mathbf{D}_{n})\right |\, \mathrm{d}\mathbf{y}_{k'} \,
\mathrm{d}\mathbf{z}_{k}.
\end{align*}
We prove the convergence to 0 of the first term. The other convergence
follows from a similar argument.

Using the triangle inequality,
\begin{align*}
&\int \left |\pi _{\mathbf{Z}_{K}}(\mathbf{y}_{k'} \mid k',\mathbf{D}_{n})
\, \varphi (\mathbf{z}_{k}; \mathbf{0}, \hat{\boldsymbol{\Sigma }}_{k}) -
\varphi (\mathbf{y}_{k'}; \mathbf{0}, \hat{\boldsymbol{\Sigma }}_{k'})
\, \varphi (\mathbf{z}_{k}; \mathbf{0}, \hat{\boldsymbol{\Sigma }}_{k})
\right |\, \mathrm{d}\mathbf{y}_{k'} \, \mathrm{d}\mathbf{z}_{k}
\cr
&\quad =\int \left |\pi _{\mathbf{Z}_{K}}(\mathbf{y}_{k'} \mid k',
\mathbf{D}_{n}) - \varphi (\mathbf{y}_{k'}; \mathbf{0},
\hat{\boldsymbol{\Sigma }}_{k'}) \right |\, \mathrm{d}\mathbf{y}_{k'}
\cr
&\quad \leq \int \left |\pi _{\mathbf{Z}_{K}}(\mathbf{y}_{k'} \mid k',
\mathbf{D}_{n}) - \varphi (\mathbf{y}_{k'}; \mathbf{0},
\boldsymbol{\Sigma }_{k'}) \right |\, \mathrm{d}\mathbf{y}_{k'}
\cr
&\qquad +\int \left |\varphi (\mathbf{y}_{k'}; \mathbf{0},
\boldsymbol{\Sigma }_{k'}) - \varphi (\mathbf{y}_{k'}; \mathbf{0},
\hat{\boldsymbol{\Sigma }}_{k'}) \right |\, \mathrm{d}\mathbf{y}_{k'}.
\end{align*}
The first term converges to 0 by Assumption~\ref{ass3} after a change of variables,
and the second one converges to 0 as well by Proposition 2.1 of
\cite{devroye2018total} and Assumption~\ref{ass3} as seen above. This concludes
the proof.
\end{proof}

\begin{proof}[Proof of Proposition~\ref{prop_peskun}]
Because $\bar{\pi }(k^{*}) = 1$, we only have to establish the inequality
for any set $A = \{(k, \mathbf{z}_{k}) \in \{k^{*}\} \times B\}$. So
$A \setminus \{(k^{*}, \mathbf{z}_{k^{*}})\}$ implies that a parameter
update is proposed and accepted with a parameter proposal, denoted here
by $\mathbf{y}_{k^{*}}$, in $B$. When the chain is in stationarity,
$P_{\text{limit},1}$ always proposes to update the parameters. Therefore,
\begin{equation*}
P_{\text{limit},1}((k^{*}, \mathbf{z}_{k^{*}}), A \setminus \{(k^{*},
\mathbf{z}_{k^{*}})\}) = \mathbb{P}(\mathbf{y}_{k^{*}} \in B
\text{ is accepted}).
\end{equation*}
In contrast,
\begin{align*}
P_{\text{limit},2}((k^{*}, \mathbf{z}_{k^{*}}), A \setminus \{(k^{*},
\mathbf{z}_{k^{*}})\}) &= g(k^{*}, k^{*}) \, \mathbb{P}(\mathbf{y}_{k^{*}}
\in B \text{ is accepted})
\cr
&\leq \mathbb{P}(\mathbf{y}_{k^{*}} \in B \text{ is accepted})
\cr
&= P_{\text{limit},1}((k^{*}, \mathbf{z}_{k^{*}}), A \setminus \{(k^{*},
\mathbf{z}_{k^{*}})\}).
\end{align*}
The fact that $\bar{\pi }(k^{*}) = 1$ implies the order on the asymptotic
variances \citep{tierney1998note}.
\end{proof}

\begin{proof}[Proof of Proposition~\ref{prop_inv_algo_model_imp}]
We prove the result for the case $T = 1$ (without annealing intermediate
distributions), to simplify; the general case is proved similarly. When
$T = 1$, $r_{\text{RJ2}} = r_{\text{RJ}}$ which is the ratio in \eqref{eqn_acc_prob_RJ}. We prove that the probability to reach the state
$\{k'\}\times \{\mathbf{y}_{k'}^{(j^{*})}\in A_{k'}\}$, from
$\{k\}\times \{\mathbf{x}_{k}\in A_{k}\}$, is equal to the probability
of the reverse move. We denote by $P$ the Markov kernel. We thus prove
that
\begin{align*}
&\int _{\{\mathbf{x}_{k}\in A_{k}\}} \pi (k, \mathbf{x}_{k} \mid
\mathbf{D}_{n})\int _{\{\mathbf{y}_{k'}^{(j^{*})}\in A_{k'}\}} P((k,
\mathbf{x}_{k}), (k', \mathrm{d}\mathbf{y}_{k'}^{(j^{*})})) \,
\mathrm{d}\mathbf{x}_{k}
\cr
&\qquad = \int _{\{\mathbf{y}_{k'}^{(j^{*})}\in A_{k'}\}} \pi (k',
\mathbf{y}_{k'}^{(j^{*})} \mid \mathbf{D}_{n})\int _{\{\mathbf{x}_{k}
\in A_{k}\}} P((k', \mathbf{y}_{k'}^{(j^{*})}), (k, \mathrm{d}
\mathbf{x}_{k})) \, \mathrm{d}\mathbf{y}_{k'}^{(j^{*})}.
\end{align*}
Note that we abused notation by denoting the integration variables
$\mathbf{y}_{k'}^{(j^{*})}$ and $\mathbf{x}_{k}$ because a group of vectors
like $\mathbf{y}_{\bullet }^{(0, \bullet )}$ are used in the transitions
and they are not of the same dimension as $\mathbf{y}_{k'}$ and
$\mathbf{x}_{k}$. To simplify, we will use notation like
$\mathbf{y}_{s}^{(j)} := \mathbf{y}_{s}^{(0, j)}$ to denote the $j$-th
proposal, $j\in \{1,\ldots ,N\}$.

We now introduce notation to improve readability. We define three joint
densities that are used to enhance the approximations when Step 2.(i) is
applied to sample the proposal:
\begin{align*}
\bar{q}_{k \mapsto \mathbf{N}(k)\setminus \{k'\}} (\mathbf{y}_{
\mathbf{N}(k)\setminus \{k'\}}^{(\bullet )}) &:= \prod _{l \in
\mathbf{N}(k) \setminus \{k'\}} \prod _{j=1}^{N} q_{k\mapsto l}(
\mathbf{y}_{l}^{(j)}),
\cr
\bar{q}_{k \mapsto k'}(\mathbf{y}_{k'}^{(\bullet )}) &:= \prod _{j=1}^{N}
q_{k\mapsto k'}(\mathbf{y}_{k'}^{(j)}),
\cr
\bar{\bar{q}}_{k' \mapsto \mathbf{N}(k')\setminus \{k\}}(\mathbf{z}^{(
\bullet )}_{\mathbf{N}(k')\setminus \{k\}}) &:= \prod _{l\in
\mathbf{N}(k') \setminus \{k\}} q_{k'\mapsto l}(\mathbf{z}_{l}^{(j^{*})})
\prod _{j=1 (j\neq j^{*})}^{N} q_{l\mapsto k'}(\mathbf{z}_{k'}^{(j)}).
\end{align*}
The densities $\bar{q}_{k \mapsto \mathbf{N}(k)\setminus \{k'\}}$ and
$\bar{q}_{k \mapsto k'}$ together represent the joint density of the random
variables sampled in the first part of Step 2.(i).
$\bar{\bar{q}}_{k' \mapsto \mathbf{N}(k')\setminus \{k\}}$ represents the
joint density of the random variables sampled in the second part of Step
2.(i).

We now define three joint densities that are used to enhance the approximations
when Step 2.(ii) is applied to sample the proposal:
\begin{align*}
\tilde{q}_{k \mapsto \mathbf{N}(k) \setminus \{k'\}} (\mathbf{y}^{(
\bullet )}_{\mathbf{N}(k) \setminus \{k'\}}) &:= \prod _{l \in
\mathbf{N}(k) \setminus \{k'\}} q_{k\mapsto l}(\mathbf{y}_{l}^{(j^{*})})
\prod _{j=1 (j\neq j^{*})}^{N} q_{l\mapsto k}(\mathbf{y}_{k}^{(j)}),
\cr
\tilde{q}_{k \mapsto k'} (\mathbf{y}_{k'}^{(\bullet )}) &:= q_{k
\mapsto k'}(\mathbf{y}_{k'}^{(j^{*})}) \prod _{j=1 (j\neq j^{*})}^{N} q_{k'
\mapsto k}(\mathbf{y}_{k}^{(j)}),
\cr
\tilde{\tilde{q}}_{k' \mapsto \mathbf{N}(k') \setminus \{k\}}(
\mathbf{z}^{(\bullet )}_{\mathbf{N}(k')\setminus \{k\}})&:= \prod _{l
\in N(k') \setminus \{k\}} \prod _{j=1}^{N} q_{k'\mapsto l}(
\mathbf{z}_{l}^{(j)}).
\end{align*}
The densities $\tilde{q}_{k \mapsto \mathbf{N}(k) \setminus \{k'\}}$ and
$ \tilde{q}_{k \mapsto k'} $ together represent the joint density of the
random variables sampled in the first part of Step 2.(ii).
$\tilde{\tilde{q}}_{k' \mapsto \mathbf{N}(k') \setminus \{k\}}$ represents
the joint density of the random variables sampled in the second part of
Step 2.(ii).

We have that
\begin{align*}
&P((k, \mathbf{x}_{k}), (k', \mathrm{d}\mathbf{y}_{k'}^{(j^{*})}))=
\frac{1}{2} \, \bar{q}_{k \mapsto \mathbf{N}(k)\setminus \{k'\}} (
\mathbf{y}_{\mathbf{N}(k)\setminus \{k'\}}^{(\bullet )}) \, \bar{q}_{k
\mapsto k'}(\mathbf{y}_{k'}^{(\bullet )})
\cr
&\times g_{\text{imp.}}(k, k', \mathbf{x}_{k}, \mathbf{y}_{\mathbf{N}(k)
\setminus \{k'\}}^{(\bullet )}, \mathbf{y}_{k'}^{(\bullet )}) \,
\frac{r_{\text{RJ}}((k,\mathbf{x}_{k}),(k',\mathbf{y}_{k'}^{(j^{*})}))}{N \bar{r}(k, k', \mathbf{x}_{k}, \mathbf{y}_{k'}^{(\bullet )})}
\cr
&\times \bar{\bar{q}}_{k' \mapsto \mathbf{N}(k')\setminus \{k\}}(
\mathbf{z}^{(\bullet )}_{\mathbf{N}(k')\setminus \{k\}})
\cr
&\times \left (1 \wedge
\frac{g_{\text{imp.}}(k', k, \mathbf{x}_{k}, \mathbf{y}_{k'}^{(\bullet )}, \mathbf{z}^{(\bullet )}_{\mathbf{N}(k')\setminus \{k\}})}{g_{\text{imp.}}(k, k', \mathbf{x}_{k}, \mathbf{y}_{\mathbf{N}(k)\setminus \{k'\}}^{(\bullet )}, \mathbf{y}_{k'}^{(\bullet )})}
\, \bar{r}(k, k', \mathbf{x}_{k}, \mathbf{y}_{k'}^{(\bullet )})
\right )
\cr
&\times \mathrm{d}\mathbf{z}^{(\bullet )}_{\mathbf{N}(k')\setminus \{k
\}} \, \mathrm{d}\mathbf{y}_{\mathbf{N}(k)\setminus \{k'\}}^{(
\bullet )} \, \mathrm{d}\mathbf{y}_{k'}^{(\bullet )}
\cr
& + \frac{1}{2} \, \tilde{q}_{k \mapsto \mathbf{N}(k) \setminus \{k'
\}} (\mathbf{y}^{(\bullet )}_{\mathbf{N}(k) \setminus \{k'\}}) \,
\tilde{q}_{k \mapsto k'} (\mathbf{y}_{k'}^{(\bullet )}) \, g_{
\text{imp.}}(k, k', \mathbf{x}_{k}, \mathbf{y}^{(\bullet )}_{
\mathbf{N}(k) \setminus \{k'\}}, \mathbf{y}_{k'}^{(\bullet )}) \,
\frac{1}{N}
\cr
&\times \tilde{\tilde{q}}_{k' \mapsto \mathbf{N}(k') \setminus \{k\}}(
\mathbf{z}^{(\bullet )}_{\mathbf{N}(k')\setminus \{k\}})
\cr
&\times \left (1 \wedge
\frac{g_{\text{imp.}}(k', k, \mathbf{x}_{k}, \mathbf{y}_{k'}^{(\bullet )}, \mathbf{z}^{(\bullet )}_{\mathbf{N}(k')\setminus \{k\}})}{g_{\text{imp.}}(k, k', \mathbf{x}_{k}, \mathbf{y}^{(\bullet )}_{\mathbf{N}(k) \setminus \{k'\}}, \mathbf{y}_{k'}^{(\bullet )})}
\, \bar{r}(k', k, \mathbf{x}_{k}, \mathbf{y}_{k'}^{(\bullet )})^{-1}
\right )
\cr
&\times \mathrm{d}\mathbf{z}^{(\bullet )}_{\mathbf{N}(k')\setminus \{k
\}} \, \mathrm{d}\mathbf{y}^{(\bullet )}_{\mathbf{N}(k) \setminus \{k'
\}} \, \mathrm{d}\mathbf{y}_{k'}^{(\bullet )}
\cr
& \quad + \delta _{(k, \mathbf{x}_{k})}(k', \mathrm{d}\mathbf{y}_{k'}^{(j^{*})})
\, \mathbb{P}(\text{rejection}\mid (k', \mathbf{y}_{k'}^{(j^{*})})),
\end{align*}
where
$\mathbb{P}(\text{rejection}\mid (k', \mathbf{y}_{k'}^{(j^{*})}))$ is the
rejection probability given that the current state is
$(k', \mathbf{y}_{k'}^{(j^{*})})$. Note that we considered that in Step
2.(ii) we set uniformly at random the index of the proposal. This is however
in practice not important (which is why in Algorithm~\ref{algo_RJ_imp_model} we set it to be 1) because of the form of the
acceptance ratio. Note also that we use the notation
$\bar{r}(k, k', \mathbf{x}_{k}, \mathbf{y}_{k'}^{(\bullet )})$ to be clear
about which variables is involved.

The probability of reaching the state
$\{k'\}\times \{\mathbf{y}_{k'}^{(j^{*})}\in A_{k'}\}$, from
$\{k\}\times \{\mathbf{x}_{k}\in A_{k}\}$, is thus given by
\begin{align}
\label{eqn1_proof_prop}
&\int _{\{\mathbf{x}_{k}\in A_{k}\}} \pi (k, \mathbf{x}_{k} \mid
\mathbf{D}_{n})\int _{\{\mathbf{y}_{k'}^{(j^{*})}\in A_{k'}\}}
\frac{1}{2} \, \bar{q}_{k \mapsto \mathbf{N}(k)\setminus \{k'\}} (
\mathbf{y}_{\mathbf{N}(k)\setminus \{k'\}}^{(\bullet )}) \, \bar{q}_{k
\mapsto k'}(\mathbf{y}_{k'}^{(\bullet )})
\cr
& \times g_{\text{imp.}}(k, k', \mathbf{x}_{k}, \mathbf{y}_{\mathbf{N}(k)
\setminus \{k'\}}^{(\bullet )}, \mathbf{y}_{k'}^{(\bullet )}) \,
\frac{r_{\text{RJ}}((k,\mathbf{x}_{k}),(k',\mathbf{y}_{k'}^{(j^{*})}))}{N \bar{r}(k, k', \mathbf{x}_{k}, \mathbf{y}_{k'}^{(\bullet )})}
\cr
&\times \bar{\bar{q}}_{k' \mapsto \mathbf{N}(k')\setminus \{k\}}(
\mathbf{z}^{(\bullet )}_{\mathbf{N}(k')\setminus \{k\}})
\cr
&\times \left (1 \wedge
\frac{g_{\text{imp.}}(k', k, \mathbf{x}_{k}, \mathbf{y}_{k'}^{(\bullet )}, \mathbf{z}^{(\bullet )}_{\mathbf{N}(k')\setminus \{k\}})}{g_{\text{imp.}}(k, k', \mathbf{x}_{k}, \mathbf{y}_{\mathbf{N}(k)\setminus \{k'\}}^{(\bullet )}, \mathbf{y}_{k'}^{(\bullet )})}
\, \bar{r}(k, k', \mathbf{x}_{k}, \mathbf{y}_{k'}^{(\bullet )})
\right )
\cr
&\times \mathrm{d}\mathbf{z}^{(\bullet )}_{\mathbf{N}(k')\setminus \{k
\}} \, \mathrm{d}\mathbf{y}_{\mathbf{N}(k)\setminus \{k'\}}^{(
\bullet )} \, \mathrm{d}\mathbf{y}_{k'}^{(\bullet )} \, \mathrm{d}
\mathbf{x}_{k}
\cr
&+\int _{\{\mathbf{x}_{k}\in A_{k}\}} \pi (k, \mathbf{x}_{k} \mid
\mathbf{D}_{n})\int _{\{\mathbf{y}_{k'}^{(j^{*})}\in A_{k'}\}}
\frac{1}{2} \, \tilde{q}_{k \mapsto \mathbf{N}(k) \setminus \{k'\}} (
\mathbf{y}^{(\bullet )}_{\mathbf{N}(k) \setminus \{k'\}}) \,
\tilde{q}_{k \mapsto k'} (\mathbf{y}_{k'}^{(\bullet )})
\cr
& \times g_{\text{imp.}}(k, k', \mathbf{x}_{k}, \mathbf{y}^{(\bullet )}_{
\mathbf{N}(k) \setminus \{k'\}}, \mathbf{y}_{k'}^{(\bullet )}) \,
\frac{1}{N} \, \tilde{\tilde{q}}_{k' \mapsto \mathbf{N}(k')
\setminus \{k\}}(\mathbf{z}^{(\bullet )}_{\mathbf{N}(k')\setminus \{k
\}})
\cr
&\times \left (1 \wedge
\frac{g_{\text{imp.}}(k', k, \mathbf{x}_{k}, \mathbf{y}_{k'}^{(\bullet )}, \mathbf{z}^{(\bullet )}_{\mathbf{N}(k')\setminus \{k\}})}{g_{\text{imp.}}(k, k', \mathbf{x}_{k}, \mathbf{y}^{(\bullet )}_{\mathbf{N}(k) \setminus \{k'\}}, \mathbf{y}_{k'}^{(\bullet )})}
\, \bar{r}(k', k, \mathbf{x}_{k}, \mathbf{y}_{k'}^{(\bullet )})^{-1}
\right )
\cr
&\times \mathrm{d}\mathbf{z}^{(\bullet )}_{\mathbf{N}(k')\setminus \{k
\}} \, \mathrm{d}\mathbf{y}^{(\bullet )}_{\mathbf{N}(k) \setminus \{k'
\}} \, \mathrm{d}\mathbf{y}_{k'}^{(\bullet )} \, \mathrm{d}\mathbf{x}_{k}
\cr
&+\int _{\{\mathbf{x}_{k}\in A_{k}\}} \pi (k, \mathbf{x}_{k} \mid
\mathbf{D}_{n})
\cr
&\times \int _{\{\mathbf{y}_{k'}^{(j^{*})}\in A_{k'}\}} \delta _{(k,
\mathbf{x}_{k})}(k', \mathrm{d}\mathbf{y}_{k'}^{(j^{*})}) \,
\mathbb{P}(\text{rejection}\mid (k', \mathbf{y}_{k'}^{(j^{*})})) \,
\mathrm{d}\mathbf{x}_{k}.
\end{align}
We now prove that the first integral can be rewritten as that corresponding
to Step 2.(ii) for the reverse move; the second integral corresponds instead
to Step 2.(i) for the reverse move, and the last term to the probability
of rejecting from $(k',\mathbf{y}_{k'}^{(j^{*})})$.

By changing the integration variables
$\mathbf{y}_{k'} \leftarrow \mathbf{y}_{k'}^{(j^{*})}$,
$\mathbf{x}_{k}^{(j^{*})} \leftarrow \mathbf{x}_{k}$ and
$\mathbf{x}_{k}^{(\bullet )} \leftarrow \mathbf{y}_{k'}^{(\bullet )}$ but
with $\mathbf{y}_{k'}$ replaced by $\mathbf{x}_{k}^{(j^{*})}$, the first
integral can be rewritten as
\begin{align*}
&\int _{\{\mathbf{y}_{k'}\in A_{k'}\} \times \{\mathbf{x}_{k}^{(j^{*})}
\in A_{k}\}} \pi (k', \mathbf{y}_{k'} \mid \mathbf{D}_{n}) \,
\frac{1}{2} \, \tilde{q}_{k' \mapsto \mathbf{N}(k')\setminus \{k\}}(
\mathbf{z}^{(\bullet )}_{\mathbf{N}(k')\setminus \{k\}}) \, \tilde{q}_{k'
\mapsto k} (\mathbf{x}_{k}^{(\bullet )})
\cr
&\times g_{\text{imp.}}(k', k, \mathbf{y}_{k'}, \mathbf{x}_{k}^{(
\bullet )}, \mathbf{z}^{(\bullet )}_{\mathbf{N}(k')\setminus \{k\}})
\frac{1}{N} \, \tilde{\tilde{q}}_{k \mapsto \mathbf{N}(k) \setminus
\{k'\}}(\mathbf{y}_{\mathbf{N}(k)\setminus \{k'\}}^{(\bullet )})
\cr
& \times \left (1 \wedge
\frac{g_{\text{imp.}}(k, k', \mathbf{y}_{k'}, \mathbf{y}_{\mathbf{N}(k)\setminus \{k'\}}^{(\bullet )}, \mathbf{x}_{k}^{(\bullet )})}{g_{\text{imp.}}(k', k, \mathbf{y}_{k'}, \mathbf{x}_{k}^{(\bullet )}, \mathbf{z}^{(\bullet )}_{\mathbf{N}(k')\setminus \{k\}})}
\, \bar{r}(k, k', \mathbf{y}_{k'}, \mathbf{x}_{k}^{(\bullet )})^{-1}
\right )
\cr
&\times \mathrm{d}\mathbf{z}^{(\bullet )}_{\mathbf{N}(k')\setminus \{k
\}} \, \mathrm{d}\mathbf{x}_{k}^{(\bullet )} \, \mathrm{d}\mathbf{y}_{
\mathbf{N}(k)\setminus \{k'\}}^{(\bullet )} \, \mathrm{d}\mathbf{y}_{k'},
\end{align*}
given that
\begin{align*}
r_{\text{RJ}}((k,\mathbf{x}_{k}^{(j^{*})}),(k',\mathbf{y}_{k'})) &=
\frac{\pi (k', \mathbf{y}_{k'} \mid \mathbf{D}_{n}) \, q_{k'\mapsto k}(\mathbf{x}_{k}^{(j^{*})})}{\pi (k, \mathbf{x}_{k}^{(j^{*})} \mid \mathbf{D}_{n}) \, q_{k\mapsto k'}(\mathbf{y}_{k'})},
\cr
\bar{q}_{k \mapsto \mathbf{N}(k)\setminus \{k'\}} (\mathbf{y}_{
\mathbf{N}(k)\setminus \{k'\}}^{(\bullet )}) &= \tilde{\tilde{q}}_{k
\mapsto \mathbf{N}(k) \setminus \{k'\}}(\mathbf{y}_{\mathbf{N}(k)
\setminus \{k'\}}^{(\bullet )})
\cr
\bar{q}_{k \mapsto k'}(\mathbf{y}_{k'}^{(\bullet )}) \,
\frac{q_{k'\mapsto k}(\mathbf{x}_{k}^{(j^{*})})}{q_{k\mapsto k'}(\mathbf{y}_{k'})}
&= \tilde{q}_{k' \mapsto k} (\mathbf{x}_{k}^{(\bullet )})
\cr
\bar{\bar{q}}_{k' \mapsto \mathbf{N}(k')\setminus \{k\}}(\mathbf{z}^{(
\bullet )}_{\mathbf{N}(k')\setminus \{k\}}) &= \tilde{q}_{k' \mapsto
\mathbf{N}(k')\setminus \{k\}}(\mathbf{z}^{(\bullet )}_{\mathbf{N}(k')
\setminus \{k\}}).
\end{align*}

The analysis of the second integral in \eqref{eqn1_proof_prop} uses the same
arguments. Finally, the third integral in \eqref{eqn1_proof_prop} can be rewritten
as
\begin{equation*}
\int _{\{\mathbf{y}_{k'}\in A_{k'}\}} \pi (k', \mathbf{y}_{k'} \mid
\mathbf{D}_{n}) \int _{\{\mathbf{x}_{k}\in A_{k}\}} \delta _{(k',
\mathbf{y}_{k'})}{(k, \mathrm{d}\mathbf{x}_{k})} \, \mathbb{P}(
\text{rejection}\mid (k, \mathbf{x}_{k})) \, \mathrm{d}\mathbf{y}_{k'},
\end{equation*}
which concludes the proof.
\end{proof}

\begin{proof}[Proof of Proposition~\ref{prop_posterior_normal}]
The proof essentially relies on straightforward calculations. We have
\begin{align*}
\pi (k, \boldsymbol{\beta }_{k}, \sigma _{k} \mid \boldsymbol{\gamma }_{n})
&\propto \pi (k) \, \frac{1}{\sigma _{k}} \, \prod _{i=1}^{n}
\frac{1}{\sqrt{2 \pi } \, \sigma _{k}} \exp \left (-
\frac{1}{2 \sigma _{k}^{2}} \, (\gamma _{i} - \mathbf{c}_{i, k}^{T}
\, \boldsymbol{\beta }_{k})^{2}\right )
\cr
&\propto \pi (k) \, \frac{1}{\sigma _{k}^{n + 1}} \, \exp \left (-
\frac{1}{2 \sigma _{k}^{2}} \sum _{i=1}^{n} (\gamma _{i} - \mathbf{c}_{i,
k}^{T} \, \boldsymbol{\beta }_{k})^{2}\right ).
\end{align*}
In \cite{gagnon2018supp}, it is proved that
\begin{align*}
\sum _{i=1}^{n} (\gamma _{i} - \mathbf{c}_{i, k}^{T} \,
\boldsymbol{\beta }_{k})^{2} = (\boldsymbol{\beta }_{k} -
\widehat{\boldsymbol{\beta }}_{k})^{T} \mathbf{C}_{K}^{T} \mathbf{C}_{K} (
\boldsymbol{\beta }_{k} - \widehat{\boldsymbol{\beta }}_{k}) + \|
\boldsymbol{\gamma }_{n} - \widehat{\boldsymbol{\gamma }}_{k}\|_{2}^{2},
\end{align*}
where
$\widehat{\boldsymbol{\beta }}_{k} := (\mathbf{C}_{k}^{T} \mathbf{C}_{k})^{-1}
\mathbf{C}_{k}^{T} \boldsymbol{\gamma }_{n}$. Multiplying and dividing by
the appropriate terms yield
\begin{align*}
\pi (k, \boldsymbol{\beta }_{k}, \sigma _{k} \mid \boldsymbol{\gamma }_{n})
&\propto \pi (k) \,
\frac{\Gamma ((n - d_{k}) / 2) \, \pi ^{d_{k} / 2}}{\|\boldsymbol{\gamma }_{n} - \widehat{\boldsymbol{\gamma }}_{k}\|_{2}^{n - d_{k}} \, |\mathbf{C}_{k}^{T} \mathbf{C}_{k}|^{1/2}}
\cr
&\quad \times
\frac{2^{1-\frac{n-d_{k}}{2}} \|\boldsymbol{\gamma }_{n} - \widehat{\boldsymbol{\gamma }}_{k}\|_{2}^{n - d_{k}}}{\Gamma ((n-d_{k})/2) \, \sigma _{k}^{n-d_{k}+1}}
\, \exp \left \{  -\frac{1}{2\sigma _{k}^{2}}\, \|\boldsymbol{\gamma }_{n}
- \widehat{\boldsymbol{\gamma }}_{k}\|_{2}^{2}\right \}
\cr
&\quad \times
\frac{ |\mathbf{C}_{k}^{T} \mathbf{C}_{k}|^{1/2}}{(2 \pi )^{d_{k} / 2} \sigma _{k}^{d_{k}}}
\exp \left (-\frac{1}{2\sigma _{k}^{2}} (\boldsymbol{\beta }_{k} -
\widehat{\boldsymbol{\beta }}_{k})^{T} \mathbf{C}_{K}^{T} \mathbf{C}_{K} (
\boldsymbol{\beta }_{k} - \widehat{\boldsymbol{\beta }}_{k})\right ),
\end{align*}
which concludes the proof.
\end{proof}

\end{document}